\theoremstyle{plain}
\newtheorem{lemma}{Lemma}
\newtheorem{assumption}{Assumption}
\newtheorem{claim}{Claim}
\newtheorem{proposition}{Proposition}
\theoremstyle{definition}
\def\EE{\mathbb{E}}
\def\PP{\mathbb{P}}
\def\calB{\mathcal{B}}
\def\calF{\mathcal{F}}
\def\calR{\mathcal{R}}
\def\1{\mathbbm{1}}
\def\var{\mathsf{Var}}
\def\cov{\mathsf{Cov}}
\newcommand\independent{\protect\mathpalette{\protect\independenT}{\perp}}
\def\independenT#1#2{\mathrel{\rlap{$#1#2$}\mkern2mu{#1#2}}}
\DeclareMathOperator*{\argmax}{arg\,max}
\def \var {\mathsf{Var}}
\def\independenT#1#2{\mathrel{\rlap{$#1#2$}\mkern2mu{#1#2}}}
\definecolor{myblue}{rgb}{.8, .8, 1}
\definecolor{mathblue}{rgb}{0.2472, 0.24, 0.6} 
\definecolor{mathred}{rgb}{0.6, 0.24, 0.442893}
\definecolor{mathyellow}{rgb}{0.6, 0.547014, 0.24}
\title[DINA]{Estimating Heterogeneous Treatment Effects for General Responses}
\author[Z. Gao]{Zijun Gao}
\address{Department of Statistics, Stanford University, Stanford, USA}
\email{zijungao@stanford.edu}
\author[Zijun Gao and Trevor Hastie]{Trevor Hastie}
\address{Department of Statistics and Department of Biomedical Data Science, Stanford University, Stanford, USA}
\begin{document}

\begin{abstract}
	Heterogeneous treatment effect models allow us to compare treatments at subgroup and individual levels, and are of increasing popularity in applications like personalized medicine, advertising, and education. In this talk, we first survey different causal estimands used in practice, which focus on estimating the difference in conditional means. We then propose DINA — the difference in natural parameters — to quantify heterogeneous treatment effect in exponential families and the Cox model. For binary outcomes and survival times, DINA is both convenient and more practical for modeling the influence of covariates on the treatment effect. Second, we introduce a meta-algorithm for DINA, which allows practitioners to use powerful off-the-shelf machine learning tools for the estimation of nuisance functions, and which is also statistically robust to errors in inaccurate nuisance function estimation. We demonstrate the efficacy of our method combined with various machine learning base-learners on simulated and real datasets.
\end{abstract}
\keywords{Heterogeneous treatment effect; Exponential family; Cox model}

\section{Introduction}
The potential outcome model \citep{rubin1974estimating} has received
wide attention \citep{rosenbaum2010design, imbens2015causal} in the
field of causal inference. Recent attention has focused on the
estimation of heterogeneous treatment effects (HTE), which allows the
treatment effect to depend on subject-specific features. 
In this paper, we investigate heterogeneous treatment effects instead of average treatment effects due to the following reasons.
\begin{enumerate}
	\item In applications like personalized medicine \citep{splawa1990application, low2016comparing, lesko2007personalized}, personalized education \citep{murphy2016handbook}, and personalized advertisements \citep{bennett2007netflix}, the target population is not the entire study population but the subset some patient belongs to, and thus the heterogeneous treatment effect is of more interest \citep{hernan2010causal}. 

	\item Marginal treatment effects can be obtained by marginalizing the heterogeneous counterparts \citep{daniel2021making,hu2021estimating}. The conditioning step can potentially adjust for observed confounders and yield less biased marginal estimators.
\end{enumerate}

For continuous responses, the difference in conditional means is
commonly used as an estimator of HTE \citep{powers2018some,
	wendling2018comparing}. However, for binary responses or count data,
no consensus of the estimand has been reached, and a variety of
objectives have been considered. For instance, for dichotomous responses,
conditional success probability differences, conditional success
probability ratios, and conditional odds ratios all have been
studied \citep{imbens2015causal,tian2012simple}. In this paper, we
propose to estimate a unified quantity --- the difference in natural
parameters (DINA) --- applicable for all types of responses from the
exponential family. 
The DINA estimand is appealing from several points of view compared to the difference in  conditional means.
\begin{enumerate}
	\item 
	Comparisons on the natural parameter scale are commonly adopted
	in practice. 
	DINA coincides with the conditional mean
	difference for continuous responses, the log of conditional odds
	ratio for binary responses, and the log of conditional mean ratio
	for count data. In the Cox model \citep{cox1972regression}, DINA
	corresponds to the log hazard ratio.
	In clinical trials with binary outcomes, the odds ratio is a popular indicator of diagnostic performance \citep{rothman2012epidemiology,glas2003diagnostic}.
	For rare diseases, the odds ratio is approximately equivalent to the relative risk, another commonly-used metric in epidemiology. 
	In addition, for survival outcomes, the hazard ratio is frequently reported which measures the chance of an event occurring in the treatment arm divided by that in the control arm.
	
	\item It is convenient to model the influence of covariates on the
	natural parameter scale. For binary responses or count data, the types of
	outcomes impose implicit constraints, such as zero-one or
	non-negative values, making the modeling on the original
	scale difficult. In contrast, the natural parameters are numbers on
	the real line, and easily accommodate various types of covariate dependence. 
	
	\item The difference in conditional means may exhibit uninteresting heterogeneity. Consider a vaccine example where before injections, the disease risk is $10\%$ among older people and $1\%$ among young people. It is not likely the absolute risk differences caused by some vaccine will be the same across age groups since the room for improvement is significantly different ($10\%$ versus $1\%$). Still, the relative risk may be constant, for example, both young and old are $80\%$ less likely to get infected.
\end{enumerate}

There are two critical challenges in the estimation of DINAs.
Like the difference in conditional means, the estimation of DINA could be biased due to confounders ---
covariates that influence both the potential outcomes and the
treatment assignment. 
Unlike the difference in conditional means, DINA estimators could potentially
suffer from the ``non-collapsibility'' issue
\citep{gail1984biased,greenland1999confounding,hernan2010causal} on the natural parameter scale.
As a result, even when there is no confounding, including a non-predictive covariate in the model will lead to a different estimator.
The distinction between confounding and non-collapsibility is discussed in detail by \citet{samuels1981matching,miettinen1981confounding,greenland1986identifiability}.

In this paper, we propose a DINA estimator that is robust to the
aforementioned confounding and non-collapsibility issues. The method
is motivated by Robinson's method \citep{robinson1988root} and R-learner \citep{nie2017quasi} proposed to deal with the conditional mean
difference. Like R-learner, our method consists of two steps:
\begin{enumerate}
	\item Estimation of nuisance functions using any flexible algorithm; 
	\item Estimation of the treatment effect with nuisance function estimators plugged in.
\end{enumerate}
The method is locally insensitive to the
misspecification of nuisance functions, and despite this
inaccuracy is still able to produce accurate DINA estimators.
By separating the estimation of nuisance functions from that of DINA, we can use  powerful machine-learning tools, such as random forests and neural networks, for the former task.

The organization of the paper is as follows. 
In Section~\ref{sec:background}, we formulate the problem, discuss the difficulties, and summarize related work. 
In Section~\ref{sec:Robinson}, we illustrate Robinson's method and R-learner that our proposal is built on.
In Section~\ref{sec:method}, we construct the DINA estimator for the exponential family and discuss its theoretical properties. 
In Section~\ref{sec:cox}, we extend the DINA estimator to the Cox model based on full likelihoods and partial likelihoods, respectively.
In Section~\ref{sec:simulation}, we assess the performance of our proposed DINA estimator on simulated datasets. 
In Section~\ref{sec:realData}, we apply the DINA estimator to the SPRINT dataset, evaluating its robustness to designs of experiments. 
In Section~\ref{sec:multiValued}, we briefly discuss the extension of our DINA estimator from single-level treatments to multi-level treatments. 
We conclude the paper with discussions in Section~\ref{sec:discussion}. 
All proofs are deferred to the appendix.

\section{Background}\label{sec:background}

\subsection{Problem formulation}\label{subsec:formulation}

We adopt the Neyman-Rubin potential outcome model. 
Each unit is associated with a covariate vector $X$, a treatment assignment indicator $W$, and two potential outcomes $Y(0)$, $Y(1)$. We observe the response $Y = Y(1)$ if the unit is under treatment, i.e., $W = 1$, and $Y = Y(0)$ if the unit is under control, i.e., $W = 0$. 
We make the standard assumptions in causal inference \citep{imbens2015causal}.

\begin{assumption}[Stable unit treatment value assumption]\label{assu:SUTVA}
	The potential outcomes for any unit do not depend on the treatments assigned to other units. 
\end{assumption}

\begin{assumption}[Unconfoundedness]\label{assu:unconfoundedness}
	The assignment mechanism does not depend on potential outcomes,
	\begin{align*}
		Y(1), Y(0) \independent W \mid X.
	\end{align*}
\end{assumption}

\begin{assumption}[Overlap]\label{assu:overlap}
	The probability of being treated, i.e., the propensity score $e(X) = \PP(W = 1 \mid X)$, takes value in $[\varepsilon, 1-\varepsilon]$ for some $\varepsilon > 0$.
\end{assumption}

To facilitate asymptotic analyses, we introduce the super-population model
\begin{align}
	\label{eq:model1}
	& \quad\quad~ X \stackrel{\text{i.i.d.}}{\sim} f_X, \\
	\label{eq:model2}
	&\quad W \mid X = x \stackrel{\text{ind.}}{\sim} \text{Ber}(e(x)), \\
	\label{eq:generalizedModel}
	\begin{split}
		\begin{cases}
			Y(1) \mid X = x \sim f_1(\cdot \mid x),  & \\
			Y(0) \mid X = x \sim f_0(\cdot \mid x),  &
		\end{cases}
	\end{split}
\end{align}
where $f_X$ denotes the distribution of covariate, $e(x)$ denotes the propensity score, and $f_1$ and $f_0$ denote the distributions of potential outcomes.

For survival analysis, we let $C(0)$, $C(1) \in [0,\infty]$ be the counterfactual censoring times and $C$ be the observed censoring time. Then $C = C(1)$ if $W=1$ and $C = C(0)$ if $W=0$. Let $\Delta := \1_{\{C \ge Y\}}$ be the observed censoring indicator. The observed responses are pairs $(Y^c := \min\{Y, C\}, \Delta)$. We make the following assumption on the counterfactual censoring times.

\begin{assumption}[Censoring mechanism]\label{assu:censor}
	The counterfactual censoring times are independent of the survival times given the covariates and the treatment assignment,
	\begin{align*}
		Y(0) &\independent C(0) \mid W, X,	\\
		Y(1) &\independent C(1) \mid W, X,
	\end{align*}
	and the counterfactual censoring times are unconfounded,
	\begin{align*}
		C(1), C(0) &\independent W \mid X.
	\end{align*}
\end{assumption}

Assumption~\ref{assu:censor} implies that the counterfactual censoring times do not directly depend on the treatment assignment or survival times.

Let $\lambda_0(y; x)$ be the conditional control hazard rate 
at time $y$, and similarly we define the conditional treatment hazard rate  $\lambda_1(y; x)$. We follow the Cox model \citep{cox1972regression} and make the  proportional hazards assumption.
\begin{assumption}[Proportional hazards]\label{assu:Cox}
	The hazard rate functions follow
	\begin{align*}
		\lambda_0(y; x) &= \lambda(y) e^{\eta_0(x)},\\
		\lambda_1(y; x) &= \lambda(y) e^{\eta_1(x)},
	\end{align*}
	where $\lambda(y)$ denotes the baseline hazard function and $e^{\eta_0(x)}$, $e^{\eta_1(x)}$ denote the exponential tilting functions.
\end{assumption}

In this paper, we aim to estimate the heterogeneous (conditional) treatment effects, denoted by $\tau(x)$. The exact form of $\tau(x)$ depends on the type of the response.
\begin{enumerate}
	\item For continuous data, we use the difference of the conditional means
	\begin{align}\label{eq:tauContinuous}
		\tau(x) := \EE[Y(1) \mid X = x] - \EE[Y(0) \mid X = x].
	\end{align}
	\item For binary responses, we use the difference of log conditional odds, i.e., the log of conditional odds ratio,
	\begin{align}\label{eq:tauBinary}
		\tau(x) := \log\left(\frac{\PP(Y(1) = 1 \mid X = x)}{\PP(Y(1) = 0 \mid X = x)}\right) - \log\left( \frac{\PP(Y(0) = 1 \mid X = x)}{\PP(Y(0) = 0\mid X = x)}\right).
	\end{align}
	\item For count data, we use the difference of the log of conditional means, i.e., the log of conditional mean ratio,
	\begin{align}\label{eq:tauCount}
		\tau(x) := \log\left(\EE[Y(1) \mid X = x]\right) - \log\left(\EE[Y(0) \mid X = x]\right).
	\end{align}
	\item For survival data, we use the difference of the log of conditional hazards, i.e., the log of hazards ratio,
	\begin{align}\label{eq:tauCox}
		\begin{split}
			\quad		\tau(y; x) 
			&:= \log\left(\lambda_1( y ; x)\right) - \log\left(\lambda_0(y; x)\right)\\
			&~= \log\left( \lim_{\delta \to 0^+} \frac{\PP(Y(1) \in [y,y+\delta]\mid X = x)}{\delta\PP(Y(1) \ge y\mid X = x)}\right) \\
			&\quad~- \log\left(\lim_{\delta \to 0^+} \frac{ \PP(Y(0) \in [y,y+\delta]\mid X = x)}{\delta\PP(Y(0)\ge y\mid X = x)}\right).
		\end{split}
	\end{align}
	Under Assumption~\ref{assu:Cox}, $\tau(y; x)$ does not depend on $y$. Without further specification, we will omit $y$ and use $\tau(x)$.
\end{enumerate}
If the responses indeed follow Gaussian, Bernoulli, Poisson distributions, or Cox model, then the above estimands are the differences in the treatment and control group natural parameter functions (DINA).
While the conditional means are usually supported on intervals for non-continuous responses, natural parameter functions often take values over the entire real axis and are more appropriate for modeling the dependence on covariates.

Under Assumptions~\ref{assu:SUTVA}, \ref{assu:unconfoundedness}, \ref{assu:overlap}, the above causal estimands are identifiable. In fact, for non-survival responses,
\begin{align*}
	\EE[Y(1) \mid X = x]
	= \EE[Y(1)  \mid X = x, W = 1]
	= \EE[Y \mid X = x, W = 1],
\end{align*}
where $\EE[Y \mid X = x, W = 1]$ is the conditional mean in the treatment group. Estimands~\eqref{eq:tauContinuous}, \eqref{eq:tauBinary}, and \eqref{eq:tauCount} are functions of $\EE[Y(1) \mid X = x]$ and $\EE[Y(0) \mid X = x]$, and thus estimable. 
For survival responses, since
\begin{align*}
	\PP(Y(1) \in  [y, y + \delta] \mid X = x)
	&= 	\PP(Y(1) \in  [y, y + \delta] \mid X = x, W = 1) \\
	&= 	\PP(Y \in  [y, y + \delta] \mid X = x, W = 1),
\end{align*}
the causal estimand~\eqref{eq:tauCox} can be simplified so as not to depend on the counterfactuals and is identifiable.

In this paper, we work under the partially linear assumption \citep{robinson1988root,chernozhukov2018double}. 
\begin{assumption}[Semi-parametric model]\label{assu:semi}
	Assume the heterogeneous treatment effect follows the linear model
	\begin{align}\label{eq:DINA:parametric}
		\tau(x) = x^\top \beta.
	\end{align}
\end{assumption}
The partially linear assumption allows non-parametric nuisance functions (the natural parameter functions and the propensity score) and assumes that the heterogeneous treatment effect follows the linear model.
Predictors $x$ in model~\eqref{eq:DINA:parametric}  can be replaced by any known functions of the covariates.
For instance, if the treatment effect is believed to be homogeneous, we will use $x = 1$;
if we are interested in the treatment effect in some sub-populations, we will design categorical predictors to specify the desired subgroups.
Our method with non-parametric $\tau(x)$ is discussed in Section~\ref{sec:discussion}.

The semi-parametric model encodes the common belief that the natural parameter functions are more complicated than the treatment effect \citep{hansen2008prognostic,Kunzel4156, gao2020minimax}. Consider a motivating example of hypertension: the blood pressure of a patient could be determined by multiple factors over a long period, such as the income level and living habits, while the effect of an anti-hypertensive drug is likely to interact with only a few covariates, such as age and gender, over a short period.

\subsection{Literature}

There is a rich literature on using flexible modeling techniques to estimate heterogeneous treatment effects. \citet{tian2012simple, imai2013estimating} formulate the estimation of heterogeneous treatment effects as a variable selection problem and consider a LASSO-type approach. \citet{athey2015machine,su2009subgroup,foster2011subgroup} design recursive partitioning methods for causal inference and \citep{hill2011bayesian} adapts the Bayesian Additive Regression Trees (BART). Ensemble learners, such as random forests \citep{wager2018estimation} and boosting \citep{powers2018some}, have been investigated under the counterfactual framework. In addition, neural network-based causal estimators have also been proposed \citep{kunzel2018transfer,shalit2017estimating}.

More recently, meta-learners for heterogeneous treatment effect estimation are of increasing popularity. Meta-learners decompose the estimation task into sub-problems that can be solved by off-the-shelf machine learning tools (base learners) \citep{Kunzel4156}. 
One common meta-algorithm, which we call separate estimation (SE) later, applies base learners to the treatment and control groups separately and then takes the difference \citep{foster2011subgroup, lu2018estimating, hu2021estimating,foster2011subgroup}. Another approach regards the treatment assignment as a covariate and uses base learners to learn the dependence on the enriched set of covariates.
\citet{Kunzel4156} propose X-learner that first estimates the control group mean function, subtracts the predicted control counterfactuals from the observed treated responses, and finally estimates treatment effects from the differences. X-learner is effective if the control group and the treatment group are unbalanced in sample size.
A different approach R-learner \citep{nie2017quasi}, motivated by Robinson's method \citep{robinson1988root}, estimates the propensity score and the marginal mean function (nuisance functions) using arbitrary machine learning algorithms and then minimizes a designed loss with the estimators of nuisance function plugged in to learn treatment effects.
R-learner is able to produce accurate estimators of treatment effects from less accurate estimators of nuisance functions (more details are discussed in Section \ref{sec:Robinson}).

In this paper, we aim to employ available predictive machine learning algorithms to quantify causal effects on the natural parameter scale. In particular, we extend the framework of R-learner which focuses on the conditional mean difference to estimate differences in natural parameters and hazard ratios. Inherited from R-learner, our proposed method uses black-box predictors and is robust to observed confounders. Beyond R-learner, our proposed method provides quantifications of causal effects that may be of more practical use and protects against non-collapsibility under non-linear link functions.

\section{Robinson's method and R-learner}\label{sec:Robinson}

In this section, we briefly summarize Robinson's method and R-learner. We highlight how Robinson's method and R-learner provide protection against confounding. We also discuss the difficulty, non-collapsibility in the natural parameter scale, in extending Robinson's method and R-learner.

We consider the additive error model.
Let $\eta_0(x)$, $\eta_1(x)$ be the conditional
control and treatment group mean functions, and assume the error term $\varepsilon$ satisfies $\EE[\varepsilon \mid X] = 0$. The additive error model takes the form
\begin{align}\label{eq:model:additive}
	\begin{split}
		\begin{cases}
			Y(1) =
			\eta_1(X) + \varepsilon,&\\
			Y(0) =
			\eta_0(X) + \varepsilon.&
		\end{cases}
	\end{split}
\end{align}	
Robinson's method \citep{robinson1988root} and R-learner \citep{nie2017quasi} aim to estimate the difference of the conditional means $\tau(x) = \eta_1(x) - \eta_0(x)$. 

Let $m(x) := \EE[Y \mid X = x] = \eta_0(x) + e(x)\tau(x)$ be the marginal mean function. Model~\eqref{eq:model:additive} can be reparametrized as 
\begin{align}\label{eq:model:additive4}
	Y =  \eta_0(X) + W \tau(X) + \varepsilon = m(X) + (W-e(X))\tau(X) + \varepsilon.
\end{align}
Robinson works under the semi-parametric assumption $\tau(x) = x^\top \beta$ and proposes to estimate $m(x)$, $e(x)$, and $\beta$ in two steps:
\begin{enumerate}
	\item {Estimation of nuisance functions}. Estimate the propensity score $e(x)$ and the marginal mean function $m(x)$.
	\item {Least squares}. Fit a linear regression model to response $Y$ with offset $\hat{m}(x)$ and predictors $(W - \hat{e}(X))X$.
\end{enumerate}
R-learner adopts the same reparametrization~\eqref{eq:model:additive4} and the two-step procedure, but estimates $\tau(x)$ non-parametrically. In the following, we will use Robinson's method and R-learner interchangeably.

\subsection{Confounding}\label{subsec:confounding}

Confounders are covariates that affect both the treatment assignment and potential outcomes. Confounders are common in observational studies. 
We show that R-learner is insensitive to confounding while separate estimation will produce biased estimators.

We consider an example of the additive error model~\eqref{eq:model:additive} with 
\begin{align*}
	\eta_0(x) = \eta_1(x) = x_1^2,
\end{align*}
and $\tau(x) = 0$. 
We adopt the propensity score $e(x) = e^{x_1}/(1+e^{x_1})$ and assume it is known. 
In the example, $x_1$ influences both the treatment assignment and potential outcomes, and is thus a confounder.

Separate estimation fits separate models in the control and treatment groups to get $\hat{\eta}_0(x)$ and $\hat{\eta}_1(x)$, and then estimates the HTE by $\hat{\tau}_{\text{SE}}(x) = 
\hat{\eta}_1(x) - \hat{\eta}_0(x)$.
For illustration, we estimate nuisance functions $\eta_0(x)$, $\eta_1(x)$ by linear regression.
By the design of propensity score, the distributions of $x_1$ are different in the control and treatment groups, and the best linear approximations $\hat{\eta}_0(x)$, $\hat{\eta}_1(x)$ to $x_1^2$ are also different. Therefore, the estimator $\hat{\eta}_1(x) - \hat{\eta}_0(x)$ is far from zero and inconsistent (Figure~\ref{fig:intuition} panel (a)).

In contrast, R-learner produces consistent estimators even if the estimator of the nuisance function $m(x)$ is incorrect.
Continuing from the example above, the true marginal mean function is $m(x) = \eta_0(x)  = x_1^2$. 
In the first step of R-learner, we also estimate the nuisance function $m(x)$ by linear regression to make the comparison fair.
In the second step, we regress the residuals $Y - \hat{m}(X)$ on $(W - e(X))X$ with the true propensity score $e(x)$ provided. 
The residuals $Y - \hat{m}(X)$ consist of three parts: the component of treatment effect $(W-e(X))\tau(X)$, the bias of the nuisance-function estimator $m(X)-\hat{m}(X)$, and an uncorrelated noise $\varepsilon$. Since there is no treatment effect in the example, the residuals are essentially realizations of a function of covariates plus a zero-mean noise,
\begin{align*}
	Y - \hat{m}(X) =  m(X)-\hat{m}(X) + \varepsilon.
\end{align*}
Notice that for any function of covariates $g(x)$,
\begin{align*}
	\cov\left((W - e(X))X, ~g(X)\right) = 0.
\end{align*}
Therefore, the residuals $Y - \hat{m}(X)$ are uncorrelated with the regressors $(W - e(X))X$ (Figure~\ref{fig:intuition} panel (b)), and the regression coefficients of R-learner are asymptotically zero and consistent. 

\begin{figure}
	\centering
	\begin{minipage}{7cm}
		\centering  
			\includegraphics[scale=0.4]{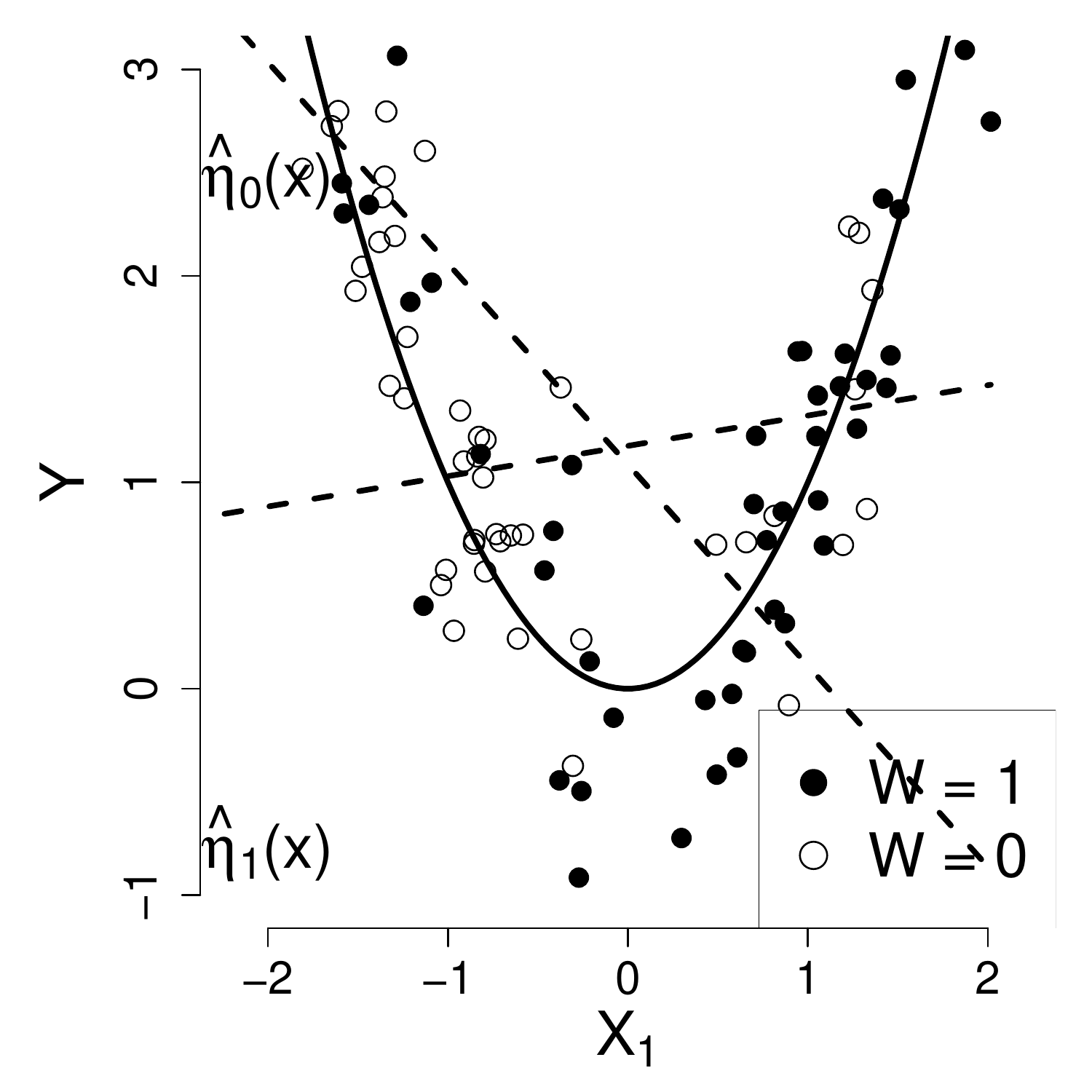}  \\
	{{(a)}}
	\end{minipage}
	\begin{minipage}{7cm}
		\centering  
		\includegraphics[scale=0.6]{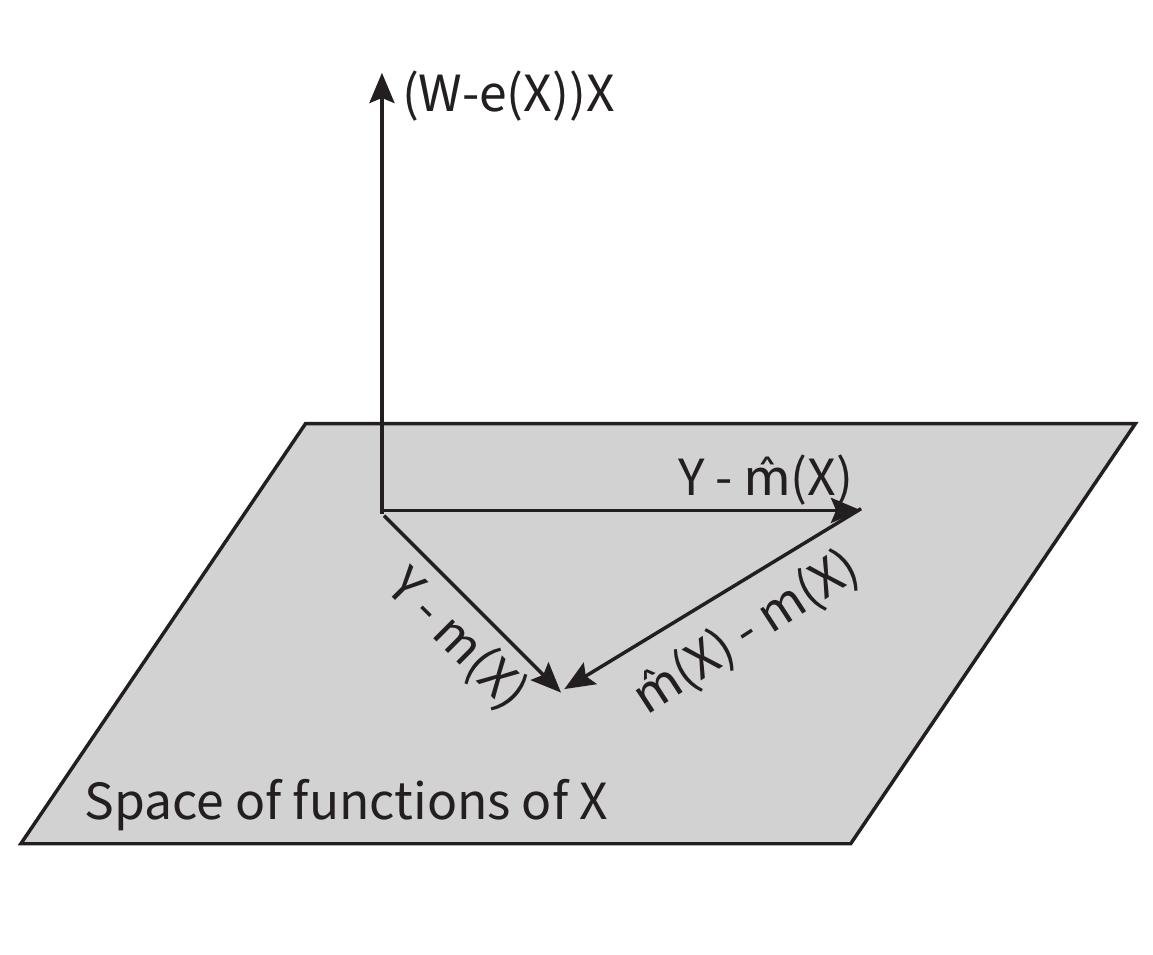} \\
		{{(b)}}
	\end{minipage}
	\caption{{In panel (a), solid points stand for treated units and hollow circles stand for control units. The solid parabola represents the true $\eta_0(x) = \eta_1(x) = x_1^2$. For separate estimation, the dashed curve with a negative slope is the nuisance-function estimator $\hat{\eta}_0(x)$ and that with a positive slope is $\hat{\eta}_1(x)$. The estimator of HTE $\hat{\tau}_{\text{SE}}(x) = \hat{\eta}_1(x) - \hat{\eta}_0(x)$ is far from zero and inconsistent. 
			In panel (b), the space of functions of $X$ are orthogonal to that of $(W-e(X))X$.
			Even if the nuisance-function estimator $\hat{m}(x) $ is inaccurate,
			the error $\hat{m}(X)  -  m(X)$ lies in the space of functions $X$ and will not change the projection of the residuals $Y - \hat{m}(X)$ to the space spanned by $(W-e(X))X$.}}
	\label{fig:intuition}
\end{figure}

\subsection{Non-collapsibility}\label{subsubsec:noncollapsibility}

Non-collapsibility refers to the phenomenon that the mean of conditional measures does not equal the marginal counterpart.
Non-collapsibility \citep{fienberg2007analysis, greenland1999confounding} and confounding are two distinct issues.
In other words, even if there is no confounder, the non-collapsibility issue may arise when we use non-collapsible association measures. 
For example, we consider randomized experiments ($e(x) = 0.5$) and the exponential family model~\eqref{eq:model:exponentialFamily} with the logistic link. We assume the true natural parameter functions are 
\begin{align}\label{eq:model:additive5}
	\eta_0(x,z) = z, \quad \eta_1(x,z) = \tau(x) + z,
\end{align}
where $Z$ is an unobserved binary random variable. 
For constant $\tau(x)$, the true treatment effect is the log odds ratio conditional on $Z = 0$ or $Z = 1$, but may not equal the log of the marginal odds ratio.  
More generally, for natural parameter functions \eqref{eq:model:additive5} and constant $\tau(x)$, the separate estimation method that estimates $\eta_0(x,z)$, $\eta_1(x,z)$ by generalized linear regression is only consistent for Gaussian and Poisson responses \citep{gail1984biased}.
For non-constant $\tau(x)$, the separate estimation method is only consistent for Gaussian (Claim~\ref{claim:conditionalMean} in the appendix).
Figure~\ref{fig:errorToy2} verifies the non-collapsibility issue numerically.

\begin{figure}
	\centering
	\begin{minipage}{4.6cm}
		\centering  
		\includegraphics[scale=0.3]{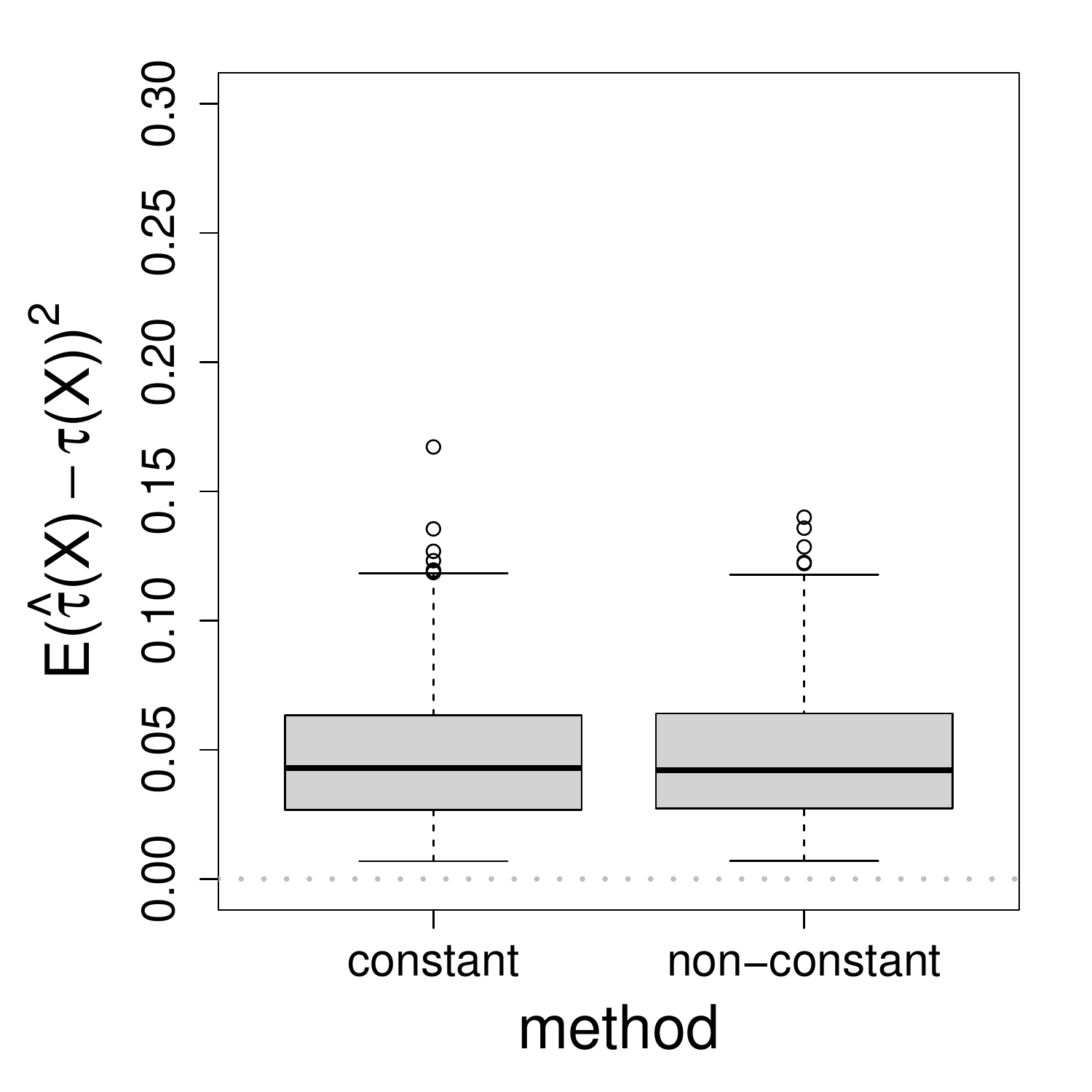}  
		{{(a) linear link}}
	\end{minipage}
	\begin{minipage}{4.6cm}
		\centering  
		\includegraphics[scale=0.3]{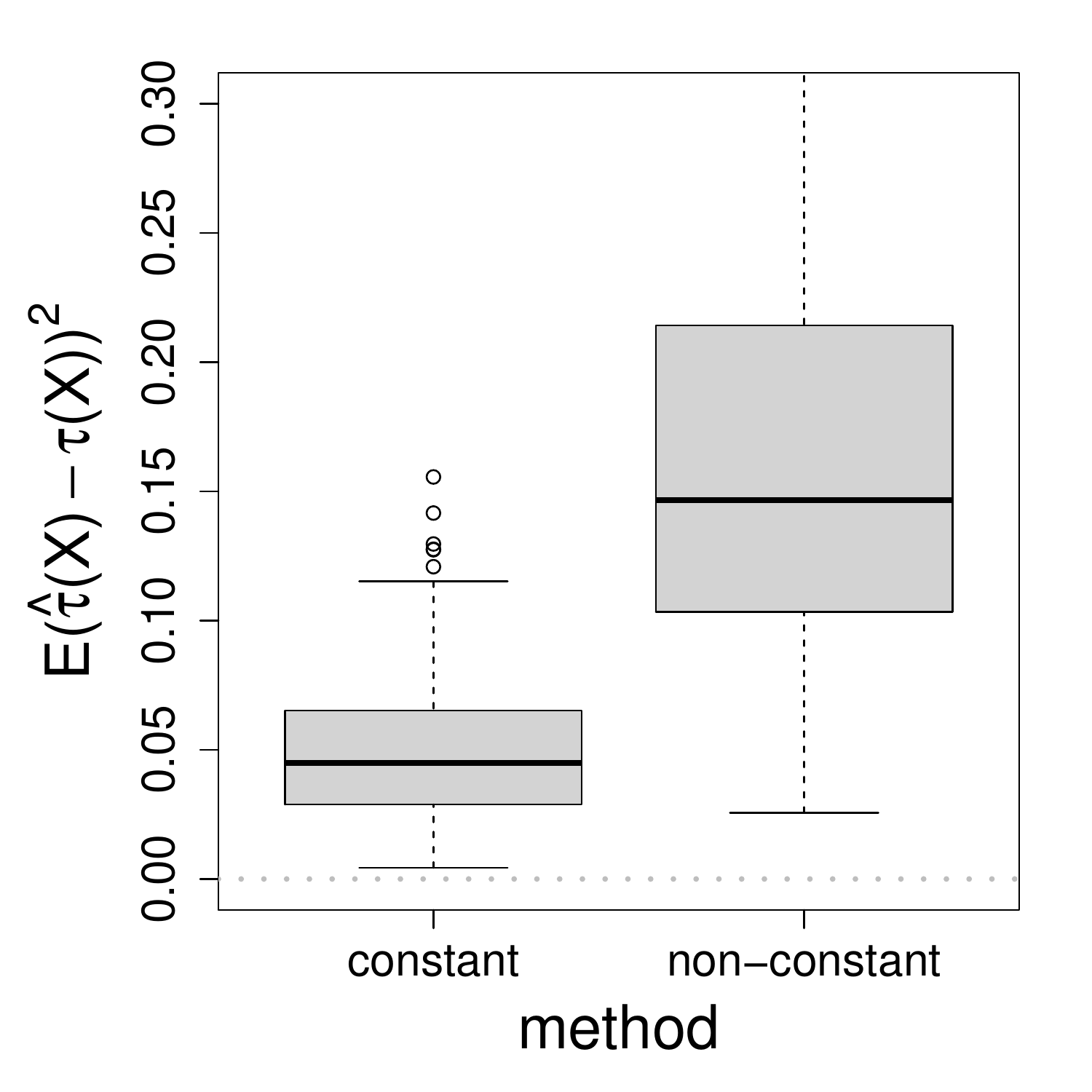}  
		{{(b) log-link}}
	\end{minipage}
	\begin{minipage}{4.6cm}
		\centering  
		\includegraphics[scale=0.3]{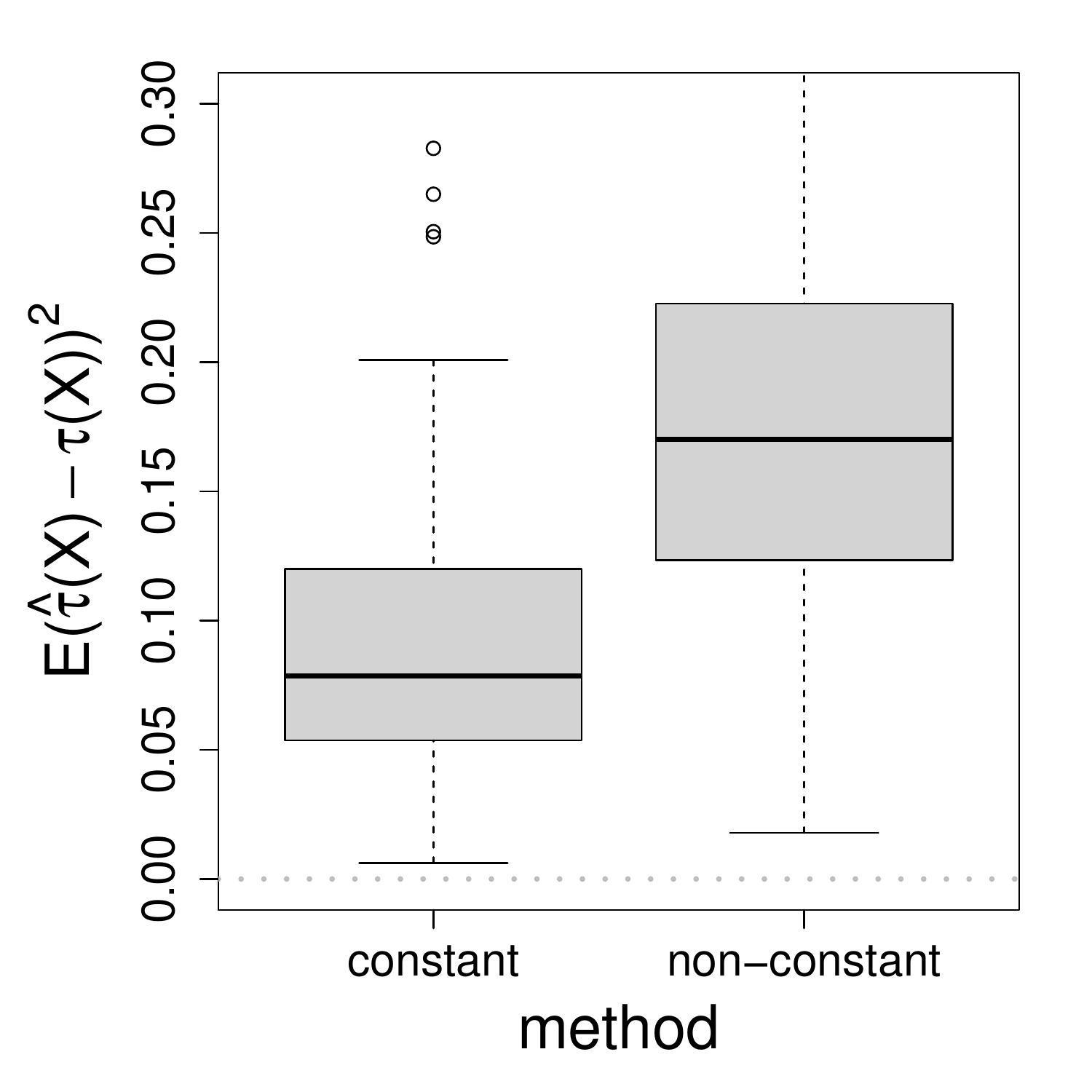}
		{{(c) logit-link}}
	\end{minipage} 
	\caption{{Estimation error box plots of separate
			estimation applied to the exponential
			family with natural parameter functions~\eqref{eq:model:additive5}. In the left, middle, and right panels, the
			responses are generated from Gaussian, Poisson, and
			Bernoulli distributions, corresponding to the identity,
			log, and logit canonical link functions, respectively. For
			each response distribution, we consider constant treatment
			effect and heterogeneous treatment effect. The signal-to-noise ratio is kept at the same level and below one. 
			The separate estimation method estimates the nuisance functions $\eta_0(x,z)$, $\eta_1(x,z)$ by fitting generalized linear regression to observed covariates $x$. The separate estimation method is always unbiased with the linear link, only unbiased for constant treatment effect with the log-link, and always biased with the logit-link. 
	}}
	\label{fig:errorToy2}
\end{figure}

R-learner is designed to estimate the difference in conditional means but not natural parameter functions. 
Direct extension of R-learner to the natural parameter scale is infeasible: the response distribution conditional on the covariates $X$ is often not a member of the exponential family, and thus the marginal natural parameter function analogous to the marginal mean function $m(x)$ is not well-defined. 
In Section~\ref{sec:method} and \ref{sec:cox}, we introduce our extension that inherits the robustness to confounders from R-learner and provides protection against non-collapsibility on the natural parameter scale.

\section{DINA for exponential family}\label{sec:method}

In this section, we introduce our DINA estimator in the exponential family inspired by R-learner.
We consider the following working model of the potential outcomes,
\begin{align}\label{eq:model:exponentialFamily}
	\begin{split}
		\begin{cases}
			f_1(y \mid X=x) = 
			\kappa(y) e^{y \eta_1(x) - \psi(\eta_1(x))},&\\
			f_0(y  \mid X=x) = 
			\kappa(y) e^{y \eta_0(x) - \psi(\eta_0(x))}.&
		\end{cases}
	\end{split}
\end{align}
Here $\eta_0(x)$, $\eta_1(x)$ denote the natural parameters of the control group and the treatment group respectively, $\psi(\eta)$ is the cumulant generating function, and $\kappa(y)$ is the carrier density.
The distribution family \eqref{eq:model:exponentialFamily}
includes the commonly used Bernoulli and Poisson
distributions, among others.

We start by rewriting the natural parameters in model \eqref{eq:model:exponentialFamily} with Assumption~\ref{assu:semi} as 
\begin{equation}
	\label{eq:expfampair}
	\begin{array}{rcl}
		\eta_0(x)&=&\eta_0(x),\\
		\eta_1(x)&=&\eta_0(x)+x^\top\beta.
	\end{array}
\end{equation}
We then construct a baseline $\nu(x)$ that is particular mixture of the natural parameters
\begin{equation}\label{eq:condition4}
	\begin{aligned}
		\nu(x) &=& a(x) \eta_1(x) + (1-a(x)) \eta_0(x),\\
		a(x) &=& \frac{e(x)V_1(x) }{e(x)V_1(x) + (1-e(x))V_0(x)}.
	\end{aligned}
\end{equation}
Here $a(x)$ is a type of modified propensity score, with
$V_1(x)=\frac{d\mu}{d\eta}(\eta_1(x))$ and
$V_0(x)=\frac{d\mu}{d\eta}(\eta_0(x))$ the variance functions for
the exponential family, 
where $\mu(\eta)$ denotes the mean function (inverse of the canonical link function). 
This allows us to reparametrize (\ref{eq:expfampair}) 
\begin{align}\label{eq:model:exponentialFamily2}
	\eta_w(x) = \nu(x) + (w - a(x)) x^\top \beta,\quad w\in\{0,1\}.
\end{align}
R-learner for the Gaussian distribution uses $a(x) = e(x)$,
$\nu(x) = e(x) \eta_1(x) + (1-e(x)) \eta_0(x) = \EE[Y \mid X = x]$ (since
$\eta_w(x)=\mu_w(x)$ in this case), and is
able to produce unbiased $\beta$ even if $\nu(x)$ is
misspecified. 
Similarly, as shown in Proposition \ref{prop:derivationExponential} in the appendix, we designed $a(x)$ and $\nu(x)$ so that even if we start with an inaccurate baseline $\nu^*(x) \neq \nu(x)$, we can still arrive at the true DINA parameter $\beta$.

\begin{claim}\label{claim:conditionalMean}
	Under the model \eqref{eq:model:exponentialFamily}, for arbitrary covariate value $x$, $\nu(x)$ in \eqref{eq:condition4} satisfies
	\begin{align*}
		\EE[Y \mid X = x]
		= \mu(\nu(x)) + O\left(\tau^2(x)\right).
	\end{align*}
\end{claim}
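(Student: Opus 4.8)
The plan is to reduce the statement to a pair of first‑order Taylor expansions in the scalar $\tau = \tau(x)$ and to show that the leading linear terms cancel, leaving only a quadratic remainder. Throughout I fix $x$ and abbreviate $\eta_0 = \eta_0(x)$, $e = e(x)$, $V_0 = V_0(x)$, $V_1 = V_1(x)$, $a = a(x)$, writing $\eta_1 = \eta_0 + \tau$. First I would evaluate the left‑hand side: conditioning on the treatment assignment $W \mid X = x \sim \mathrm{Ber}(e)$ and using that $\mu(\cdot) = \psi'(\cdot)$ is the mean function of the family \eqref{eq:model:exponentialFamily}, the law of total expectation gives $\EE[Y \mid X = x] = e\,\mu(\eta_1) + (1-e)\,\mu(\eta_0)$. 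Taylor expanding $\mu(\eta_0 + \tau)$ about $\eta_0$ and using $\mu'(\eta_0) = V_0$ yields
\[
\EE[Y \mid X = x] = \mu(\eta_0) + e\,V_0\,\tau + O(\tau^2).
\]

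Next I would expand the right‑hand side. Since $\nu = a\eta_1 + (1-a)\eta_0 = \eta_0 + a\tau$ and $a \in [0,1]$ is bounded, the same expansion gives $\mu(\nu) = \mu(\eta_0) + a\,V_0\,\tau + O(\tau^2)$. Subtracting the two displays, the constant terms cancel and
\[
\EE[Y \mid X = x] - \mu(\nu) = V_0\,(e - a)\,\tau + O(\tau^2),
\]
so the claim reduces to showing $a - e = O(\tau)$, i.e.\ that the modified propensity score $a(x)$ in \eqref{eq:condition4} differs from $e(x)$ only at first order in the treatment effect.

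The crux is this last estimate, and it is exactly where the particular form of $a(x)$ is used. A direct algebraic simplification of \eqref{eq:condition4} gives the clean identity
\[
a - e = \frac{e(1-e)(V_1 - V_0)}{e V_1 + (1-e) V_0}.
\]
Since the variance function $V = \mu' = \psi''$ is smooth on the interior of the natural parameter space and $\eta_1 - \eta_0 = \tau$, a further Taylor expansion gives $V_1 - V_0 = V'(\eta_0)\tau + O(\tau^2) = O(\tau)$. The denominator is bounded away from zero because the family is non‑degenerate ($V_0, V_1 > 0$) and Assumption~\ref{assu:overlap} keeps $e$ inside $[\varepsilon, 1-\varepsilon]$; hence $a - e = O(\tau)$. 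Plugging this back, $V_0(e-a)\tau = O(\tau^2)$, which together with the displayed remainders proves the claim.

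I expect the algebraic identity for $a - e$ and the justification of the remainder bounds to be the points requiring the most care: one must ensure the $O(\tau^2)$ terms are controlled by the second derivatives of $\mu$ and by the denominator staying away from zero, which relies on the smoothness of $\psi$ together with Assumption~\ref{assu:overlap} and non‑degeneracy of the family. Conceptually, the result says no more than that $\EE[Y\mid X=x]$ and $\mu(\nu(x))$, viewed as functions of $\tau$ with $\eta_0$ and $e$ held fixed, agree in value and in first derivative at $\tau = 0$ — the latter precisely because $a = e$ when $\tau = 0$ (then $V_1 = V_0$) — so their difference is second order.
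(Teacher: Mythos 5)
Your proof is correct, but it runs along a genuinely different track from the paper's. The paper expands $\mu(\eta_1(x))$ and $\mu(\eta_0(x))$ around the mixture point $\nu(x)$, with the linear coefficients evaluated at the endpoints (so they equal $\mu'(\eta_1(x)) = V_1(x)$ and $\mu'(\eta_0(x)) = V_0(x)$); the first-order term of the difference is then $\left(e(x)V_1(x)(1-a(x)) - (1-e(x))V_0(x)a(x)\right)\tau(x)$, which vanishes \emph{exactly} by the defining identity of $a(x)$ in \eqref{eq:condition4}, leaving only the explicit second-order remainder. You instead center both expansions at $\eta_0(x)$, obtain the residual linear term $V_0(x)\left(e(x)-a(x)\right)\tau(x)$, and then kill it with the separate estimate $a(x)-e(x)=O(\tau(x))$, proved via the identity $a-e = e(1-e)(V_1-V_0)/\left(eV_1+(1-e)V_0\right)$ together with Lipschitz continuity of the variance function and Assumption~\ref{assu:overlap}. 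The two arguments need the same regularity (boundedness of $\mu''$ near $\eta_0(x)$, non-degenerate variance, overlap), so neither is more general, but they expose different structure: the paper's route displays the exact balancing property $e(x)V_1(x)(1-a(x)) = (1-e(x))V_0(x)a(x)$ that motivates the construction of $a(x)$ in the first place (the same identity that drives the orthogonality and robustness results), whereas your route makes explicit the complementary fact that the DINA weight $a(x)$ deviates from the propensity score $e(x)$ only at first order in the treatment effect, i.e.\ the construction reduces to the direct extension of R-learner as $\tau(x)\to 0$. Your cancellation is approximate where the paper's is exact, so the constant hidden in your $O(\tau^2(x))$ is a little less transparent, but the hypotheses and the conclusion are the same.
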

For linear canonical link function (R-learner), the equation in Claim~\ref{claim:conditionalMean} is exact with no remainder term.
For arbitrary canonical link functions, the marginal conditional mean $\EE[Y \mid X = x]$ approximately equals $\mu(\nu(x))$ if the treatment effect $\tau(x)$ is relatively small in scale compared to the marginal mean function $\EE[Y \mid X = x]$.

Based on \eqref{eq:condition4}, we propose the following two-step estimator (details in Algorithm~\ref{algo:exponentialFamily}):
\begin{enumerate}
	\item {Estimation of nuisance functions}. We estimate the
	functions $a(x)$ and $\nu(x)$ in \eqref{eq:condition4}, using
	estimators of the propensity score $e(x)$ and the natural
	parameter functions $\eta_0(x)$ and $\eta_1(x)$.
	\item {Maximum likelihood estimator (MLE)}. 
	Fit a generalized linear regression model to response $Y$ with offset $\hat{\nu}(x)$ and predictors $(W - \hat{a}(X))X$. 
\end{enumerate}
We remark that the construction \eqref{eq:condition4} can also be regarded as designing Neyman's orthogonal scores \citep{neyman1959optimal,newey1994asymptotic,van2000asymptotic} specialized to the potential outcome model with the exponential family (Claim \ref{prop:Neyman} in Appendix \ref{subsec:appendix:exponentialFamily}). 

We explain why the robustness to the nuisance-function estimators protects the proposed method from confounding and non-collapsibility. 
In fact, provided with the true nuisance
functions, the MLE of $\beta$ is well-behaved despite confounding and
non-collapsibility. Since the proposed method is insensitive to the
misspecification of nuisance functions,
the DINA estimator $\hat{\beta}$ will
remain close to that using the accurate nuisance functions. In
contrast, the separate estimation method relies more heavily on
precise nuisance-function estimation, and is prone to confounding and
non-collapsibility.

We demonstrate the improvement of our proposed method over separate estimation and the direct extension of R-learner with Poisson responses. 
Here the direct extension of R-learner refers to the configuration $a(x) = e(x)$, $\nu(x) = e(x) \eta_1(x) + (1-e(x))\eta_0(x)$ in \eqref{eq:model:exponentialFamily2}. We focus on the confounding and non-collapsibility issues, and design three scenarios: (a) confounding only; (b) non-collapsibility only; (c) confounding and non-collapsibility. In Figure~\ref{fig:errorToy3}, the proposed method is insensitive to both the confounding and the non-collapsibility issues. 

\begin{figure}
	\centering
	\begin{minipage}{4.6cm}
		\centering  
		\includegraphics[scale=0.3]{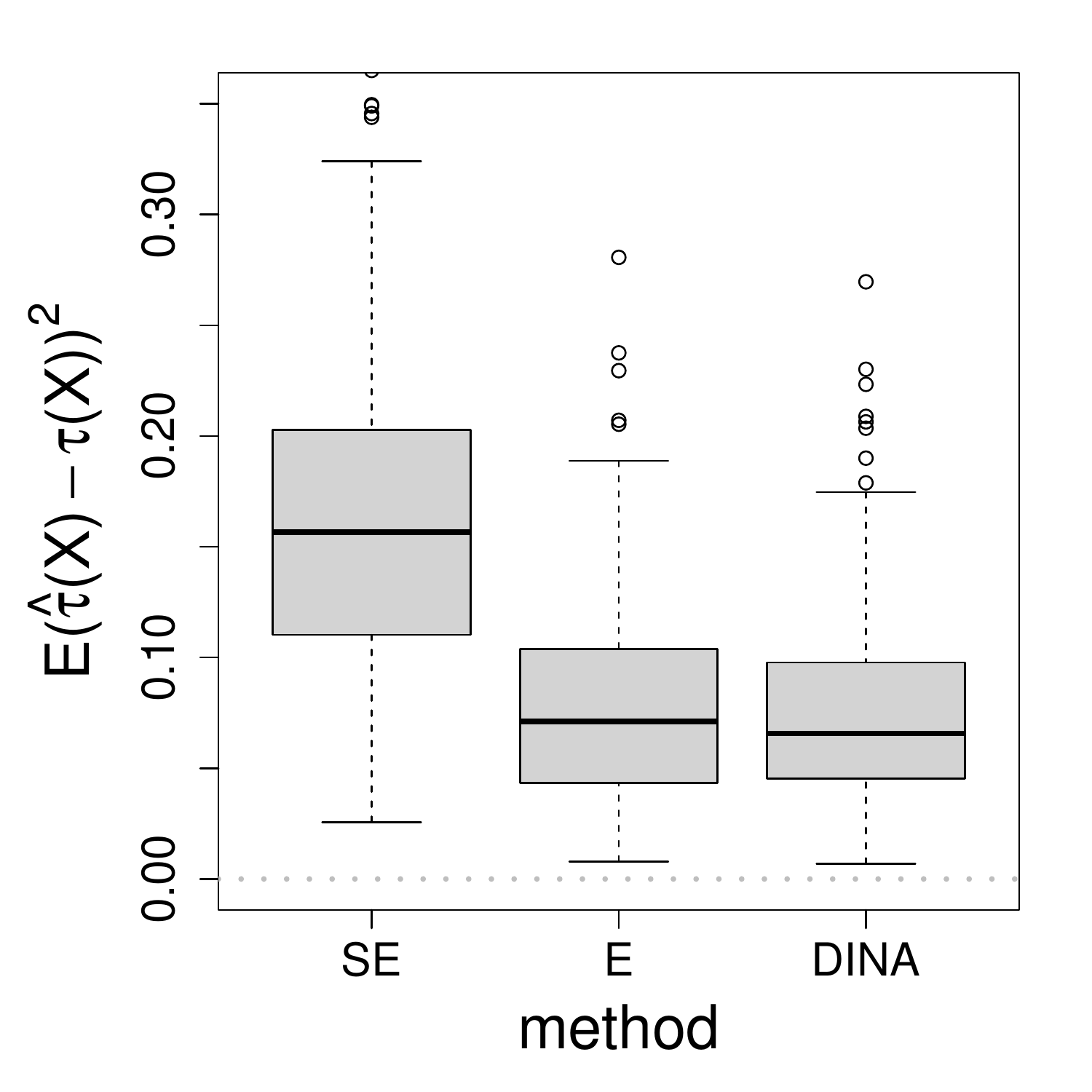}  \\
		{{(a) confounding only}}
	\end{minipage}
	\begin{minipage}{4.6cm}
		\centering  
		\includegraphics[scale=0.3]{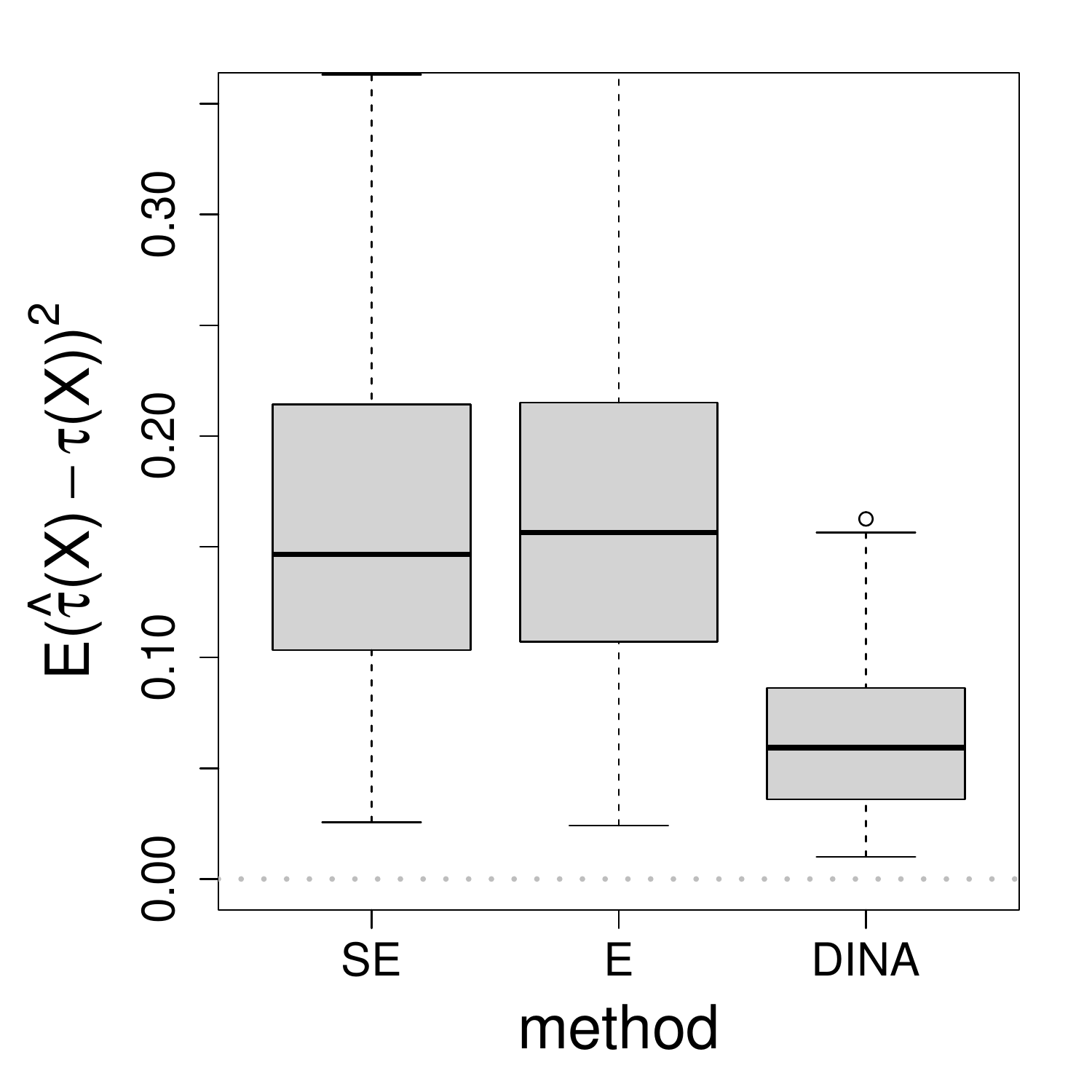}  \\
		{{(b) non-collapsibility only}}
	\end{minipage}
	\begin{minipage}{4.6cm}
		\centering  
		\includegraphics[scale=0.3]{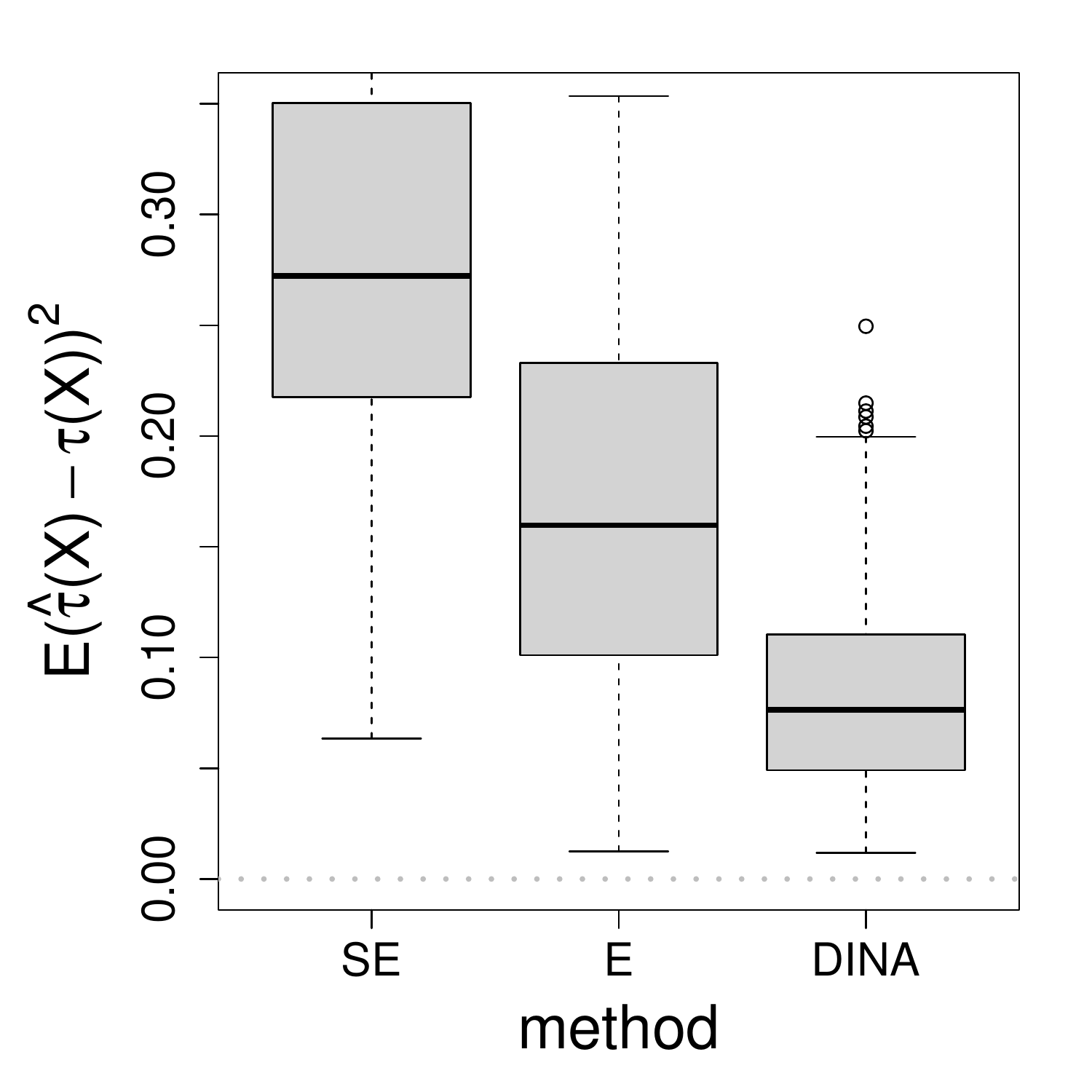}\\
		{{(c) confound.\&non-collaps.}}
	\end{minipage}
	\caption{{Boxplots of estimation error for  separate estimation (SE), the direct extension of R-learner (E), and the proposed method (DINA) with Poisson responses. In the left panel, the treatment assignment is confounded, but the treatment effect is constant and thus there is no non-collapsibility issue according to \citet{gail1984biased}; in the middle panel, the treatment is randomized, but the treatment effect is not constant; in the right panel, the treatment assignment is confounded and the treatment effect is not constant. The signal-to-noise ratio is kept at the same level and below one. Separate estimation, direct extension of R-learner, and Algorithm~\ref{algo:exponentialFamily} all estimate $\eta_0(x)$, $\eta_1(x)$ by generalized linear regression. Propensity score is estimated by logistic regression.}}
	\label{fig:errorToy3}
\end{figure}

Next we provide some intuition for the proposed method, details of the implementation, and some theoretical properties.

\subsection{Interpretation}\label{subsec:interpretation}
We provide insights into Algorithm~\ref{algo:exponentialFamily} by comparing it to R-learner.
As discussed above, Algorithm~\ref{algo:exponentialFamily} is an extension of R-learner, and shares its motivation as well as the method skeleton. 
The major difference lies in the nuisance functions \eqref{eq:condition4}.
In R-learner, $a(x)$ equals the propensity function $e(x)$; in
the proposed DINA estimator, $a(x)$ not only depends on $e(x)$ but
also the variance functions $V_0(x)$ and $V_1(x)$ for
the particular exponential family.
Since $a(x)$ lies between 0 and 1, it can be decomposed as
\begin{equation}
	\label{eq:logitax}
	\log\frac{a(x)}{1-a(x)}=\log\frac{e(x)}{1-e(x)}+\log\frac{V_1(x)}{V_0(x)}.
\end{equation}
$a(x)$ is large if a unit is likely to be treated or
their response under treatment has higher variance than under no
treatment.
As a result, the $a(x)$ not only
balances the treatment group and the control group, but also reduces
the impact of differential variance between the two groups.
For more comments on the adjustment based on $V_w(x)=\frac{d\mu}{d\eta}(\eta_w(x))$,  see the multi-valued treatment setting in Section~\ref{sec:multiValued}.

\subsection{Algorithm}\label{subsec:exponential:algorithm}

Similar to R-learner, we estimate the DINA in
\eqref{eq:model:exponentialFamily2} in two
steps. We estimate nuisance functions~\eqref{eq:condition4} and the DINA using independent subsamples so that the estimation of the
nuisance functions
does not interfere with that of DINA. We employ
cross-fitting \citep{chernozhukov2018double} to boost data
efficiency. We now give details.

In the first step, we estimate the nuisance functions $\nu(x)$ and
$a(x)$ in \eqref{eq:condition4}. These in turn depend on the 
propensity score $e(x)$, and separate estimators of $\eta_0(x)$ and
$\eta_1(x)$. Given these latter two, the variance functions $V_0(x)$
and $V_1(x)$ are immediately available from the corresponding
exponential family. For example for the binomial family, we have that
$V_0(x)=\mu_0(x)(1-\mu_0(x))$, where
$\mu_0(x)=e^{\eta_0(x)}/(1+e^{\eta_0(x)})$.
The propensity scores can be estimated by
any classification methods that provide probability estimates.
Likewise the functions $\eta_0(x)$ and $\eta_1(x)$ can be separately estimated by
any suitable method that operates within the particular exponential family.

Upon obtaining
$\hat{e}(x)$,  $\hat{\eta}_0(x)$, and
$\hat{\eta}_1(x)$, we plug them into (\ref{eq:condition4}) to get
$\hat{a}(x)$ and $\hat{\nu}(x)$. We remark that though we can directly use the
difference of the two estimators $\hat{\eta}_1(x)$,
$\hat{\eta}_0(x)$ as a valid DINA estimator ---  separate estimation
--- we show in Proposition~\ref{prop:main} that our method yields more accurate DINA estimators. 
Note that in the case of the Gaussian distribution,  $a(x) = e(x)$,
$\nu(x) = \EE[Y\mid X = x]$, and we can directly estimate $\nu(x)$ by
any conditional mean estimator of $Y$ given $X$ and avoid having to
separately estimate  $\hat{\eta}_0(x)$ and  $\hat{\eta}_1(x)$.

In the second step, we maximize the log-likelihood corresponding to
(\ref{eq:model:exponentialFamily2}) with the functions $\hat{\nu}(x)$
and $\hat{a}(x)$ fixed from the first step\footnote{If there are
	collinearity or sparsity patterns, penalties such as ridge and LASSO
	can be directly added to the loss function.}. The function $\nu(x)$
is regarded as an \emph{offset}, and the function $a(x)$ is used to
construct the predictors $(W-a(X))X$. The second step can be
implemented, for example,  using the \texttt{glm} function in {R}.

\begin{algorithm}
	\BlankLine
	\caption{Exponential family}\label{algo:exponentialFamily}
	We split the data randomly equally into two folds.
	
	1. On fold one, we obtain nuisance-function estimators $\hat{a}(x)$ and $\hat{\nu}(x)$. 
	\begin{enumerate}
		\item {Estimation of $e(x)$}. Estimate the
		propensity scores by any classification method that
		provides probability estimates, such as logistic
		regression, random forests, or boosting.
		\item {Estimation of $\eta_0(x)$,
			$\eta_1(x)$}. Estimate the natural parameter
		functions $\eta_0(x)$ and $\eta_1(x)$ based on the
		control group and the treatment group respectively,
		such as through fitting generalized linear models,
		random forests, or boosting.
		\item {Substitution}. Plug the estimators $\hat{e}(x)$, $\hat{\eta}_0(x)$, $\hat{\eta}_1(x)$ into \eqref{eq:condition4} and get $\hat{a}(x)$. Further let $\hat{\nu}(x) = \hat{a}(x) \hat{\eta}_1(x) + (1-\hat{a}(x)) \hat{\eta}_0(x)$.
	\end{enumerate}
	2. On fold two, we obtain $\hat{\beta}$ by maximizing the
	log-likelihood
	\begin{align}\label{eq:MLE}
		\begin{split}
		\max_{\beta'} \ell(\beta'; \hat{a}(x), \hat{\nu}(x)) :=& \frac{1}{n}\sum_{i=1}^n Y_i\left(\hat{\nu}(X_i) + (W_i-\hat{a}(X_i)) X_i^\top \beta'\right) \\
		&- \psi\left(\hat{\nu}(X_i) + (W_i-\hat{a}(X_i)) X_i^\top \beta'\right)
		\end{split}
	\end{align}
	with $\hat{a}(x)$ and $\hat{\nu}(x)$ plugged in. Denote the estimated coefficient by $\hat{\beta}^{(1)}$.\\
	
	3. We swap the two folds and obtain another estimate
	$\hat{\beta}^{(2)}$. We output the average $(\hat{\beta}^{(1)} +
	\hat{\beta}^{(2)})/2$.
\end{algorithm}

From the practical perspective, Algorithm~\ref{algo:exponentialFamily}
decouples the estimation of DINA from that of the  nuisance functions and allows for flexible estimation of nuisance functions. 
Though the nuisance functions $a(x)$ and $\nu(x)$ are complicated,
various methods are applicable since there are no missing data in the
nuisance-function estimation. By modularization, in step one, we are free to use any off-the-shelf methods, and in step two, we solve the specially designed MLE problem with the robustness protection.

\subsection{Theoretical properties}\label{subsec:exponential:property}
The motivation behind the method is to gain robustness to nuisance functions, and we make our idea rigorous in the following proposition.
\begin{proposition}
	\label{prop:main}
	Under the regularity conditions:
	\begin{enumerate}
		\item  Covariates $X$ are bounded, the true parameter $\beta$ is in a bounded region $\calB$, nuisance functions $a(x)$, $\nu(x)$ and nuisance-function estimators ${a}_n(x)$, ${\nu}_n(x)$ are uniformly bounded;
		\item The minimal eigenvalues of the score derivative $\nabla_{\beta} s(a(x), \nu(x),\beta)$ in $\calB$ are lower bounded by $C$, $C>0$.
	\end{enumerate}
	Assume that\footnote{The norm $\|f(x)\|_2$ is defined as $(\EE[f^2(X)])^{1/2}$.} $\|{a}_n(x) - a(x)\|_2$, $\|{\nu}_n(x) - \nu(x)\|_2 = O(c_n)$, $c_n \to 0$, then
	\begin{align*}
		\|{\beta}_n - \beta\|_2 = \tilde{O}\left(c_n^2 + n^{-1/2}\right).
	\end{align*}
\end{proposition}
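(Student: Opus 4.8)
The plan is to treat $\beta_n$ as a Z-estimator defined through its score and to run the orthogonal-moment (debiased-ML) argument, isolating a parametric $n^{-1/2}$ fluctuation from a nuisance-induced bias that is rendered second order by Neyman orthogonality. Differentiating the log-likelihood \eqref{eq:MLE} in $\beta$ gives the per-observation score
\[
s(a,\nu,\beta; Y,W,X) = \big(Y - \mu(\nu(X) + (W - a(X))X^\top\beta)\big)(W - a(X))X,
\]
and I set $S_n(a,\nu,\beta) = \frac{1}{n}\sum_i s(a,\nu,\beta;Y_i,W_i,X_i)$ and $S(a,\nu,\beta) = \EE[s(a,\nu,\beta;Y,W,X)]$. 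By construction $\beta_n$ solves $S_n(a_n,\nu_n,\beta_n) = 0$, while the reparametrization \eqref{eq:model:exponentialFamily2} together with $\EE[Y\mid X,W] = \mu(\eta_W(X))$ gives $S(a,\nu,\beta) = 0$ at the truth. The whole problem thus reduces to showing $\|S_n(a_n,\nu_n,\beta)\|_2 = O_p(n^{-1/2}) + O(c_n^2)$ and then inverting the score derivative.

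First I would establish a preliminary consistency $\beta_n \toprob \beta$ by a uniform law of large numbers for $S_n$ over the bounded set $\calB$ (using the boundedness in condition (1)) together with identifiability from the eigenvalue condition (2), and then Taylor-expand the estimating equation in $\beta$ alone:
\[
0 = S_n(a_n,\nu_n,\beta_n) = S_n(a_n,\nu_n,\beta) + \nabla_\beta S_n(a_n,\nu_n,\bar\beta)\,(\beta_n - \beta)
\]
for an intermediate $\bar\beta$. Condition (2), combined with a uniform convergence of $\nabla_\beta S_n(a_n,\nu_n,\cdot)$ to $\nabla_\beta S(a,\nu,\cdot)$ (again from boundedness and $\|a_n-a\|_2,\|\nu_n-\nu\|_2\to 0$), makes $\nabla_\beta S_n(a_n,\nu_n,\bar\beta)$ invertible with operator-norm-bounded inverse on an event of probability $1-o(1)$, so that $\|\beta_n-\beta\|_2 \le C^{-1}(1+o_p(1))\,\|S_n(a_n,\nu_n,\beta)\|_2$.

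I would then split $S_n(a_n,\nu_n,\beta) = \underbrace{\big[S_n(a_n,\nu_n,\beta) - S(a_n,\nu_n,\beta)\big]}_{\text{(I)}} + \underbrace{S(a_n,\nu_n,\beta)}_{\text{(II)}}$. For (I) I exploit the sample splitting in Algorithm~\ref{algo:exponentialFamily}: conditionally on the fold that produced $(a_n,\nu_n)$, term (I) is an average of independent, mean-zero, bounded summands, so a conditional variance bound gives $\|(\mathrm{I})\|_2 = O_p(n^{-1/2})$; adding and subtracting the centered score at the true nuisance shows the residual drift is $O_p(c_n n^{-1/2})$, which is absorbed into $c_n^2 + n^{-1/2}$, and crucially no Donsker condition is needed. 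Term (II) is the bias: since $S(a,\nu,\beta)=0$, a second-order Taylor expansion of $(a,\nu)\mapsto S(a,\nu,\beta)$ about the truth gives $\|(\mathrm{II})\|_2 \le \tfrac{1}{2}\sup\|D^2_{(a,\nu)}S\|\,(\|a_n-a\|_2 + \|\nu_n-\nu\|_2)^2 = O(c_n^2)$, \emph{provided} the first-order directional derivative in the nuisance directions vanishes.

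The main obstacle, and the heart of the argument, is exactly this Neyman orthogonality, i.e.\ that the first-order nuisance derivative of (II) is zero (the content of Claim~\ref{prop:Neyman}). The $\nu$-direction derivative carries the factor $\EE[\mu'(\eta_W)(W-a(X))\mid X] = e V_1(1-a) - (1-e)V_0 a = e V_1 - a\,(e V_1 + (1-e)V_0)$, which is identically zero precisely because $a(x)$ in \eqref{eq:condition4} is defined as $e V_1/(e V_1 + (1-e)V_0)$; the $a$-direction derivative splits into one piece killed by $\EE[Y-\mu(\eta_W)\mid X,W]=0$ and one piece carrying the same cancelling factor. This designed cancellation is what downgrades the nuisance error from first to second order and is the delicate step to get right. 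Collecting the bounds yields $\|S_n(a_n,\nu_n,\beta)\|_2 = O_p(n^{-1/2}) + O(c_n^2)$, and the bounded inverse Hessian then delivers $\|\beta_n-\beta\|_2 = \tilde{O}(c_n^2 + n^{-1/2})$, with the logarithmic slack in $\tilde{O}$ absorbing the high-probability maximal inequalities used for the uniform controls.
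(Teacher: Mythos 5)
Your proposal is correct and follows essentially the same route as the paper's proof: a Z-estimator argument that first establishes consistency, then inverts the score derivative using the eigenvalue lower bound, and shows the remaining score evaluated at the true $\beta$ is $O_p(n^{-1/2}) + O(c_n^2)$, where the $c_n^2$ term comes from exactly the Neyman-orthogonality cancellation $e(x)V_1(x)(1-a(x)) - (1-e(x))V_0(x)a(x) = 0$ that the paper isolates in Claim~\ref{prop:Neyman} and Proposition~\ref{prop:derivationExponential}. The only difference is bookkeeping: you split the empirical score into an empirical-process term (controlled via cross-fitting and a conditional variance bound) plus a population bias term (second-order Taylor in the nuisances), whereas the paper parametrizes the nuisance errors along scalar paths $\alpha_n\xi_n(x)$, $\rho_n\zeta_n(x)$ and Taylor-expands the empirical score jointly in $(\alpha,\rho,\beta)$, applying the CLT to the path derivatives --- the mathematical content is the same.
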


Proposition~\ref{prop:main} states that for $\hat{\tau}(x) = x^\top \hat{\beta}$ to achieve a certain accuracy, the conditions on $\hat{a}(x)$ and $\hat{\nu}(x)$ are relatively weak. This implies the method is locally insensitive to the nuisance functions, and thus is robust to noisy plugged-in nuisance-function estimators. In other words, we can view Algorithm~\ref{algo:exponentialFamily} as an accelerator:
we input crude estimators $\hat{\eta}_1(x)$, $\hat{\eta}_0(x)$, and  $\hat{e}(x)$, and Algorithm~\ref{algo:exponentialFamily} outputs twice accurate DINA estimators compared to the simple difference $\hat{\eta}_1(x) - \hat{\eta}_0(x)$. 
As a corollary of Proposition~\ref{prop:main}, if both $\hat{a}(x)$ and $\hat{\nu}(x)$ can be estimated at the rate $n^{-1/4}$, then the DINA can be estimated at the parametric rate $n^{-1/2}$.

\section{DINA for Cox model}\label{sec:cox}

In this section, we first draw a connection between the Cox model and the exponential family using the full likelihood, and further generalize Algorithm~\ref{algo:exponentialFamily} to the Cox model. We then discuss how the generalized method fits in the partial-likelihood framework. 

Let $\Lambda(y)$ be the baseline cumulative hazard function, i.e., $\Lambda(y) =
\int_0^y \lambda(t) dt$.
We assume the potential outcomes follow
\begin{align}\label{eq:model:cox}
	\begin{split}
		\begin{cases}
			\PP(Y(1) \ge y \mid X = x) = e^{-\Lambda(y) e^{\eta_1(x)}},\\
			\PP(Y(0) \ge y \mid X = x) = e^{-\Lambda(y) e^{\eta_0(x)}}.
		\end{cases}
	\end{split}
\end{align}	
In the Cox model, the baseline hazard function $\Lambda(y)$ is shared among all subjects, and the hazard of a subject
is the baseline multiplied by a unique tilting function
independent of the survival time (the proportional
hazards assumption).

\subsection{Full likelihood}\label{subsec:full}
The following proposition illustrates a connection between the Cox model and the exponential distribution.
\begin{claim}\label{claim:coxExp}
	Assume the Cox model \eqref{eq:model:cox} and $\lambda(y) > 0$ for $y \ge 0$. Then $\Lambda(Y(0)) \mid X = x$ and $\Lambda(Y(1)) \mid X = x$ follow the exponential distribution with rate $e^{\eta_0(x)}$ and $e^{\eta_1(x)}$, respectively.
\end{claim}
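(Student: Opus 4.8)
The plan is to recognize this as a direct application of the probability integral transform, carried out through the cumulative hazard $\Lambda$, with the two arms treated symmetrically. First I would observe that since $\lambda(y) > 0$ for all $y \ge 0$, the baseline cumulative hazard $\Lambda(y) = \int_0^y \lambda(t)\,dt$ is continuous and strictly increasing on $[0,\infty)$ with $\Lambda(0) = 0$. Hence $\Lambda$ is a bijection from $[0,\infty)$ onto its range $[0,\Lambda(\infty))$ and admits a continuous, strictly increasing inverse $\Lambda^{-1}$ there. In particular the transformed variable $\Lambda(Y(0)) \mid X = x$ is well-defined.

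Next, for the control arm I would fix $x$ and compute the survival function of $T := \Lambda(Y(0))$ directly. For $t$ in the range of $\Lambda$, strict monotonicity gives the event identity $\{T \ge t\} = \{Y(0) \ge \Lambda^{-1}(t)\}$, so that, substituting into the model~\eqref{eq:model:cox},
\begin{align*}
	\PP(T \ge t \mid X = x) = \PP\left(Y(0) \ge \Lambda^{-1}(t) \mid X = x\right) = e^{-\Lambda(\Lambda^{-1}(t))\, e^{\eta_0(x)}} = e^{-t\, e^{\eta_0(x)}}.
\end{align*}
This is exactly the survival function of an exponential distribution with rate $e^{\eta_0(x)}$, which pins down the conditional law of $\Lambda(Y(0)) \mid X = x$. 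The treatment arm is handled identically, replacing $\eta_0$ by $\eta_1$ everywhere and yielding rate $e^{\eta_1(x)}$, so no separate argument is needed.

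The only point requiring care --- and the main obstacle --- is the behaviour at the top of the range of $\Lambda$ when $\Lambda(\infty) < \infty$: in that case $Y(0)$ is a possibly defective survival time with $\PP(Y(0) = \infty \mid X = x) = e^{-\Lambda(\infty) e^{\eta_0(x)}} > 0$, and $\Lambda^{-1}(t)$ is undefined for $t \ge \Lambda(\infty)$. I would dispatch this by adopting the standard convention $\Lambda(\infty) = \infty$, under which $Y(0)$ is proper and the displayed identity holds for every $t \ge 0$; alternatively, one sets $T = \Lambda(\infty)$ on the event $\{Y(0) = \infty\}$ and checks that the survival-function identity extends by continuity. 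With $\Lambda(\infty) = \infty$ the computation above is valid for all $t \ge 0$ and the claim follows exactly, with no remainder term.
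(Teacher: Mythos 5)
Your proof is correct and follows essentially the same route as the paper's: both exploit the strict monotonicity of $\Lambda$ (guaranteed by $\lambda(y)>0$) to invert the transformation and compute the survival function $\PP(\Lambda(Y(w)) \ge t \mid X = x) = e^{-\Lambda(\Lambda^{-1}(t))e^{\eta_w(x)}} = e^{-t e^{\eta_w(x)}}$, identifying the exponential law. Your additional remark about the case $\Lambda(\infty) < \infty$ (defective survival times) is a legitimate point of rigor that the paper's proof silently assumes away, but it does not change the substance of the argument.
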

If there is no censoring and the hazard function $\lambda(y)$ is known, Claim \ref{claim:coxExp} implies the Cox model estimation can be simplified to the case based on exponential distribution with responses $\Lambda(Y)$ and the log-link.

Now we discuss the Cox model with censoring under Assumption~\ref{assu:censor}. Similar to Section~\ref{sec:method}, in Proposition \ref{prop:derivationCox} in the appendix, we derived a result analogous to \eqref{eq:condition4}, 
\begin{equation}\label{eq:conditionCox3}
	\begin{aligned}
		a(x) 
		&= \frac{e(x)\PP(C \ge Y \mid W=1, X=x )}{e(x)\PP(C \ge Y \mid W=1, X=x ) + (1-e(x))\PP(C \ge Y \mid W=0, X = x)}, \\
		\nu(x) 
		&= a(x) \eta_1(x)+ (1-a(x))\eta_0(x),
	\end{aligned}
\end{equation}
under the reparametrization \eqref{eq:model:exponentialFamily2} to provide protection to misspecified nuisance functions. 
The $a(x)$ depends on the propensity score as well as the probability of being censored.
Plugging the $a(x)$, $\nu(x)$ in \eqref{eq:conditionCox3} into Algorithm~\ref{algo:exponentialFamily} leads to the hazard ratio estimator of the Cox model (Algorithm~\ref{algo:coxFull}).
\begin{algorithm}
	\BlankLine
	\caption{Cox model with full likelihood}\label{algo:coxFull}
	\textbf{Input}: baseline hazard function $\lambda(y)$ (or the baseline cumulative hazard function $\Lambda(y)$).\\
	We split the data randomly equally into two folds.\\
	1. On fold one, 
	\begin{enumerate}
		\item {Estimation of $e(x)$}. The same as Algorithm~\ref{algo:exponentialFamily}.
		\item {Estimation of $\eta_0(x)$, $\eta_1(x)$}. We maximize the full likelihood on the control and treatment group respectively provided with the hazard function $\lambda(y)$, and obtain estimators $\hat{\eta}_0(x)$, $\hat{\eta}_1(x)$. 
		\item {Estimation of the probability of not being censored}. We estimate the not-censored probabilities $\widehat{\PP}(C \ge Y \mid X = x, W = w)$ by closed form solution, numerical integral, or classifiers with response $\Delta$ and predictors $(X,W)$.
		\item {Substitution}. We construct $\hat{a}(x)$ and $\hat{\nu}(x)$ according to \eqref{eq:conditionCox3} based on $\hat{\eta}_0(x)$, $\hat{\eta}_1(x)$, $\hat{e}(x)$, and $\widehat{\PP}(C \ge Y \mid X = x, W = w)$.
	\end{enumerate} 
	2. On fold two, we plug in $\hat{a}(x)$ and $\hat{\nu}(x)$ to
	estimate $\tau(x)$ by maximizing the full likelihood
	\begin{align}\label{eq:MLECoxFull}
		\begin{split}
		\max_{\beta'} \ell(\beta'; \hat{a}(x), \hat{\nu}(x)) 
		:=&  \frac{1}{n}\sum_{i=1}^n \Delta_i \left(\hat{\nu}(X_i) + (W_i-\hat{a}(X_i))X_i^\top \beta'\right) \\
		&- \Lambda(Y_i^c)e^{\hat{\nu}(X_i) + (W_i-\hat{a}(X_i)) X_i^\top \beta'}. 
		\end{split}
	\end{align}
	
	3. We swap the folds and obtain another estimate. We average the two estimates and output.
\end{algorithm}

In Figure~\ref{fig:errorToy4}, we illustrate the efficacy of the proposed method applied to the Cox model with known baseline hazard. We consider the baseline hazard function $\lambda(y) = y$, and uniform censoring with $5\%$, $50\%$, $95\%$ censored units. The tilting functions $\eta_0(x)$, $\eta_1(x)$ are non-linear, and we approximate them by linear functions. We observe that regardless of the magnitude of censoring, the proposed method makes significant improvements over the separate estimation method. As the censoring gets heavier, $a(x)$ in \eqref{eq:conditionCox3} differs more from $e(x)$, and thus the proposed method outperforms the direct extension of R-learner by larger margins. 

\begin{figure}
	\centering
	\begin{minipage}{4.6cm}
		\centering  
		\includegraphics[scale=0.3]{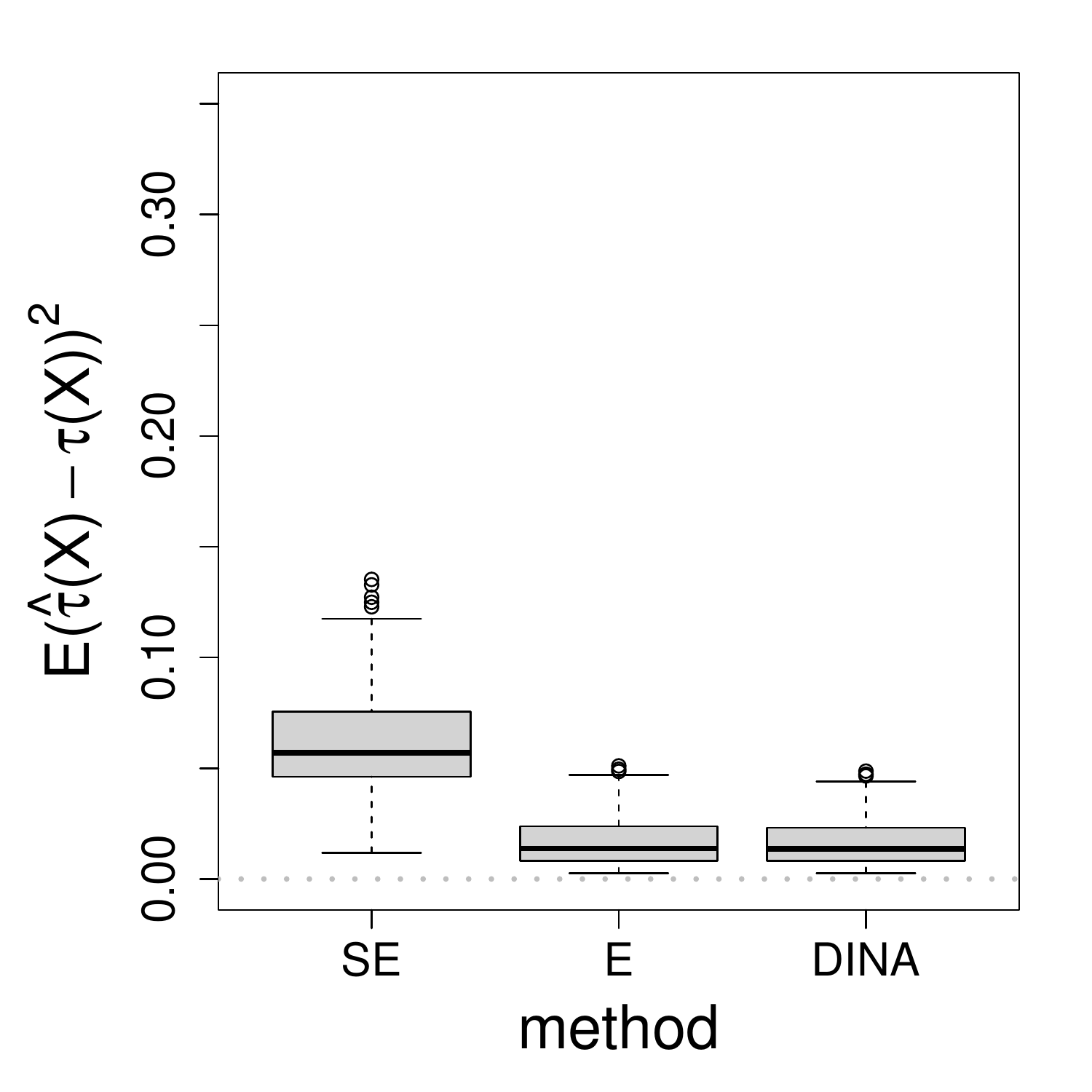}  
		{{\quad \quad (a) $5\%$ censored}}
	\end{minipage}
	\begin{minipage}{4.6cm}
		\centering  
		\includegraphics[scale=0.3]{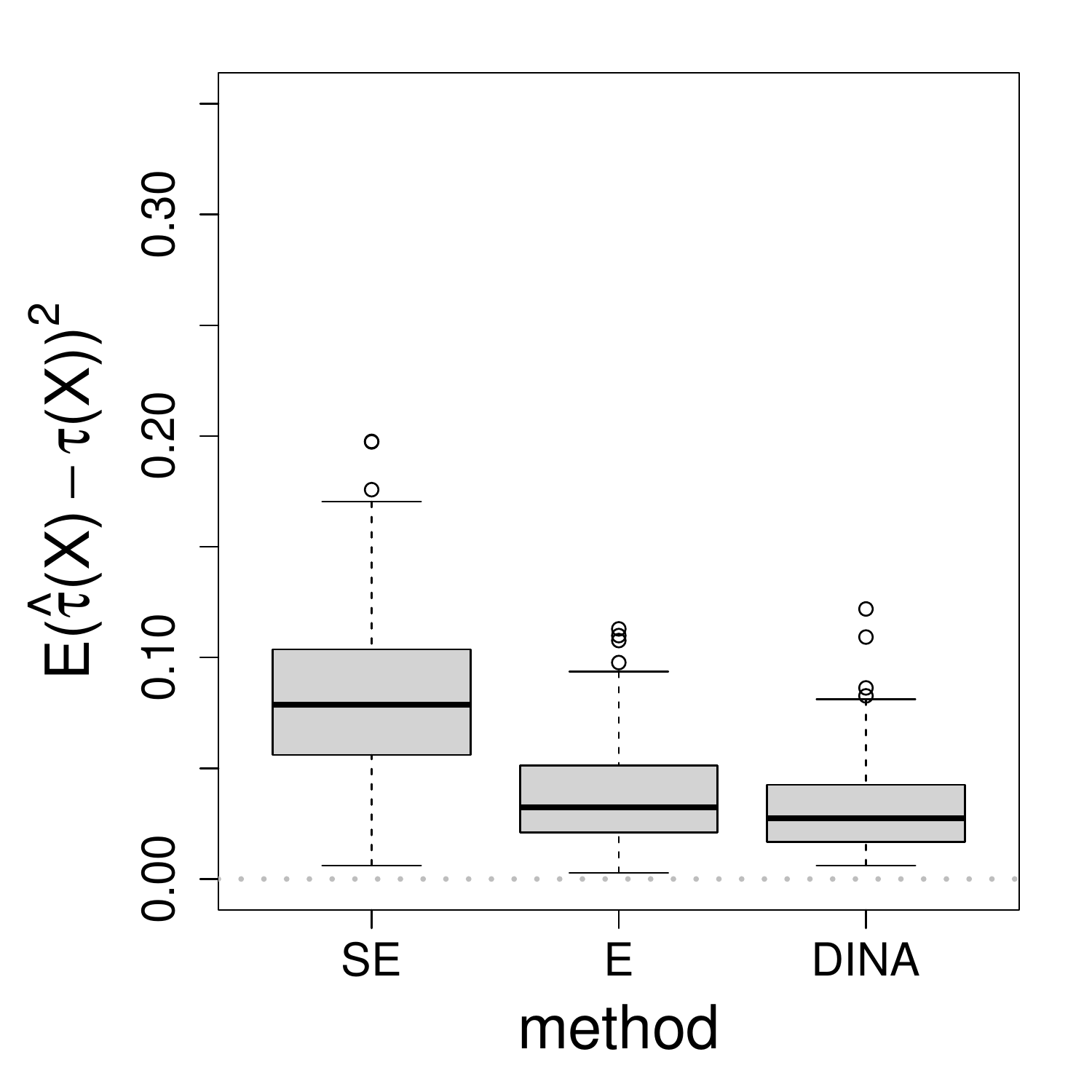}  
		{{\quad \quad (b) $50\%$ censored}}
	\end{minipage}
	\begin{minipage}{4.6cm}
		\centering  
		\includegraphics[scale=0.3]{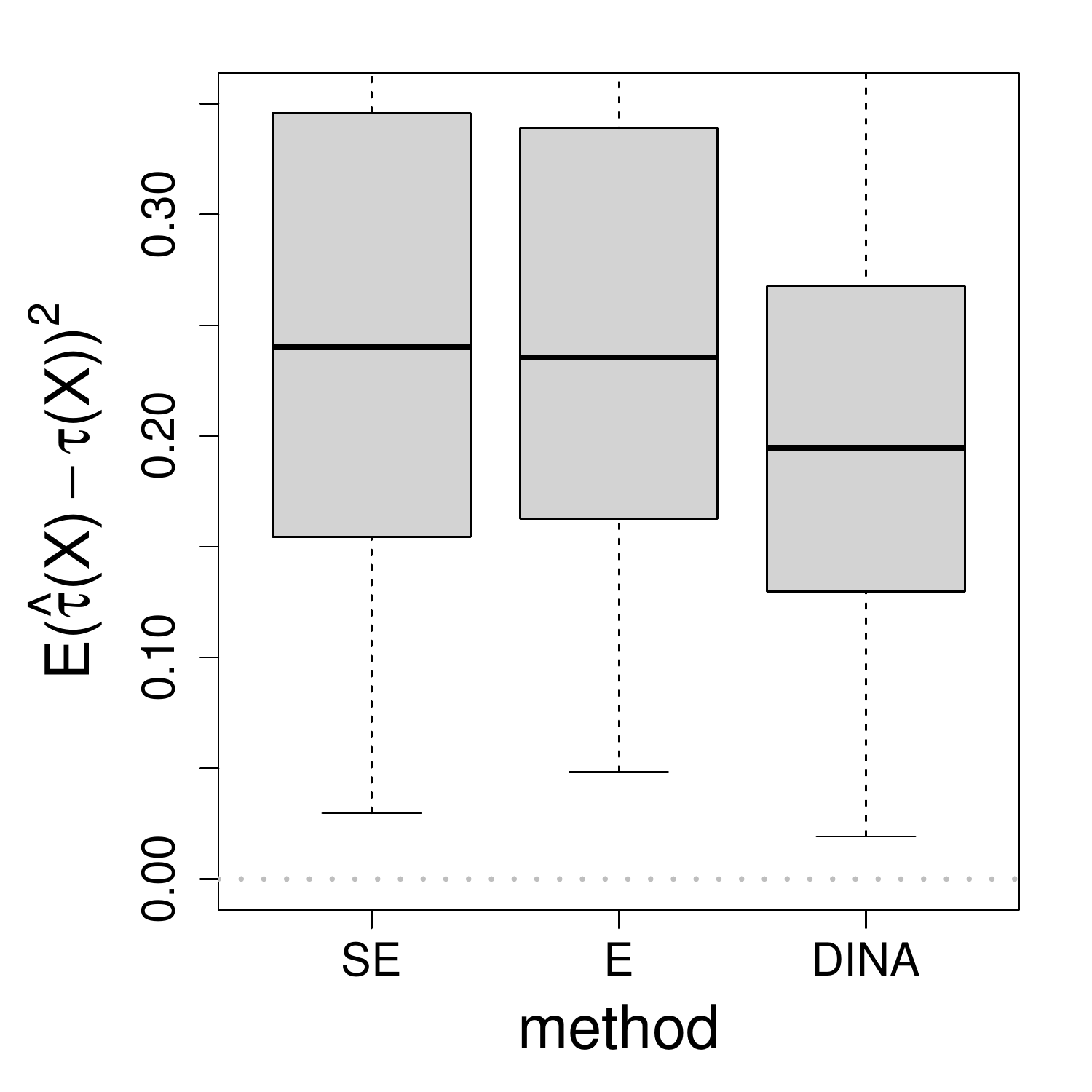}
		{{\quad (c) $95\%$ censored}}
	\end{minipage}
	\caption{{Estimation error box plots of separate estimation (SE), the direct extension of R-learner (E), and the proposed method (DINA) with the Cox model via full likelihood maximization (Algorithm~\ref{algo:coxFull}). From the left to right, $5\%$, $50\%$, $95\%$ units are censored. The baseline hazard is $\lambda(y) = y$ and is provided to the estimators. The tilting functions $\eta_0(x)$, $\eta_1(x)$ are non-linear, and we approximate them by linear functions. }}
	\label{fig:errorToy4}
\end{figure}

We discuss several special instances of \eqref{eq:conditionCox3}. In particular, we consider the subcases that the censoring times satisfy $C(0) \mid X \stackrel{d}{=} C(1) \mid X$, i.e., the conditional distributions of the censoring times are the same across the treatment and control arms. 

\begin{enumerate}
	\item {No treatment effect}. If there is no treatment effect, then the distributions of $Y$ and $C$ are conditionally independent of the treatment assignment indicator $W$. As a result, the probability ratio of not being censored is one and $a(x) = e(x)$ --- R-learner.
	\item {Light censoring}. If the proportion of censored units is small, the censoring time will be longer than the survival time despite the treatment effect, suggesting the ratio of not being censored to be close to one. Consequently, the case approximately reduces to R-learner.
	\item {Heavy censoring}. If the proportion of censored units is high, then as argued by \citet{lin2013bias}, almost all the information is contained in the indicator $\Delta$, and we can directly model the rare event --- ``not censored" by Poisson distribution. In this case, 
	\begin{align}\label{eq:conditionCox4}
		\begin{split}
			\frac{\PP(C \ge Y \mid W=1, X=x )}{\PP(C \ge Y \mid W=0, X=x )} \stackrel{\text{c.p.}\to 1\footnotemark}{\longrightarrow}  e^{\tau(x)},
		\end{split}
	\end{align}
	and plug \eqref{eq:conditionCox4} in \eqref{eq:conditionCox3}
	gives the $a(x)$ and  $\nu(x)$ corresponding to a Poisson distribution.
	\footnotetext{c.p. stands for censored probability.}
\end{enumerate}

To provide intuition for Algorithm~\ref{algo:coxFull}, we compare the nuisance-function construction \eqref{eq:conditionCox3} with that of the direct extension of R-learner and the DINA estimator for the exponential family (Algorithm \ref{algo:exponentialFamily}).
Different from R-learner,  the $a(x)$ in \eqref{eq:conditionCox3} also relies on the censoring probabilities under treatment and control. For a unit with covariate value $x$, the multiplier $(w-a(x))$ associated with the HTE in \eqref{eq:model:exponentialFamily2} is smaller for $w = 1$ compared to $w = 0$ if the unit tends to be censored under treatment. Consequently, the weight $a(x)$ emphasizes more the units not censored, which agrees with the common sense that censored units contain less information.
Compared with \eqref{eq:condition4}, \eqref{eq:conditionCox3} does not
include the derivatives $\frac{d\mu}{d\eta}(\eta_w(x))$ due to
different parameter scales (variances): in the Cox model, we focus on the tilting functions $\eta_w(x)$ instead of the natural parameters $e^{\eta_w(x)}$ in the exponential family.

We now probe into the implementation details in
Algorithm~\ref{algo:coxFull}. Algorithm~\ref{algo:coxFull} is
reminiscent of Algorithm~\ref{algo:exponentialFamily} except in the
estimation of $a(x)$, and in particular, the estimation of the censoring probabilities. We list several common random censoring mechanisms and the associated probabilities of ``not censored'' in Table \ref{tab:example2}. If the censoring mechanism is known, for example, all the units are enrolled simultaneously and operate for the same amount of time (singly-censored), then we can compute the censoring probabilities in closed form or via numerical integration given rough nuisance-function estimators $\hat{\eta}_0(x)$, $\hat{\eta}_1(x)$. If the censoring mechanism is unknown, we can directly estimate the censoring probabilities from the data: solving the classification problem with the response $\Delta$, predictors $(X,W)$ by any classifier that comes with predicted class probabilities. 
If the censoring is extremely low or high, we can turn to the aforementioned special instances.

\begin{table}
	\caption{\label{tab:example2}{Examples of censoring mechanisms and the associated probabilities of ``not censored".}}	
	\centering
	\scriptsize
	\hspace{-0cm}
	\fbox{%
	\begin{tabular}{c|c|c|c|c}
		\multicolumn{2}{c|}{Censoring}  & Parameters           & Density of censoring & Probability of ``not censored"\\
		\multicolumn{2}{c|}{mechanism}     &      &   times  $f_C(c)$    & $\PP(C \ge Y \mid W = w, X = x)$             \\ \hline		\multicolumn{2}{c|}{No censoring}        & -   &    $0$    &    $1$         \\ \hline\multicolumn{2}{c|}{Singly-censored}        & $T$    &    $\delta_{\{c = T\}}$     &    $1 - e^{-\Lambda(T) \eta_w(x)}$         \\ \hline
		\multirow{2}{*}{\shortstack{Multiply- \\ censored}} & Uniform   &$T$  &  $1/T$  &     $1 - \frac{1}{T}\int_0^T e^{-\Lambda(c)\eta_w(x)}  dc$   \\ \cline{2-5} 
		& Weibull   &$r$, $k$  & $k r (r c)^{k-1} e^{-(r c)^k} $      & $1 - k r\int_0^\infty e^{-\Lambda(c)\eta_w(x)}  (r c)^{k-1} e^{-(r c)^k} dc$           
	\end{tabular}}
\end{table}

In the following proposition, we extend the robustness property of Algorithm~\ref{algo:exponentialFamily} in Proposition~\ref{prop:main} to Algorithm~\ref{algo:coxFull}.
\begin{proposition}
	\label{prop:mainCoxFull}
	Under the regularity conditions of Proposition~\ref{prop:main}, and assume the baseline hazard function $\lambda(y)$ is known, $\|{a}_n(x) - a(x)\|_2$, $\|{\nu}_n(x) - \nu(x)\|_2 = O(c_n)$, $c_n \to 0$, then
	\begin{align*}
		\|{\beta}_n - \beta\|_2 = \tilde{O}\left(c_n^2 + n^{-1/2}\right)
	\end{align*}
	for Algorithm~\ref{algo:coxFull}.
\end{proposition}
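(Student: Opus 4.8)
The plan is to recast Algorithm~\ref{algo:coxFull} as an instance of the orthogonal-score framework already underlying Proposition~\ref{prop:main}, so that the only genuinely new work is a martingale identity specific to the censored Cox likelihood. First I would invoke Claim~\ref{claim:coxExp}: since the baseline hazard $\lambda(y)$, and hence $\Lambda(y)$, is known, the transformed responses $\Lambda(Y(w)) \mid X=x$ are exponential with rate $e^{\eta_w(x)}$. Under the reparametrization \eqref{eq:model:exponentialFamily2}, with $\eta_w(x)=\nu(x)+(w-a(x))x^\top\beta$, the objective \eqref{eq:MLECoxFull} is exactly the censored-exponential log-likelihood, and its per-observation score in $\beta$ is
\[
s_i(\beta; a, \nu) = \left(\Delta_i - \Lambda(Y_i^c)\, e^{\nu(X_i) + (W_i - a(X_i)) X_i^\top \beta}\right)(W_i - a(X_i)) X_i .
\]
The expected score is the moment function $s(a(x),\nu(x),\beta)$ referenced in the regularity conditions.

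The crux is to show this score is valid (mean zero at the truth) and Neyman-orthogonal in the nuisances. Writing $T=\Lambda(Y)$ (exponential, rate $\theta=e^{\eta_W(X)}$) and $S=\Lambda(C)$, which are conditionally independent given $(X,W)$ by Assumption~\ref{assu:censor}, an elementary computation gives $\EE[\Delta\mid S]=1-e^{-\theta S}$ and $\EE[\min\{T,S\}\mid S]=\theta^{-1}(1-e^{-\theta S})$ (conditional on $(X,W)$); since $\Lambda(Y^c)=\min\{T,S\}$, multiplying the second by $\theta$ yields the martingale identity
\[
\EE\left[\Delta - \Lambda(Y^c)\, e^{\eta_W(X)} \,\middle|\, X, W\right] = 0 ,
\]
so the expected score vanishes at $(\beta_0,a_0,\nu_0)$. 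For orthogonality I would take the Gateaux derivative of the expected score in nuisance directions $(\delta_a,\delta_\nu)$. With the plugged-in parameter $\tilde\eta_w(x)=\nu(x)+(w-a(x))x^\top\beta_0$ and true $\eta_w(x)$, the factor multiplying $(W-a(X))X$ is $p_W(X)\big(1-e^{\tilde\eta_W(X)-\eta_W(X)}\big)$, where $p_w(x)=\PP(C\ge Y\mid W=w,X=x)$; because this factor is zero at the truth, the only surviving first-order term is proportional to $\EE_W[p_W(X)(W-a_0(X))\mid X]$. The definition of $a_0(x)$ in \eqref{eq:conditionCox3} is precisely the value making this conditional expectation vanish, since the two cross terms $e(x)p_1(x)(1-a_0(x))$ and $(1-e(x))p_0(x)a_0(x)$ coincide, both equaling $e(x)p_1(x)(1-e(x))p_0(x)/\big(e(x)p_1(x)+(1-e(x))p_0(x)\big)$. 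Hence the score is orthogonal.

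With orthogonality established, the remainder is identical to the proof of Proposition~\ref{prop:main}. I would expand the empirical moment equation $0=\tfrac1n\sum_i s_i(\beta_n;\hat a,\hat\nu)$ around $(\beta_0,a_0,\nu_0)$. Cross-fitting makes $(\hat a,\hat\nu)$ independent of the fold on which the score is evaluated, so the term linear in the nuisance error has conditional mean equal to the Gateaux derivative above and therefore vanishes, leaving a quadratic remainder of order $\|\hat a-a_0\|_2^2+\|\hat\nu-\nu_0\|_2^2=O(c_n^2)$ together with the empirical fluctuation of order $n^{-1/2}$. Regularity condition~(1) (boundedness, with $\Lambda(Y^c)$ a bounded known offset) controls the remainder by $L^2$ norms, and condition~(2), the lower bound $C$ on the minimal eigenvalue of $\nabla_\beta s$, inverts the moment bound into $\|\beta_n-\beta\|_2=\tilde{O}(c_n^2+n^{-1/2})$.

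The main obstacle is the second paragraph: establishing the censored martingale identity and verifying that the \emph{censoring-adjusted} weight $a(x)$ of \eqref{eq:conditionCox3}, rather than the plain propensity $e(x)$, is exactly what restores orthogonality. The appearance of the pair $(\Delta,\Lambda(Y^c))$ in place of a single exponential-family sufficient statistic is what distinguishes this from Proposition~\ref{prop:main}; once the identity $\EE[\Delta-\Lambda(Y^c)e^{\eta_W}\mid X,W]=0$ and the balancing property of $a_0$ are in place, the hypothesis that $\lambda(y)$ is known, keeping $\Lambda(Y^c)$ a fixed offset, lets every other step be imported verbatim from the exponential-family proof.
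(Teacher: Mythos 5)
Your proof is correct and follows essentially the same route as the paper: verify that the full-likelihood score is Neyman-orthogonal at the $(a(x),\nu(x))$ of \eqref{eq:conditionCox3}, then import the Taylor-expansion/CLT/eigenvalue machinery of Proposition~\ref{prop:main} verbatim. Your identity $\EE[\Delta-\Lambda(Y^c)e^{\eta_W(X)}\mid X,W]=0$, derived by conditioning on $\Lambda(C)$, is exactly the content of the paper's Lemma~\ref{lemm:equationCox} (there obtained by direct integration), and your balancing computation showing $e(x)p_1(x)(1-a_0(x))=(1-e(x))p_0(x)a_0(x)$ is the same cancellation the paper uses to make $\nabla_\rho s$ and $\nabla_\alpha s$ vanish.
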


To end the section, we discuss how to carry out the estimation when the baseline hazard $\Lambda(y)$ is inaccessible. One option is to start with estimating the baseline hazard, and plug it in for subsequent procedures. Another option is using the partial likelihood that cleverly avoids the baseline hazard (see Section~\ref{subsec:partial} for more details). We remark that the proposed method is not provably robust to the baseline hazard misspecification.

\subsection{Partial likelihood}\label{subsec:partial}

Instead of focusing on the full likelihood, \citet{cox1972regression} proposes to maximize the partial likelihood. The partial likelihood is prevalent in practice because it does not require the baseline hazard function and preserves promising statistical properties \citep{tsiatis1981large,andersen1982cox}. 

\citet{van2000asymptotic} shows that the partial likelihood $\text{pl}_n$ can be obtained from the full likelihood \eqref{eq:cox:fulllike} by profiling out the baseline hazard 
\begin{align*}
	\text{pl}_n(\eta_w(x)) = \sup_{\{\Lambda_i \}} \ell_n(\eta_w(x), \{\Lambda_i\}), \quad w \in \{0,1\},
\end{align*}
where $\Lambda_i$ denotes the cumulating hazard at $Y_i^c$ if the subject $i$ is not censored. Let $\hat{\Lambda}_i$ be the baseline hazard estimators associated with the partial likelihood maximization. As a corollary, the partial likelihood maximizer is equivalent to that of the full likelihood with $\hat{\Lambda}_i$. The connection 
motivates Algorithm~\ref{algo:coxPartial} --- an application of Algorithm~\ref{algo:coxFull} with the true hazard baseline function replaced by $\hat{\Lambda}_i$ from the partial likelihood maximization.

\begin{algorithm}
	\DontPrintSemicolon  
	\SetAlgoLined
	\BlankLine
	\caption{Cox model with partial likelihood}\label{algo:coxPartial}
	We split the data randomly equally into two folds.\\
	1. On fold one, 
	\begin{enumerate}
		\item {Estimation of $e(x)$}. The same as Algorithm~\ref{algo:exponentialFamily}.
		\item {Estimation of $\eta_0(x)$, $\eta_1(x)$}. We maximize the partial likelihoods on the control and treatment group respectively, and obtain estimators $\hat{\eta}_0(x)$, $\hat{\eta}_1(x)$.
		\item {Estimation of the probability of not  censored}. We estimate the not censored probabilities $\widehat{\PP}(C \ge Y \mid X = x, W = w)$ by closed form solution, numerical integral, or classifiers with response $\Delta$ and predictors $(X,W)$;
		\item {Substitution}. We construct $\hat{a}(x)$
		and $\hat{\nu}(x)$ according to
		\eqref{eq:conditionCox3}  based on $\hat{\eta}_0(x)$, $\hat{\eta}_1(x)$, $\hat{e}(x)$, and $\widehat{\PP}(C \ge Y \mid X = x, W = w)$.
	\end{enumerate} 
	2. On fold two, we plug in $\hat{a}(x)$ and $\hat{\nu}(x)$ to estimate $\tau(x)$ by maximizing the partial log-likelihood,
	\begin{align}\label{eq:MLECoxPartial}
		\begin{split}
			\min_{\beta'}\text{pl}_n(\beta'; \hat{a}(x), \hat{\nu}(x)) := \frac{1}{n}\sum_{\Delta_i = 1} \left(\hat{\nu}(X_i) + (W_i-\hat{a}(X_i))X_i^\top \beta' \right.\\
			\left.- \log\left(\sum_{\calR_i} e^{\hat{\nu}(X_i) + (W_i-\hat{a}(X_i))X_i^\top \beta'}\right) \right),
		\end{split}
	\end{align}	
	where $\calR_i = \{j: Y_j^c \ge Y_i^c\}$ denotes the risk set of subject $i$.
	
	3. We swap the folds and obtain another estimate. We average the two estimates and output.
\end{algorithm}

We first numerically compare the estimators based on the full likelihood (given the true baseline hazard function) and the partial likelihood. We consider three  cumulative baseline hazard functions $\Lambda(y) = y$, $y^2$, and $y^5$, corresponding to Weibull distributions with shape parameters $1$, $2$, and $5$. We adopt the uniform censoring and fix the proportion of censored units at $50\%$. The baseline parameters, the separate estimation method, and the proposed method based on the full likelihood are the same as Figure~\ref{fig:errorToy4}. 
In Figure~\ref{fig:errorToy5}, the partial likelihood based estimator performs comparatively to that based on the full likelihood despite the form of the baseline hazard function, and significantly improves the separate estimation method.

\begin{figure}
	\centering
	\begin{minipage}{4.6cm}
		\centering  
		\includegraphics[scale=0.3]{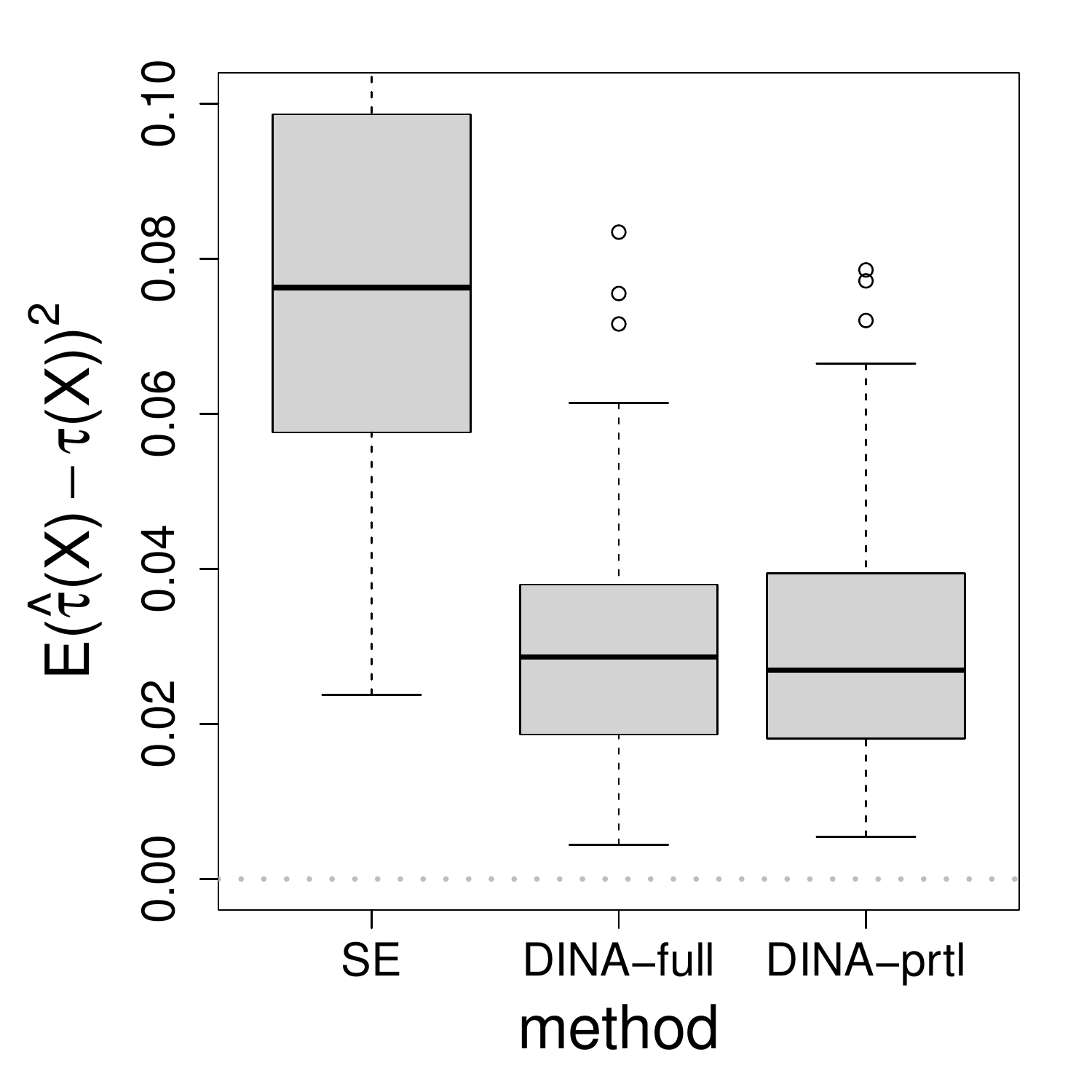}  
		{{\quad \quad (a) Weibull shape $k=1$}}
	\end{minipage}
	\begin{minipage}{4.6cm}
		\centering  
		\includegraphics[scale=0.3]{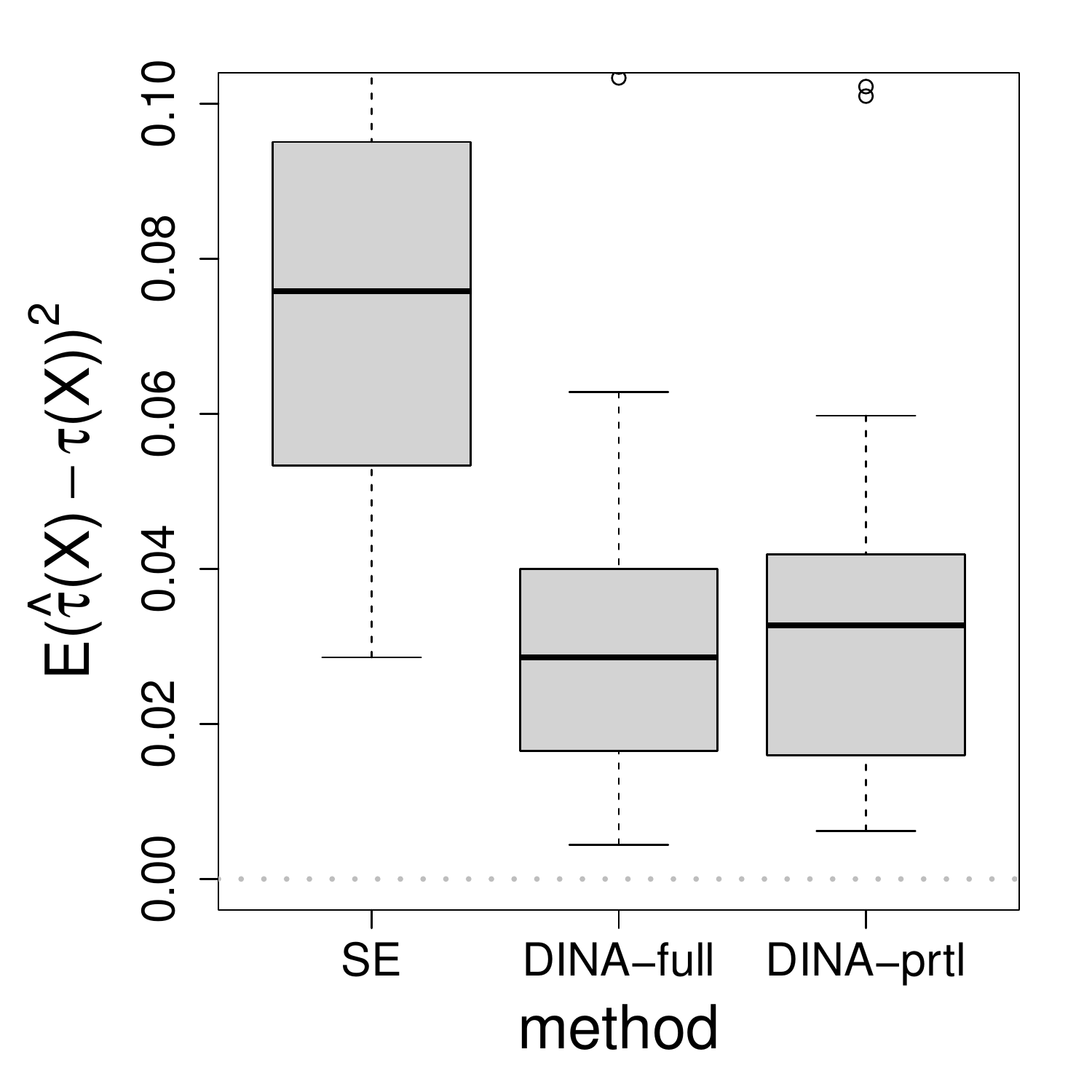}  
		{{\quad \quad (b) Weibull shape $k=2$}}
	\end{minipage}
	\begin{minipage}{4.6cm}
		\centering  
		\includegraphics[scale=0.3]{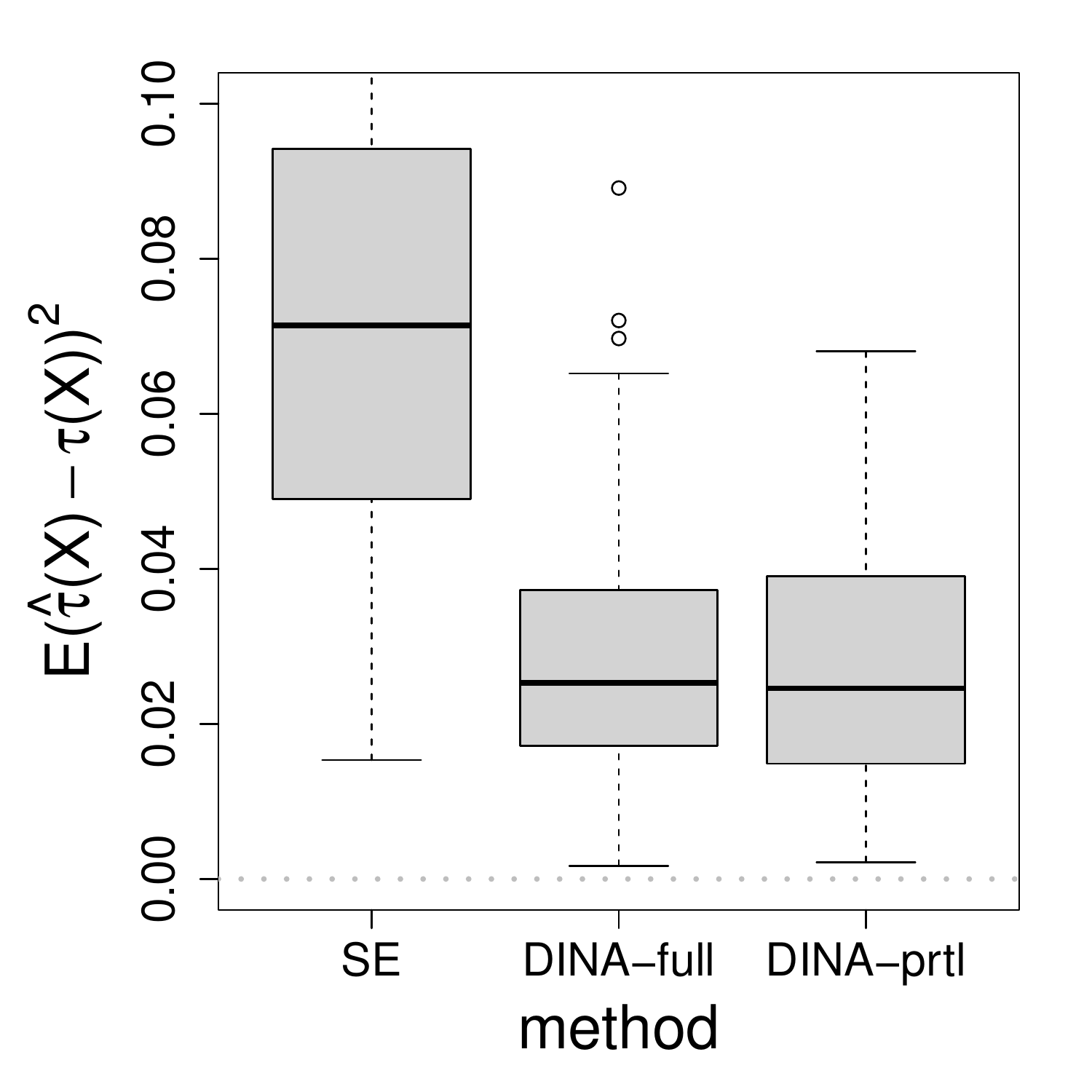}
		{{\quad (c) Weibull shape $k=5$}}
	\end{minipage}
	\caption{{Estimation error box plots of separate estimation (SE), the proposed method with full likelihood (DINA-full), and partial likelihood (DINA-prtl). From the left to right, the cumulative baseline hazard $\Lambda(y) = y$, $y^2$, and $y^5$, corresponding to Weibull distributions with shape parameters $1$, $2$, and $5$. We adopt the uniform censoring and fix the proportion of censored units at $50\%$. The baseline parameters, separate estimation method, and the proposed method with full likelihood are the same as Figure~\ref{fig:errorToy4}. 
	}}
	\label{fig:errorToy5}
\end{figure}

We investigate the theoretical properties of Algorithm~\ref{algo:coxPartial}.
Under the null hypothesis, i.e., no treatment effect, the robustness in Proposition~\ref{prop:main} is valid.

\begin{proposition}
	\label{claim:cox:null}	
	Under the regularity conditions of Proposition~\ref{prop:main}, and assume there is no treatment effect, $\|{a}_n(x) - a(x)\|_2$, $\|{\nu}_n(x) - \nu(x)\|_2 = O(c_n)$, $c_n \to 0$, then
	\begin{align*}
		\|{\beta}_n - \beta\|_2 = \tilde{O}\left(c_n^2 + n^{-1/2}\right)
	\end{align*}
	for Algorithm~\ref{algo:coxPartial}.
\end{proposition}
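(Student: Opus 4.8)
The plan is to route Algorithm~\ref{algo:coxPartial} through the same Neyman-orthogonality argument that drives Propositions~\ref{prop:main} and~\ref{prop:mainCoxFull}, using the equivalence recorded just above the statement: the partial-likelihood maximizer equals the full-likelihood maximizer once the baseline hazard is replaced by the profiled-out $\hat\Lambda_i$. Let $s_n(\beta; a, \nu)$ denote the score obtained by differentiating \eqref{eq:MLECoxPartial} in $\beta'$, so that $\beta_n$ solves $s_n(\beta_n; \hat a, \hat\nu) = 0$ while the null value $\beta = 0$ solves $\EE[s(\beta; a, \nu)] = 0$. Following the template of Proposition~\ref{prop:main}, I would expand the estimating equation about $(\beta, a, \nu)$, invert the score derivative using the eigenvalue lower bound (regularity condition~2), and split the resulting error into a nuisance-induced bias and a stochastic term; the rate $\tilde O(c_n^2 + n^{-1/2})$ then follows once the first-order nuisance sensitivity is shown to vanish.

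First I would verify that the population score vanishes at the truth. Differentiating \eqref{eq:MLECoxPartial} yields, for each event subject $i$, the difference between its regressor $(W_i - a(X_i))X_i$ and the risk-set weighted average with weights proportional to $e^{\nu(X_j) + (W_j - a(X_j))X_j^\top\beta}$. Under the null $\tau \equiv 0$ we have $\eta_1 = \eta_0 =: \eta$, hence $\nu = \eta$ by \eqref{eq:conditionCox3}, so these weights collapse to $e^{\eta(X_j)}$, which are exactly the hazard-proportional weights of the true Cox model (Claim~\ref{claim:coxExp}). The Andersen--Gill counting-process representation then gives $\EE[s(0; a, \nu)] = 0$, as the risk-set average coincides with the conditional expectation of the regressor among those at risk.

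The crux is \textbf{Neyman orthogonality of this score in $(a, \nu)$}, and this is where the null is essential. Under the null the censoring probabilities in \eqref{eq:conditionCox3} are equal across arms, so their ratio is one and $a(x) = e(x)$, collapsing the modified propensity to the ordinary one (special instance~(1) above). Two orthogonality facts then follow. Since treatment has no effect, $W \indep (Y, C) \mid X$, so conditional on $X$ the event that subject $j$ fails from its risk set is independent of $W_j$; hence $(W - e(X))X$ has mean zero under the hazard-weighted risk-set distribution, and a perturbation $\delta(X)$ of the offset $\nu$ produces a weighted covariance $\cov_w((W - e(X))X, \delta(X))$ that vanishes. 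Likewise a perturbation $\xi(X)$ of $a$ enters, at $\beta = 0$, only through the regressor as a covariate function, which the partial likelihood automatically centers by its risk-set average, so its expectation is zero at the truth by the same martingale property. For $\beta \ne 0$ both mechanisms break down: the censoring ratio becomes $e^{\tau(x)} \ne 1$ so $a \ne e$, and the offset perturbation no longer cancels because the risk-set weights now carry the treatment contrast. This is exactly why the proposition is confined to the null, consistent with the earlier remark that robustness to baseline-hazard misspecification is not provable in general.

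The remaining steps are routine. With orthogonality in hand the nuisance-induced bias is genuinely second order, $O(c_n^2)$, while the empirical score concentrates at rate $n^{-1/2}$ about its population value, where cross-fitting removes any need for a Donsker condition on the machine-learning estimators $\hat a$, $\hat\nu$ and the plugged-in censoring probabilities feeding \eqref{eq:conditionCox3}. Inverting the score derivative then delivers $\|\beta_n - \beta\|_2 = \tilde O(c_n^2 + n^{-1/2})$. \textbf{The main obstacle} is the log-sum-exp risk-set term: unlike the additive averages in Proposition~\ref{prop:main}, it is a ratio of sums over data-dependent risk sets, so both the orthogonality verification and the stochastic linearization require a martingale / empirical-process treatment rather than a plain law of large numbers. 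The extra bookkeeping from the censoring nuisance embedded in $a(x)$ is secondary but must be carried through the same computation.
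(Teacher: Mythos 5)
Your proposal is correct and is essentially the paper's own argument: reduce the claim to Neyman orthogonality of the asymptotic partial-likelihood score at the null, where $\tau \equiv 0$ makes $a(x) = e(x)$ and the risk-set weights $e^{\nu(X)}$ free of $W$, and then reuse the Taylor-expansion and score-inversion machinery of Proposition~\ref{prop:main}. The one real difference is the vehicle for the orthogonality check: the paper differentiates Tsiatis's asymptotic score representation \eqref{eq:proof:cox:score} and kills each term by explicit integral identities from Lemma~\ref{lemm:equationCox} (in particular, $\EE[\Lambda(Y^c) \mid W, X] = e^{-\eta_W(X)}\PP(C \ge Y \mid W, X)$ is what makes the $a$-perturbation term cancel $-\EE[\Delta \xi(X) X]$), whereas you appeal to counting-process/martingale centering: at $\beta = 0$ an $a$-perturbation enters only through the regressor, and a partial-likelihood score whose regressor is a pure function of $X$ has mean zero at the truth. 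These are the same cancellations in different clothing, so nothing is lost, and your framing makes the role of the null (the weights carry the treatment contrast once $\beta \neq 0$) somewhat more transparent. One caveat you share with the paper: the step ``no treatment effect $\implies a(x) = e(x)$'' additionally uses that the counterfactual censoring laws agree across arms, i.e.\ $C(0) \mid X \stackrel{d}{=} C(1) \mid X$, which is the standing subcase of the paper's special-instance discussion.
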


For non-zero treatment effects, Proposition~\ref{claim:cox:null} is in general not true. 
Despite the lack of theoretical guarantee, Algorithm~\ref{algo:coxPartial} produces promising results over simulated datasets. In Figure~\ref{fig:errorToy5}, the partial-likelihood loss in performance is slight or even negligible compared to Algorithm~\ref{algo:coxFull}, and
it requires no baseline hazards knowledge --- one of the fundamental merits of Cox model. Section~\ref{sec:simulation} contains more empirical examples of Algorithm~\ref{algo:coxPartial}. Therefore, in many applications where baseline hazards are unavailable, we recommend Algorithm~\ref{algo:coxPartial}.

\section{Simulation}\label{sec:simulation} 

In this section, we demonstrate the efficacy of the proposed method using simulated datasets.

\subsection{Exponential family}\label{subsec:exponentialFamily}
We compare the following five meta-algorithms.
\begin{enumerate}
	\item {Separate estimation (SE-learner)}. The separate estimation method estimates the control and treatment group mean functions $\hat{\eta}_0(x)$, $\hat{\eta}_1(x)$, takes the difference $\hat{\eta}_1(x) - \hat{\eta}_0(x)$, and further regresses the difference on the covariates to obtain $\hat{\beta}$. 
	The nuisance functions are the control and treatment group mean functions $\eta_0(x)$, $\eta_1(x)$ and the method does not require the propensity score.
	
	\item {X-learner (X-learner)}. 
	X-learner first estimates the control group mean function $\hat{\eta}_0(x)$, and then estimates $\hat{\beta}$ by solving a generalized linear model with $\hat{\eta}_0(x)$ as the offset.
	The nuisance functions is the control group mean function $\eta_0(x)$ and the method does not require the propensity score.
	
	\item {Propensity score adjusted X-learner (PA-X-learner)}. Motivated by a thread of works \citep{vansteelandt2014regression, dorie2019automated, hahn2020bayesian} including estimated propensity scores as a covariate, we consider an augmented X-learner where the control group mean function is learnt as a function of raw covariates, an estimated propensity score, and the interaction between. The rest of the approach is the same as X-learner above.
	The nuisance functions is the control group mean function $\eta_0(x)$ and the propensity score $e(x)$.	
	
	\item {Direct extension of R-learner (E-learner)}.  The direct extension of R-learner considers $a(x) = e(x)$ and the associated baseline $m(x) = (1-e(x))\eta_0(x) + e(x) \eta_1(x)$
	for arbitrary response types. The rest is the same as Algorithm~\ref{algo:exponentialFamily}.
	The nuisance parameters are $e(x)$ and $m(x)$.  
	
	We remark that the above extension to the natural parameter scale is different from the original R-learner which focuses on the difference in conditional means despite the type of responses.  
	\item {The proposed method (DINA-learner)}. We apply Algorithm~\ref{algo:exponentialFamily} with the $a(x)$, $\nu(x)$ in \eqref{eq:condition4}. For Gaussian responses, the proposed method and R-learner are the same; for other distributions, the two are different.
\end{enumerate}
In the simulations below, we obtain $\hat{e}(x)$ by logistic regression and $\hat{\eta}_0(x)$, $\hat{\eta}_1(x)$ by fitting generalized linear models. Results of estimating $\hat{\eta}_0(x)$, $\hat{\eta}_1(x)$ by tree boosting are available in Figure \ref{fig:errorBoosting} in the appendix. 

As for data generating mechanism, we consider $d  = 5$ covariates independently generated from uniform $[-1,1]$.  The treatment assignment follows a logistic regression model. The responses are sampled from the exponential family with natural parameter functions
\begin{align}\label{eq:simulationModel}
	\begin{split}
		\begin{cases}
			\eta_0(x) = x^\top\alpha + \delta x_1 x_2, \\
			\eta_1(x) =  x^\top(\alpha + \beta) + \delta x_1 x_2,
		\end{cases} 
	\end{split}
\end{align}
for some $\delta \neq 0$.
In both treatment and control groups, the response models are misspecified generalized linear models, while the difference of the natural parameters $\tau(x)$ is always linear. We consider continuous, binary, and discrete responses generated from Gaussian, Bernoulli, and Poisson distributions, respectively. 
We quantify the signal magnitude by the following signal-to-noise ratio (SNR)
\begin{align*}
	\frac{\var(\EE[Y\mid X,W])}{\EE[\var(Y \mid X,W)]}.
\end{align*}
SNRs of all simulation settings are approximately $0.5$.

We measure the estimation performance by the mean squared error $\EE[(\hat{\tau}(X) - \tau(X))^2]$, where the expectation is taken over the covariate population distribution. Results are summarized in Figure~\ref{fig:error}. 
Across three types of responses, our proposed method (DINA), direct extension of R-learner (E), and propensity score adjusted X-learner (PA-X) performs relatively better than X-learner (X) and separate estimation (SE). 
Among the three well-performed methods, our proposed method approximately achieves the parametric convergence rate $O(n^{-1/2})$ and is more favorable for count data. 
As for X-learner and separate estimation, the errors stop decreasing as the sample size increases due to the non-vanishing bias.

\begin{figure}
	\centering
	\begin{minipage}{7cm}
		\centering  
		{(a) {continuous}}
		\includegraphics[scale=0.45]{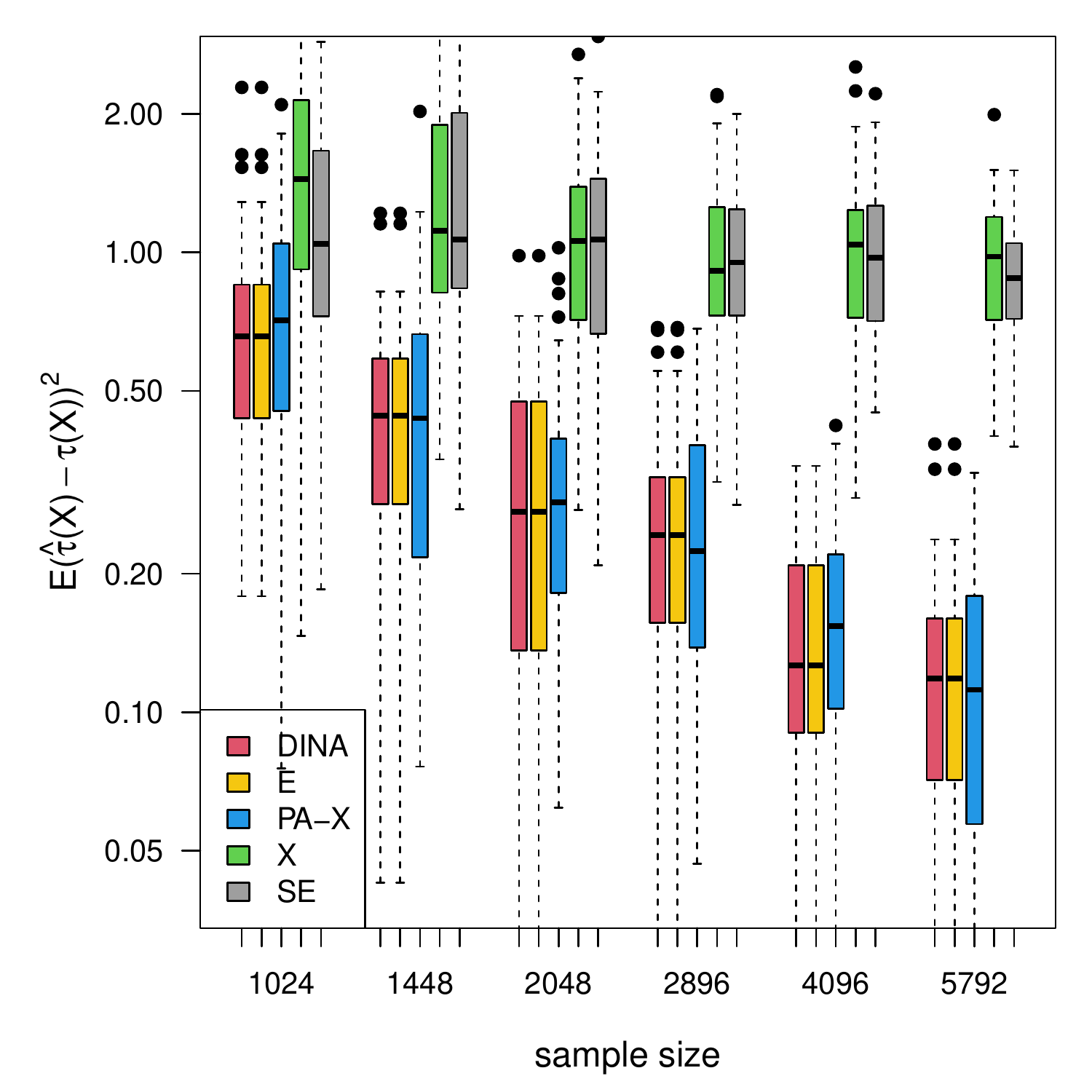}  
	\end{minipage}
	\begin{minipage}{7cm}
		\centering  
		{(b) {binary}}
		\includegraphics[scale=0.45]{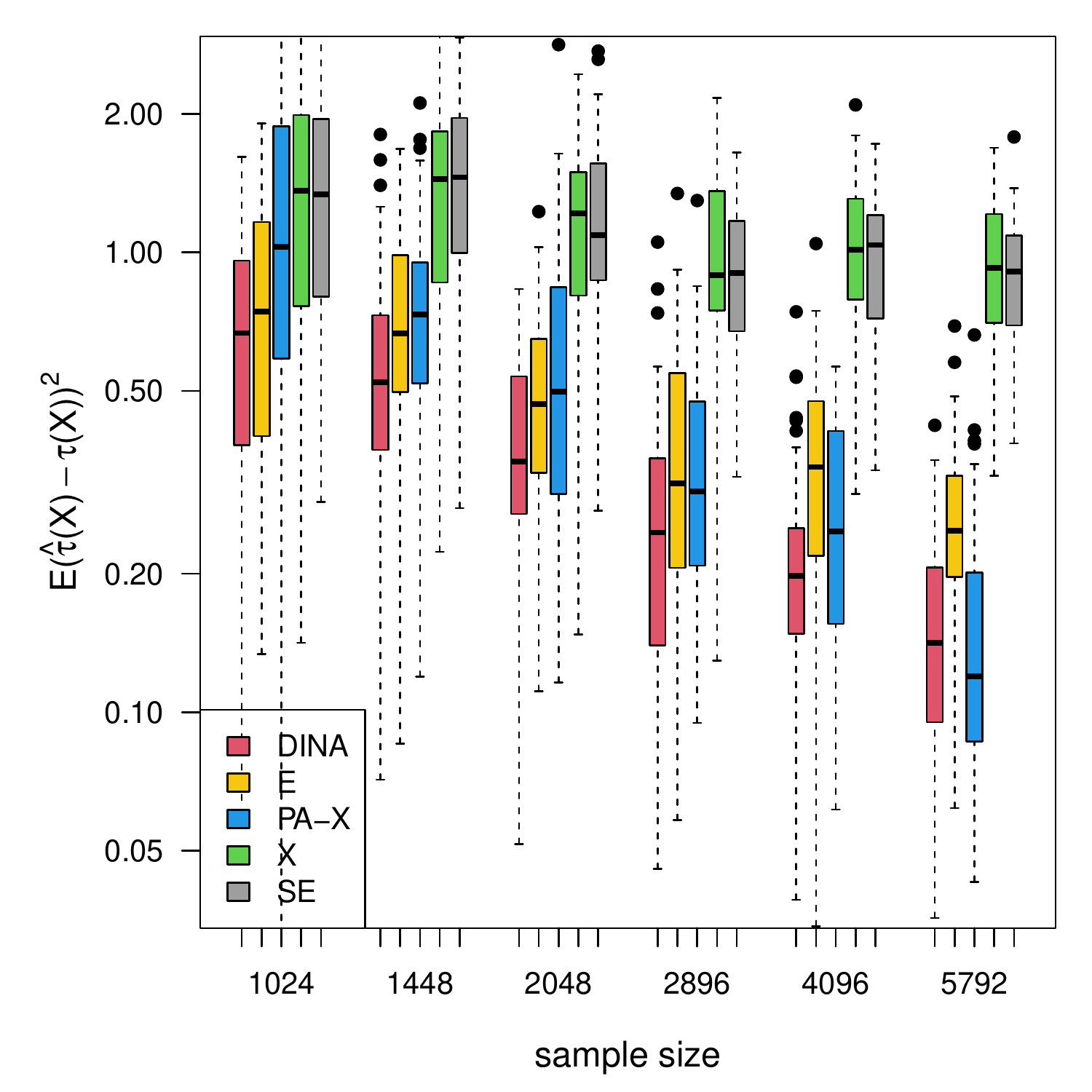}  
	\end{minipage}\\
	\begin{minipage}{7cm}
		\centering  
		{(c) {count data}}
		\includegraphics[scale=0.45]{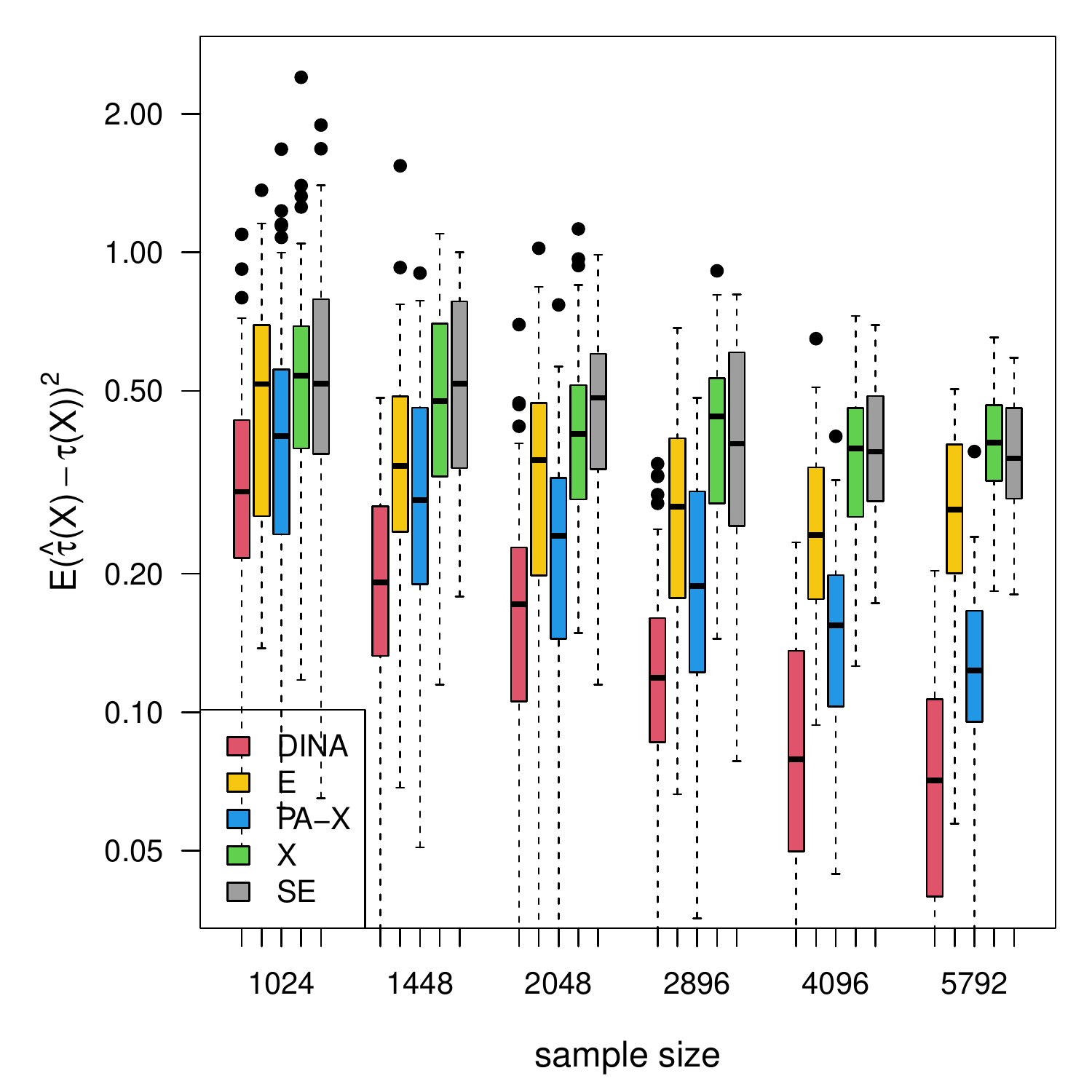}  
	\end{minipage}
	\begin{minipage}{7cm}
		\centering  
		{(d) {survival data}}
		\includegraphics[scale=0.45]{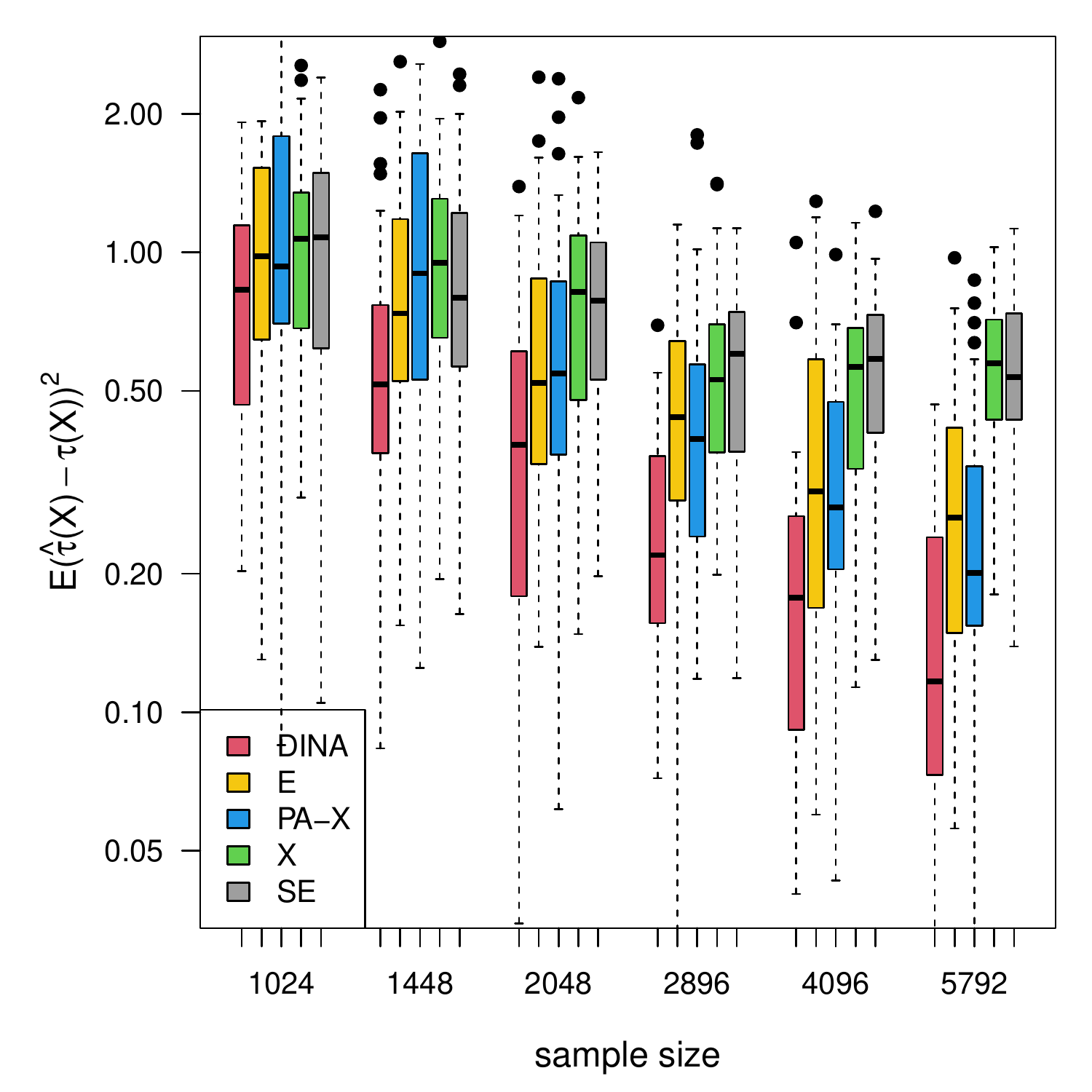}  
	\end{minipage}
	\caption{{Estimation error log-log boxplots. We display
			the estimation errors $\EE[(\hat{\tau}(X) - \tau(X))^2]$
			over sample sizes in $[1024, 1448, 2048, 2896, 4096, 5792]$. We compare five
			methods: separate estimation (SE), X-learner (X),  propensity score adjusted X-learner (PA-X), direct extension of R-learner (E),  and Algorithm~\ref{algo:exponentialFamily}
			(DINA). We adopt the response model~\eqref{eq:simulationModel} and consider four types of responses: (a) continuous (Gaussian); (b) binary (Bernoulli); (c) count data (Poisson); (d) survival data (Cox model with uniform censoring, $75\%$ units censored). We estimate the propensity score by logistic regression, and estimate the natural parameters by fitting generalized linear models or maximizing Cox partial likelihoods. We repeat all experiments $100$ times.
	}}
	\label{fig:error}
\end{figure}

We use bootstrap ($100$ bootstrap samples) to construct confidence intervals for $\hat{\beta}$. 
We remark that because of sampling with replacement, different folds of data from a bootstrap sample may overlap and the associated $\hat{\beta}^b$ may be seriously biased.
However, $\hat{\beta}^b$ from bootstrap samples can still be used to estimate the standard deviation of $\hat{\beta}$.
Table \ref{tab:poissonGLMCI} demonstrates the coverage and width of the $95\%$ confidence intervals with Poisson responses. We estimate the propensity score by logistic regression and baseline natural parameter functions by Poisson regression. The proposed method (DINA) produces the best coverages for all $\hat{\beta}$ and sample sizes.
The propensity score adjusted X-learner (PA-X) produces the second-best coverages but requires wider confidence intervals. 
For the direct extension of R-learner (E),  X-learner (X), and separate estimation (SE), the coverages of confidence intervals ($\hat{\beta}_1$ in particular) decrease as the sample size grows. The phenomenon is due to the non-vanishing bias in those estimators.
Results of confidence intervals for other types of data and nuisance-function learners are similar and can be found in the appendix.

\begin{table}
	\caption{\label{tab:poissonGLMCI}{Coverage (cvrg) and width of $95\%$ confidence intervals for count data. 
			 We consider the five meta-algorithms and vary the sample size from $1024$ to $5792$. 
			 We estimate the propensity score by logistic regression and baseline natural parameter functions by Poisson regression. Confidence intervals are constructed using $100$ bootstrap samples. Results are averaged over $100$ trials.
		}}
	\centering
	\scriptsize
	\hspace{-0cm}
	\fbox{%
	\begin{tabular}{c|c|cc|cc|cc|cc|cc|cc}
		Sample                    & Meth- & \multicolumn{2}{c|}{$\beta_1$} & \multicolumn{2}{c|}{$\beta_2$} & \multicolumn{2}{c|}{$\beta_3$} & \multicolumn{2}{c|}{$\beta_4$} & \multicolumn{2}{c|}{$\beta_5$} & \multicolumn{2}{c}{$\beta_0$} \\ 
		size &   od     & cvrg      & width     & cvrg      & width             & cvrg      & width           & cvrg      & width      & cvrg      & width     & cvrg      & width          \\\hline		\multirow{5}{*}{1024} & DINA   & 0.92  & 0.679   & 0.92  & 0.716   & 0.93  & 0.646   & 0.93  & 0.631   & 0.96  & 0.642   & 0.94  & 0.379   \\
		& E      & 0.57  & 0.706   & 0.48  & 0.732   & 0.97  & 0.661   & 0.93  & 0.639   & 0.94  & 0.640    & 0.93  & 0.452   \\
		& PA-X   & 0.94  & 0.880    & 0.89  & 0.91    & 0.97  & 0.669   & 0.95  & 0.666   & 0.94  & 0.666   & 0.93  & 0.419   \\
		& X      & 0.08  & 0.636   & 0.94  & 0.691   & 0.95  & 0.597   & 0.92  & 0.603   & 0.91  & 0.584   & 0.91  & 0.417   \\
		& SE     & 0.03  & 0.628   & 0.94  & 0.698   & 0.99  & 0.599   & 0.93  & 0.591   & 0.93  & 0.591   & 0.94  & 0.421   \\\hline	
		\multirow{5}{*}{1448} & DINA   & 0.96  & 0.553   & 0.92  & 0.580    & 0.91  & 0.546   & 0.95  & 0.534   & 0.91  & 0.526   & 0.94  & 0.314   \\
		& E      & 0.44  & 0.569   & 0.45  & 0.605   & 0.91  & 0.540    & 0.98  & 0.536   & 0.94  & 0.524   & 0.96  & 0.372   \\
		& PA-X   & 0.87  & 0.734   & 0.86  & 0.746   & 0.95  & 0.554   & 0.93  & 0.552   & 0.93  & 0.544   & 0.95  & 0.339   \\
		& X      & 0.00     & 0.536   & 0.93  & 0.573   & 0.93  & 0.495   & 0.93  & 0.500     & 0.96  & 0.496   & 0.87  & 0.349   \\
		& SE     & 0.04  & 0.532   & 0.95  & 0.587   & 0.97  & 0.500     & 0.94  & 0.492   & 0.94  & 0.499   & 0.93  & 0.349   \\\hline	
		\multirow{5}{*}{2048} & DINA   & 0.97  & 0.478   & 0.91  & 0.491   & 0.91  & 0.450    & 0.90   & 0.450    & 0.96  & 0.449   & 0.95  & 0.271   \\
		& E      & 0.37  & 0.485   & 0.30   & 0.519   & 0.91  & 0.462   & 0.93  & 0.445   & 0.96  & 0.446   & 0.96  & 0.314   \\
		& PA-X   & 0.87  & 0.589   & 0.73  & 0.636   & 0.96  & 0.466   & 0.95  & 0.458   & 0.91  & 0.449   & 0.93  & 0.282   \\
		& X      & 0.00     & 0.447   & 0.93  & 0.486   & 0.95  & 0.422   & 0.95  & 0.424   & 0.93  & 0.422   & 0.91  & 0.297   \\
		& SE     & 0.01  & 0.435   & 0.92  & 0.483   & 0.98  & 0.425   & 0.99  & 0.418   & 0.95  & 0.413   & 0.96  & 0.292   \\\hline	
		\multirow{5}{*}{2896} & DINA   & 0.89  & 0.393   & 0.91  & 0.413   & 0.98  & 0.378   & 0.92  & 0.370    & 0.93  & 0.370    & 0.97  & 0.225   \\
		& E      & 0.14  & 0.402   & 0.09  & 0.429   & 0.93  & 0.382   & 0.95  & 0.374   & 0.98  & 0.380    & 0.92  & 0.263   \\
		& PA-X   & 0.72  & 0.493   & 0.73  & 0.513   & 0.95  & 0.386   & 0.97  & 0.379   & 0.96  & 0.378   & 0.95  & 0.237   \\
		& X      & 0.00     & 0.378   & 0.97  & 0.410    & 0.93  & 0.347   & 0.95  & 0.348   & 0.96  & 0.349   & 0.91  & 0.245   \\
		& SE     & 0.00     & 0.383   & 0.95  & 0.404   & 0.94  & 0.358   & 0.93  & 0.349   & 0.93  & 0.347   & 0.97  & 0.243   \\\hline	
		\multirow{5}{*}{4096} & DINA   & 0.97  & 0.339   & 0.86  & 0.343   & 0.88  & 0.317   & 0.96  & 0.314   & 0.95  & 0.315   & 0.96  & 0.186   \\
		& E      & 0.02  & 0.332   & 0.06  & 0.364   & 0.91  & 0.317   & 0.97  & 0.318   & 0.94  & 0.311   & 0.90   & 0.213   \\
		& PA-X   & 0.65  & 0.411   & 0.64  & 0.430    & 0.98  & 0.323   & 0.93  & 0.312   & 0.96  & 0.315   & 0.91  & 0.200     \\
		& X      & 0.00     & 0.312   & 0.91  & 0.339   & 0.99  & 0.292   & 0.95  & 0.293   & 0.97  & 0.290    & 0.86  & 0.205   \\
		& SE     & 0.00     & 0.315   & 0.90   & 0.344   & 0.98  & 0.293   & 0.96  & 0.292   & 0.96  & 0.296   & 0.86  & 0.206   \\\hline	
		\multirow{5}{*}{5792} & DINA   & 0.95  & 0.276   & 0.94  & 0.291   & 0.92  & 0.269   & 0.94  & 0.260    & 0.95  & 0.261   & 0.93  & 0.158   \\
		& E      & 0.00     & 0.283   & 0.01  & 0.306   & 0.88  & 0.266   & 0.94  & 0.267   & 0.94  & 0.262   & 0.86  & 0.182   \\
		& PA-X   & 0.52  & 0.348   & 0.51  & 0.367   & 0.93  & 0.267   & 0.93  & 0.264   & 0.95  & 0.260    & 0.87  & 0.166   \\
		& X      & 0.00     & 0.266   & 0.91  & 0.287   & 0.91  & 0.245   & 0.94  & 0.247   & 0.95  & 0.247   & 0.84  & 0.172   \\
		& SE     & 0.00     & 0.263   & 0.86  & 0.288   & 0.97  & 0.254   & 0.96  & 0.245   & 0.92  & 0.246   & 0.84  & 0.173  \\
	\end{tabular}
}
\end{table}

\subsection{Cox model}\label{subsec:Cox}
We consider the five methods in Section~\ref{subsec:exponentialFamily} under the framework of Algorithm~\ref{algo:coxPartial} (partial-likelihood). We follow model~\eqref{eq:simulationModel} and use the baseline hazard function $\lambda(y) = y$, uniform censoring ($75\%$ units censored). 
In step one, we obtain $\hat{e}(x)$ by logistic regression and $\hat{\eta}_0(x)$, $\hat{\eta}_1(x)$ by fitting Cox proportional hazards regression models. In step two, all methods estimate $\beta$ by maximizing the partial likelihoods.
From the panel (d) of Figure~\ref{fig:error}, we observe that the proposed method with partial-likelihood behaves the most favorably. 
Results of confidence intervals can be found in the appendix.

\section{Real data analysis}\label{sec:realData}
In this section, we apply the estimators to the SPRINT dataset analyzed by \citet{powers2018some}. The data is collected from a randomized trial aiming to study whether a new treatment program targeting reducing systolic blood pressure (SBP) will reduce cardiovascular disease (CVD) risk. The response is whether any of the major CVD events\footnote{Major CVD events: myocardial infarction (MI), non-MI acute coronary syndrome (non-MI ACS), stroke, heart failure (HF), or death attributable to cardiovascular disease.} occur to a participant.
We start with $9$ continuous lab measurements,
follow the data preprocessing procedures of \citet{powers2018some}, and end up with $7517$ samples. The descriptive statistics of the data after preprocessing are given in Table \ref{tab:stats}.  
Since SBP and DBP, EGFR and SCREAT are seriously correlated, respectively, we further remove DBP and SCREAT and are left with $7$ covariates.
We scale the covariates to have zero mean and unit variance.
Our goal is to estimate the heterogeneous treatment effect in the scale of odds ratio. 

We consider the five methods in Section~\ref{sec:simulation}. 
Since the data is collected from a randomized trial, the propensity score is $e(x) = 0.5$. 
We estimate two natural parameter functions using random forests.
We obtain confidence intervals by bootstrap ($100$ bootstrap samples).

\begin{table}
		\caption{\label{tab:stats}{Descriptive statistics of the SPRINT data after preprocessing. Standard deviations are in parentheses.}}	
	\centering
	\hspace{-0cm}
	\fbox{%
	\begin{tabular}{c|c|c|c|c}
		Type & Abbrevation &         Feature    & Treatment Group & Control Group \\ \hline
		Group size & --- & --- &$3708$& $3809$\\
		&&&& \\\hline
		&SBP& Seated systolic blood       &         $139 ~(15.6)$         &  $140 ~(15.5)$              \\ 
		&& pressure (mm Hg)       &                 &               \\ \cline{2-5} 
		&DMP& Seated diastolic blood  &         $78.3 ~(11.7)$         &    $78.2 ~(12.1)$            \\ 
		&& pressure (mm Hg)      &                 &               \\ \cline{2-5} 
		&EGFR & eGFR MDRD  &     $72.0 ~(20.5)$             &           $72.1 ~(20.2)$     \\
		&&(mL/min/1.73m$^2$) &                 &               \\ \cline{2-5} 
		&SCREAT& Serum creatinine             &       $1.07 ~(0.34)$             &       $1.07 ~(0.32)$                     \\ 
		Lab meas-  && (mg/dL)                  &               &   \\ \cline{2-5} 
		urements  & CHR & Cholesterol (mg/dL)                          &        $191 ~(41.7)$                      &       $190 ~(40.4)$                     \\ 
		&&&& \\ \cline{2-5} 
		&GLUR& Glucose (mg/dL)                             &         $99.0 ~(13.7)$                     &         $98.9 ~(13.3)$                    \\ 
		&&&&\\ \cline{2-5} 
		&HDL& HDL-cholesterol direct            &         $52.7 ~(14.3)$                     &       $52.7 ~(14.5)$                     \\
		&& (mg/dL)               &                 &               \\ \cline{2-5} 
		&TRR& Triglycerides (mg/dL)                        &      $126 ~(22.2)$                        &         $126 ~(19.6)$                    \\ 
		&&&&\\ \cline{2-5} 
		&UMALCR& Urine Albumin/Creatinine ratio &            $39.8 ~(15.8)$                  &        $36.6 ~(12.5)$                    \\
		&  & (mg/g)     &                 &               \\ 
	\end{tabular}
}
\end{table}

\subsection{Results from randomized experiment}\label{subsec:HTE}
Table \ref{tab:coefficientsRF} displays the estimated coefficients of DINA using random forests as the nuisance-function learner.  
The proposed method, propensity score adjusted X-learner, and X-learner produce negative intercepts significant at the $95\%$ level. The estimated intercept of DINA is negatively significant, which implies that for a patient with average covariate values, the treatment decreases the odds of experiencing any CVD events. The result largely agrees with the original clinical results where researchers observed the treatment group were performing better\footnote{The original study focuses on average treatment effects on the scale of probability per year. The treatment group sees $1.65\%$ incidence per year and the control group sees $2.19\%$ incidence per year.}.
As for heterogeneity in the treatment effect, our proposal finds EGFR as a significant effect modifier. In particular, the estimated coefficient of EGFR is negative, indicating that the treatment is more beneficial to patients with high EGFR.

\begin{table}
	\caption{\label{tab:coefficientsRF}{Estimated coefficients and $95\%$ confidence intervals (CI) from the SPRINT dataset. We compare the five methods in Section~\ref{sec:simulation}. We input the true propensity score $e(x ) = 0.5$ and use random forests as the nuisance-function learner. We use bootstrap ($100$ bootstrap samples) to construct the confidence intervals.}}	
	\centering
	\scriptsize
	\fbox{%
		\begin{tabular}{cc|ccccc}
			Abbreviation                     &  & DINA                     & E                  & PA-X                   & X                    & SE                 \\\hline
			\multirow{2}{*}{SBP}       & $\hat{\beta}_{\text{SBP}}$ & 0.109                  & 0.101                 & 0.313               & 0.255                  & 0.135                \\
			& 95\% CI    & {[}-0.169, 0.387{]}    & {[}-0.166, 0.368{]}   & {[}-0.030, 0.656{]}  & {[}-0.072, 0.582{]}   & {[}-0.128, 0.398{]}  \\
			\multirow{2}{*}{EGFR}      & $\hat{\beta}_{\text{EGFR}}$  & -0.331                 & -0.353                & -0.107              & -0.086                & -0.087               \\
			& 95\% CI    & {[}-0.662, 0.000{]} & {[}-0.725, 0.019{]}  & {[}-0.548, 0.334{]} & {[}-0.412, 0.239{]}    & {[}-0.432, 0.258{]}  \\
			\multirow{2}{*}{CHR}       & $\hat{\beta}_{\text{CHR}}$  & 0.111                  & 0.129                 & -0.176              & -0.141                 & -0.121               \\
			& 95\% CI    & {[}-0.273, 0.495{]}    & {[}-0.238, 0.496{]}   & {[}-0.552, 0.200{]}   & {[}-0.480, 0.198{]}     & {[}-0.491, 0.249{]}  \\
			\multirow{2}{*}{GLUR}      & $\hat{\beta}_{\text{GLUR}}$  & -0.211                 & -0.205                & -0.116              & -0.0535                & -0.156               \\
			& 95\% CI    & {[}-0.501, 0.079{]}   & {[}-0.468, 0.058{]}  & {[}-0.455, 0.223{]} & {[}-0.306, 0.199{]}    & {[}-0.454, 0.142{]}  \\
			\multirow{2}{*}{HDL}       & $\hat{\beta}_{\text{HDL}}$  & -0.000              & 0.038                & 0.011              & -0.030                & -0.136               \\
			& 95\% CI    & {[}-0.402, 0.401{]}    & {[}-0.366, 0.442{]}   & {[}-0.448, 0.470{]}  & {[}-0.426, 0.366{]}    & {[}-0.565, 0.293{]}  \\
			\multirow{2}{*}{TRR}       & $\hat{\beta}_{\text{TRR}}$  & 0.103                  & 0.098                & 0.218               & 0.082                 & 0.026               \\
			& 95\% CI    & {[}-0.267, 0.473{]}    & {[}-0.262, 0.459{]}   & {[}-0.219, 0.655{]} & {[}-0.230, 0.393{]}     & {[}-0.305, 0.357{]}  \\
			\multirow{2}{*}{UMALCR}    & $\hat{\beta}_{\text{UMALCR}}$  & 0.001                & -0.045               & -0.039             & 0.094                 & -0.057              \\
			& 95\% CI    & {[}-0.228, 0.230{]}     & {[}-0.284, 0.194{]}   & {[}-0.257, 0.178{]} & {[}-0.124, 0.311{]}    & {[}-0.253, 0.139{]}  \\
			\multirow{2}{*}{Intercept} & $\hat{\beta}_{0}$  & -0.433                 & -0.412                & -0.350               & -0.336                 & -0.246               \\
			& 95\% CI    & {[}-0.735, -0.131{]}   & {[}-0.739, -0.085{]} & {[}-0.730, 0.030{]} & {[}-0.669, -0.003{]} & {[}-0.583, 0.091{]} \\
		\end{tabular}
	}
\end{table}

\subsection{Results from artificial observational studies}\label{subsec:robustness}

We compare the sensitivity to treatment assignment mechanisms.
We generate subsamples from the original dataset with non-constant propensity score and compare the estimated treatment effects from the original randomized trial and those from the artificial observational data.

Let $e(x)$ be an artificial propensity score following a logistic regression model. 
We subsample from the original SPRINT dataset with probabilities
\begin{align*}
	\PP(\text{select the } i \text{-th sample} \mid X_i = x) = 
	\begin{cases}
		\min\left\{\frac{e(x)}{1-e(x)}, 1 \right\}, & W_i = 1,\\
		\min\left\{\frac{1-e(x)}{e(x)}, 1 \right\}, & W_i = 0.
	\end{cases}
\end{align*}
In the artificial subsamples, the probability of a unit being treated is $e(x)$.
We run five estimation methods on the artificial subsamples, obtain estimated treatment effects, and repeat from subsampling $100$ times. For nuisance-function estimation, we use logistic regression to learn the propensity score, and random forests or logistic regression to learn the treatment/control natural parameter functions.

For any method, we regard the estimates obtained from the original dataset as the oracle, 
denoted by ${\tau}_{\text{method}}$, and compare with the estimates obtained from artificial subsamples, denoted by $\hat{\tau}_{\text{method}}$. We quantify the similarity by one minus the normalized mean squared difference,
\begin{align}\label{eq:S}
	R^2(\hat{\tau}_{\text{method}}) :=1 - \frac{\sum_{i=1}^{n} (\hat{\tau}_{\text{method}}(X_i) - \tau_{\text{method}}(X_i))^2}{\sum_{i=1}^{n} \tau_{\text{method}}^2(X_i)}.
\end{align}
The larger $R^2(\hat{\tau}_{\text{method}})$ is, the closer $\hat{\tau}_{\text{method}}$ is to its oracle ${\tau}_{\text{method}}$. In other words, a method with a larger $R^2(\hat{\tau}_{\text{method}})$ is more robust to the experimental design.

From Table \ref{tab:R2} we observe the proposed method (DINA) produces the largest or the second-largest $R^2$. The advantage is more evident when we use generalized linear regression as the nuisance-function learner.

\begin{table}
	\caption{\label{tab:R2}Comparison of the sensitivity ($R^2$ in \eqref{eq:S}) to treatment assignment mechanisms. We consider the five methods in Section~\ref{sec:simulation}. We generate $100$ subsamples from the original SPRINT dataset with a non-constant propensity score. 
			We consider two nuisance-function learners: logistic regression and random forests.
			For each method, we obtain estimators from the original SPRINT dataset and the artificial subsamples, and compute $R^2$.}
	\scriptsize
	\centering
	\fbox{
	\begin{tabular}{c|ccccc}
		Nuisance-function learner& DINA    & E       & PA-X    & X       & SE      \\\hline
		\multirow{2}{*}{Logistic regression} & 0.501   & 0.387   & 0.107   & 0.0982  & -0.468  \\
		& (0.186) & (0.252) & (0.522) & (0.404) & (0.772) \\
		\multirow{2}{*}{Random forests}  & 0.436   & 0.242   & 0.549   & 0.426   & -1.44   \\
		& (0.400)   & (0.458) & (0.305) & (0.328) & (2.48)  \\
	\end{tabular}
}
\end{table}

\section{Multi-valued treatment}\label{sec:multiValued} 
In this section, we extend our method to address multi-valued treatments, a.k.a. treatments with multiple levels.  Let $T \in \{0,1,\ldots, K\}$ be the treatment variable of $K+1$ levels, and let level $0$ denote the control group. Let $W^{(t)}$ be the indicator of receiving treatment $t$, i.e., $W^{(t) } = \1_{\{T = t\}}$. We adopt the generalized propensity score of \citet{imbens2000role} 
\begin{align}\label{eq:propMultiple}
	\begin{split}
		e_t(x) = \PP\left(T = t \mid X=x\right) = \EE[W^{(t)} \mid X=x],  \quad 0 \le t \le K.
	\end{split}
\end{align} 
We denote the natural parameter under treatment $t$ by $\eta_t(x)$, and we aim to estimate the DINAs
\begin{align}\label{eq:DINA:exponentialFamilyMultiple}
	\tau_t(x) := \eta_t(x) - \eta_0(x), \quad 1 \le t \le K.
\end{align} 
We consider flexible control baseline $\eta_0(x)$ and assume parametric forms of DINA as in \eqref{eq:DINA:parametric},
\begin{align}\label{eq:DINA:parametricMultiple}
	\tau_t(x) = x^\top \beta_t, \quad 1 \le t \le K.
\end{align}

The analysis with multi-valued treatments is reminiscent of that with single treatments in Section~\ref{sec:method} and \ref{sec:cox}. We rewrite the natural parameters as
\begin{align}\label{eq:model:exponentialFamilyMultiple}
	\eta_t(x)
	= \nu(x) + 
	\begin{bmatrix}
		- a_1(x), \ldots, 1- a_t(x), \ldots,  - a_K(x)
	\end{bmatrix}
	\begin{bmatrix}
		x^\top \beta_1 \\
		\ldots \\
		x^\top \beta_t \\
		\ldots \\
		x^\top \beta_K \\
	\end{bmatrix}, \quad 1 \le t \le K,
\end{align}
or more compactly
\begin{align*}
	\boldsymbol{\eta}(x)
	= \nu(x) \boldsymbol{1}_K + \left(\boldsymbol{I}_K -  \boldsymbol{1}_K\boldsymbol{a}(x)^\top\right) \boldsymbol{\tau}(x),
\end{align*}
where $\boldsymbol{\eta}(x) = [\eta_1(x),\ldots,\eta_K(x)]^\top$, $\boldsymbol{a}(x) = [a_1(x),\ldots,a_K(x)]^\top$, $\boldsymbol{\tau}(x) = [\tau_1(x),\ldots, \tau_K(x)]^\top$, and $\boldsymbol{1}_K = [1,\ldots, 1]^\top$.
We look for $\boldsymbol{a}(x)$ and $\nu(x)$ such that the DINA estimator is not heavily disturbed by inaccurate nuisance functions. Analogous to \eqref{eq:condition4}, we arrive at 
\begin{align}\label{eq:conditionMultiple3}
	\begin{split}
		a_t(x) = \frac{e_t(x) V_t(x)}{\sum_{s=0}^K e_s(x) V_s(x)}, &\quad 1 \le t \le K, \quad a_0(x) = 1 - \sum_{t=1}^K a_t(x),\\
		\nu(x) 
		&= \sum_{t=0}^K a_t(x) \eta_t (x).
	\end{split}
\end{align}

The interpretation of single-valued treatment in Subsection \ref{subsec:interpretation} holds.
In \eqref{eq:conditionMultiple3}, the weight $a_t(x)$ is proportional
to the treatment probability $e_t(x)$ multiplied by the response
conditional variance.
The multiplier $(w^{(t)}-a_t(x))$ associated with the HTE in
\eqref{eq:model:exponentialFamily2} is small under treatment $t$ if
the unit is inclined to receive such treatment or its treated response
has high conditional variance at $x$. 
Consequently, the proposed method equalizes the influences of treatments from all levels and attenuates the effects of excessively influential samples.

The algorithms in Section~\ref{sec:method} and \ref{sec:cox} can be
directly generalized to handle multi-valued treatments. As for
nuisance functions, we estimate the generalized propensity score
$e_t(x)$ by any multi-class classifier that provides probability
estimates. We estimate $V_t(x)$
by obtaining $\hat{\eta}_t(x)$ first and plugging in to the
appropriate exponential family formula for the variance.

Theoretically, Proposition~\ref{prop:main}, \ref{prop:mainCoxFull},
\ref{claim:cox:null} still hold, meaning that the proposed estimator
will improve over separate estimation based on the nuisance-function estimators
$\hat{\eta}_t(x)$. 
Compared to single-level treatments, generalized propensity scores of
multi-valued treatments may take more extreme values, especially
values close to zero. Extreme propensities will cause IPW-type methods \citep{dudik2011doubly,weisberg2015post}
to be highly variant, while the variances of R-learner as well
as our extension will not explode. We remark that when $\hat{\tau}(x)$
is estimated with a large number of parameters, problematic extrapolations may happen in the regions absent from certain treatment groups.

\section{Discussion}\label{sec:discussion}
In this paper, we propose to quantify the treatment effect by DINA for
responses from the exponential family and Cox model, in contrast to
the conditional mean difference. For non-continuous responses, e.g.,
binary and survival data, DINA is of more practical interest,
e.g., relative risk and hazard ratio, and is convenient for modeling
the influence of covariates to the treatment effects. Similar to
R-learner, we introduce a
DINA estimator that is insensitive to confounding and
non-collapsibility issues. The method is flexible in the sense that it
allows practitioners to use powerful off-the-shelf machine learning
tools for nuisance-function estimation.

There are several potential extensions of the proposed method.
\begin{enumerate}
	\item {Non-parametric DINA estimation}. The current method relies on the semi-parametric assumption \eqref{eq:DINA:parametric}. 
	According to \citet{nie2017quasi}, the non-parametric extension of R-learner on estimating the difference in means is discussed. We can similarly extend the proposed method to allow non-parametric modeling. The only difference is that we maximize the log-likelihood over non-parametric learners instead of parametric families in the second step. For example, in Algorithm~\ref{algo:exponentialFamily}, we can instead solve the optimization problem
	\begin{align}\label{eq:loglikeNonParametric}
		\max_{\tau'(x) \in \calF}\left\{\frac{1}{n}\sum_{i=1}^n Y_i(W-\hat{a}(X_i)) \tau'(X) - \psi\left(\hat{\nu}(X_i) + (W-\hat{a}(X_i)) \tau'(X)\right)\right\},
	\end{align}
	where $\calF$ denotes some non-parametric function class. 
	
	One way to solve \eqref{eq:loglikeNonParametric} is casting it as a varying coefficient problem \citep{hastie1993varying} and applying local regressions. 
	An alternative approach models $\tau(x)$ as a neural network, uses \eqref{eq:loglikeNonParametric} as the loss function,  and employs powerful optimization tools developed for neural networks.
	\item {Cox model baseline hazard robustness}. As
	mentioned in Section~\ref{subsec:full}, the current method
	with full likelihood maximization is sensitive to a misspecified baseline hazard function. We argue in Section~\ref{sec:cox} that the misspecification of baseline hazard $\lambda(y)$ and that of natural parameters $\eta_1(x)$, $\eta_0(x)$ are different in nature. We believe developing approaches targeting  baseline-hazard robustness is an exciting direction. 
\end{enumerate}

\section{Acknowledgement}
This research was partially supported by grants DMS 2013736 and IIS 1837931 from the National Science Foundation, and grant 5R01 EB 001988-21 from the National Institutes of Health.

\clearpage
\appendix{
%

\section{Proofs}\label{sec:appendix:proof}
\subsection{Exponential family}\label{subsec:appendix:exponentialFamily}

\begin{claim}\label{prop:consistency}
	Suppose the potential outcomes follow the exponential family model~\eqref{eq:model:exponentialFamily} with natural parameter functions~\eqref{eq:model:additive5}. Assume the treatment assignment is randomized. Consider the separate estimation method that estimates the nuisance functions $\eta_0(x, z)$, $\eta_1(x, z)$ by fitting generalized linear regressions to observed covariates $x$. Then $\hat{\tau}_{\text{SE}}(x)$ is consistent if and only if the canonical link function is linear.
\end{claim}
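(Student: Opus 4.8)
The plan is to pass to the population limit and identify the probability limits $\gamma_0,\gamma_1\in\calB$ of the two separately fitted linear-link coefficients through their canonical GLM score equations, and then compare $\gamma_1-\gamma_0$ with the true slope $\beta$. Since the treatment is randomized, $W\independent(X,Z,Y(0),Y(1))$, so each group is a representative sample of the covariate population and the group-$w$ fit converges to the $\gamma_w$ solving
\begin{align*}
	\EE\big[(\bar\mu_w(X)-\mu(X^\top\gamma_w))\,X\big]=0,\qquad w\in\{0,1\},
\end{align*}
where $\mu=\psi'$ is the mean function and, marginalizing the unobserved $Z$ out of \eqref{eq:model:additive5}, $\bar\mu_w(x):=\EE[Y(w)\mid X=x]=\EE_Z[\mu(\eta_w(x,Z))\mid X=x]$. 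Because $\hat\eta_1(x)-\hat\eta_0(x)=x^\top(\gamma_1-\gamma_0)$ is already linear in $x$, the final regression step of separate estimation returns $\gamma_1-\gamma_0$, so consistency of $\hat{\tau}_{\text{SE}}$ is exactly the statement $\gamma_1-\gamma_0=\beta$.

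Each score equation is the gradient of a strictly concave population log-likelihood (since $\psi''>0$ and $\EE[XX^\top]\succ0$), so its solution is unique; hence $\gamma_1-\gamma_0=\beta$ holds if and only if the shifted vector $\gamma_0+\beta$ solves the treatment equation. Writing $r_0(x):=\bar\mu_0(x)-\mu(x^\top\gamma_0)$ for the control residual, so that $\EE[r_0(X)X]=0$, the treatment equation evaluated at $\gamma_0+\beta$ reads $\EE[r_1(X)X]=0$ with
\begin{align*}
	r_1(x)-r_0(x)=\EE_Z\!\big[\mu(x^\top\beta+Z)-\mu(Z)\mid X=x\big]-\big(\mu(x^\top\gamma_0+x^\top\beta)-\mu(x^\top\gamma_0)\big).
\end{align*}
For the identity link $\mu(\eta)=\eta$ both bracketed quantities equal $x^\top\beta$, so $r_1\equiv r_0$, the treatment equation is automatically satisfied, and $\hat{\tau}_{\text{SE}}(x)\to x^\top\beta=\tau(x)$. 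This gives the sufficiency direction.

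For necessity I would show that a nonlinear link ($\mu'=\psi''$ non-constant) makes $r_1\not\equiv r_0$ and, for a suitable admissible configuration, $\EE[r_1(X)X]\neq0$, forcing $\gamma_1-\gamma_0\neq\beta$. The mechanism is cleanest for the log link $\mu(\eta)=e^\eta$, where the displayed mismatch factors as $r_1(x)-r_0(x)=(e^{x^\top\beta}-1)\,r_0(x)$, whence $\EE[r_1(X)X]=\EE[(e^{X^\top\beta}-1)\,r_0(X)X]$. When $\tau$ is constant the weight $e^{x^\top\beta}-1$ is constant and factors out, and the control orthogonality $\EE[r_0(X)X]=0$ restores consistency (recovering the Gaussian and Poisson cases of \citet{gail1984biased} for constant effects); but for non-constant $\tau$ together with an unobserved $Z$ that genuinely depends on $X$, the linear-link control fit is misspecified ($r_0\not\equiv0$) and the $x$-varying weight destroys the orthogonality, so $\EE[r_1(X)X]\neq0$. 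Exhibiting one such triple $(f_X,\,\text{law of }Z\mid X,\,\beta)$ then completes the necessity direction.

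The main obstacle is exactly this necessity step: certifying that $\EE[r_1(X)X]$ is genuinely nonzero rather than an artifact of a degenerate choice. Two points need care. First, one must allow $Z$ to depend on $X$: when $Z$ is independent of $X$ the control mean $\bar\mu_0$ is constant, the linear-link fit is correctly specified, $r_0\equiv0$, and even the log link would spuriously appear consistent. Second, the clean factorization $r_1-r_0=(e^{x^\top\beta}-1)r_0$ is special to the exponential $\mu$; for a general strictly convex $\psi$ I would instead expand $r_1-r_0$ to first order in $\beta$, identify the leading term as a $\mu'$-reweighting of the control residual, and verify that its $X$-moment is nonzero for an explicit configuration. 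Doing this uniformly over all nonlinear links, rather than one link at a time, is the delicate part.
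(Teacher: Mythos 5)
Your reduction to the population moment equations and your sufficiency argument are correct and essentially identical to the paper's: the paper writes the same two normal equations, $\EE[X\mu(\eta_0(X,Z))]=\EE[X\mu(X^\top\alpha^*)]$ and $\EE[X\mu(\eta_0(X,Z)+X^\top\beta)]=\EE[X\mu(X^\top\alpha^*+X^\top\beta^*)]$, identifies consistency with $\beta^*=\beta$, and for a linear link subtracts the two equations to get $\EE[XX^\top]\beta=\EE[XX^\top]\beta^*$, hence $\beta^*=\beta$ when $\EE[XX^\top]$ is non-degenerate.

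The genuine gap is the necessity ("only if") direction, and you have flagged it yourself. Your plan is to exhibit, for each nonlinear link, an admissible triple $(f_X,\ \text{law of }Z\mid X,\ \beta)$ with $\EE[r_1(X)X]\neq 0$; you carry this out structurally only for the log link (the factorization $r_1-r_0=(e^{x^\top\beta}-1)r_0$), you never actually exhibit a concrete configuration certifying that the weighted moment is nonzero even in that case, and for general nonlinear $\mu$ you only sketch a first-order expansion while conceding that handling all nonlinear links uniformly is unresolved. So as written the claim is not proved. The paper sidesteps the link-by-link construction with a quantifier argument: setting $\beta^*=\beta$ in the treatment equation and Taylor-expanding in the control misspecification $\eta_0(X,Z)-X^\top\alpha^*$ gives
\begin{align*}
0=\EE\left[X\,\mu'\!\left(\eta_0(X,Z)+X^\top\beta\right)\left(\eta_0(X,Z)-X^\top\alpha^*\right)\right]+r_1,
\end{align*}
with $r_1$ of second order in the misspecification. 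Since $\alpha^*$ is pinned down by the control equation alone (it does not depend on $\beta$), and since consistency is required for arbitrary $\beta$ and arbitrary $\eta_0(x,z)$, the leading term must vanish for every $\beta$ and every $\eta_0$: one fixed residual $\eta_0(X,Z)-X^\top\alpha^*$ must remain $X$-orthogonal under the entire family of reweightings $\mu'(\cdot+X^\top\beta)$, which forces $\mu'$ to be constant, i.e., the link to be linear. This derives linearity directly from the requirement that consistency hold across all configurations, which is exactly the uniformity your counterexample-based plan struggles with; to salvage your route you would need either a completed explicit counterexample for every nonlinear link, or to switch to this perturbation-plus-arbitrariness argument.
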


\begin{proof}[Proof of Claim \ref{prop:consistency}]
	Under randomized experiment, the separate estimation method obeys the population moment equations
	\begin{align}\label{eq:proof:beta}
		\begin{split}
			\EE[X\mu(\eta_0(X, Z))]
			&= \EE[X\mu(X^\top \alpha^*)],\\
			\EE[X\mu(\eta_0(X, Z) + X^\top \beta)]
			&= \EE[X\mu(X^\top \alpha^* + X^\top \beta^*)].
		\end{split}
	\end{align}
	
	On one hand, by Taylor's expansion of \eqref{eq:proof:beta},
	\begin{align}\label{eq:proof:beta4}
		\begin{split}
			0&=\EE[X\mu(\eta_0(X, Z) + X^\top \beta)] -\EE[X\mu(X^\top \alpha^* + X^\top \beta)]\\
			&= \EE[X\mu'(\eta_0(X, Z) + X^\top \beta)(\eta_0(X, Z) - X^\top \alpha^*)] + r_1,
		\end{split}
	\end{align}
	where the remainder term $r_1\le O(\max_{\eta} |\mu''(\eta)|\EE[X(\eta_0(X, Z) - X^\top \alpha^*)^2])$. 
	Notice that by \eqref{eq:proof:beta}, $\alpha^*$ does not depend on $\beta$. Since \eqref{eq:proof:beta4} is true for arbitrary $\beta$ and $\eta_0(x, z)$, then $\mu'(\eta)$ is constant. Therefore, the link function $\mu(\eta)$ is linear.
	
	On the other hand, if the canonical link is linear, then we take the difference of the moment equations \eqref{eq:proof:beta} in the control group and the treatment group and arrive at
	\begin{align}\label{eq:proof:beta3}
		\begin{split}
			\EE[XX^\top] \beta = \EE[XX^\top] \beta^* \implies \beta^* = \beta,
		\end{split}
	\end{align}
	as long as $\EE[XX^\top]$ is non-degenerate.
\end{proof}

\begin{proposition}\label{prop:derivationExponential}
	Let $Y$ be generated from the model \eqref{eq:model:exponentialFamily} with natural parameter functions $\eta_w(x) = \nu(x) + (w-a(x)) x^\top \beta$. 
	For arbitrary $\nu^*(x)$, $\beta'$, let $\ell(Y; \beta', a(x),  \nu^*(x))$ be the log-likelihood of  $Y$ evaluated at natural parameter functions $\nu^*(x) + (w-a(x)) x^\top \beta'$.  Then $a(x)$, $\nu(x)$ in \eqref{eq:condition4} ensure 
	\begin{align*}
		\beta = \argmax_{\beta'} \EE\left[\ell(Y; \beta', a(x),  \nu^*(x))\right]
	\end{align*}	
	is true under the log-likelihood quadratic approximation in Lemma \ref{lemm:likelihoodApproximationExponential}.
\end{proposition}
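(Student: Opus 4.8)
The plan is to show that $\beta'=\beta$ solves the population first-order condition for $\EE[\ell(Y;\beta',a(x),\nu^*(x))]$ once the score is linearized via Lemma~\ref{lemm:likelihoodApproximationExponential}, and then to upgrade this stationary point to a global maximizer by concavity. First I would differentiate the objective: writing the log-likelihood at natural parameter $\nu^*(X)+(W-a(X))X^\top\beta'$ as $Y\bigl(\nu^*(X)+(W-a(X))X^\top\beta'\bigr) - \psi\bigl(\nu^*(X)+(W-a(X))X^\top\beta'\bigr)$ and using $\psi'=\mu$, the stationarity condition $\fracp{}{\beta'}\EE[\ell]=0$ reads
\begin{align*}
	\EE\left[\left(Y - \mu\bigl(\nu^*(X)+(W-a(X))X^\top\beta'\bigr)\right)(W-a(X))X\right] = 0.
\end{align*}

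Next I would substitute the conditional mean. Since $\EE[Y\mid X,W]=\mu(\eta_W(X))$ with $\eta_W(X)=\nu(X)+(W-a(X))X^\top\beta$ the \emph{true} natural parameter, evaluating the left-hand side at the candidate $\beta'=\beta$ and writing $\eta_W^*(X):=\nu^*(X)+(W-a(X))X^\top\beta$ turns the score into
\begin{align*}
	\EE\left[\left(\mu(\eta_W(X)) - \mu(\eta_W^*(X))\right)(W-a(X))X\right],
\end{align*}
where crucially $\eta_W(X)-\eta_W^*(X)=\nu(X)-\nu^*(X)$ does not depend on $W$. I then invoke Lemma~\ref{lemm:likelihoodApproximationExponential}, whose quadratic approximation linearizes $\mu$ about the true natural parameter, giving $\mu(\eta_W(X))-\mu(\eta_W^*(X))\approx V_W(X)\,(\nu(X)-\nu^*(X))$ with $V_W(X)=\frac{d\mu}{d\eta}(\eta_W(X))$ the \emph{exact} variance function. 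The score at $\beta'=\beta$ then factors as
\begin{align*}
	\EE\left[(\nu(X)-\nu^*(X))\,X\;\EE\left[V_W(X)(W-a(X))\mid X\right]\right].
\end{align*}

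The crux is the inner conditional expectation. Taking $W\mid X\sim\text{Ber}(e(X))$,
\begin{align*}
	\EE\left[V_W(X)(W-a(X))\mid X\right] = e(X)V_1(X)(1-a(X)) - (1-e(X))V_0(X)a(X),
\end{align*}
and substituting the definition of $a(x)$ from~\eqref{eq:condition4} — whose denominator is exactly $e(X)V_1(X)+(1-e(X))V_0(X)$, so that $a(X)=e(X)V_1(X)/D$ and $1-a(X)=(1-e(X))V_0(X)/D$ — makes the two terms equal and the inner expectation vanish identically in $X$. Hence the entire score is zero at $\beta'=\beta$ for \emph{any} misspecified baseline $\nu^*$. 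Since $\psi$ is convex, the quadratic approximation makes $\EE[\ell]$ a strictly concave quadratic in $\beta'$ under the regularity conditions of Proposition~\ref{prop:main}, so its unique stationary point $\beta'=\beta$ is the global maximizer, which is the claim.

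I expect the main obstacle to be justifying the linearization correctly: one must expand $\mu$ about the \emph{true} parameter $\eta_W(X)$ rather than the misspecified $\eta_W^*(X)$, so that the true variance functions $V_0,V_1$ appear and match the weights hard-wired into $a(x)$. The cancellation $\EE[V_W(X)(W-a(X))\mid X]=0$ is precisely what the variance-adjusted form of $a(x)$ is engineered to produce, and checking that this is the exact content supplied by Lemma~\ref{lemm:likelihoodApproximationExponential} — i.e.\ that the discarded remainder is genuinely higher order and does not shift the stationary point — is the delicate step.
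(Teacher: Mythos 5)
Your proof is correct and follows essentially the same route as the paper's: both reduce the problem to the score condition $\EE\left[(W-a(X))X\left(Y - \psi'(\eta_W^*(X))\right)\right] = 0$, substitute $\EE[Y \mid X,W] = \mu(\eta_W(X))$, Taylor-expand $\mu$ about the \emph{true} parameter so the variance functions $V_0, V_1$ appear, and use the $W$-independence of $\nu(X)-\nu^*(X)$ to obtain exactly the cancellation $e(x)V_1(x)(1-a(x)) = (1-e(x))V_0(x)a(x)$ built into \eqref{eq:condition4}. The only cosmetic differences are that the paper maximizes the quadratic approximation of $\ell$ in closed form (which subsumes your concavity step) and derives $a(x)$ by requiring the condition for all directions $\nu^*-\nu$, whereas you verify the given $a(x)$; also note the $\mu$-linearization is a separate Taylor step in the paper rather than part of Lemma~\ref{lemm:likelihoodApproximationExponential} itself, though both are the same order of approximation covered by the proposition's caveat.
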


\begin{proof}[Proof of Proposition \ref{prop:derivationExponential}]\label{proof:robust:exponentialFamily:a(x)}
	By Lemma \ref{lemm:likelihoodApproximationExponential}, for arbitrary $\beta'$, the log-likelihood of $Y$ satisfies
	\begin{align}\label{eq:loglike1.75}
		\begin{split}
			&\ell(Y; \beta', a(x), \nu^*(x)) 
			\approx  \ell(Y;\beta, a(x), \nu^*(x)) \\
			&-\frac{1}{2} \psi''(\eta_W^*(X)) \left(r - (W - a(X)) X^\top (\beta' -\beta)\right)^2 + \frac{1}{2} \psi''(\eta_W^*(X)) r^2,
		\end{split}
	\end{align}
	where $\eta_w^*(x) = \nu^*(x) + (w - a(x))x^\top\beta$ and the residual $r = \left(Y - \psi'(\eta_W^*(X))\right)/\psi''(\eta_W^*(X))$.
	Maximizing the expectation of the quadratic approximation gives us
	\begin{align}\label{eq:solution}
		\beta' -\beta = \EE\left[\psi''\left(\eta_W^*(X)\right)(W-a(X))^2XX^\top\right]^{-1} \EE\left[ \psi''\left(\eta_W^*(X)\right)(W-a(X))X r\right].
	\end{align}
	Thus, our target $\beta' = \beta$ is equivalent to $\EE\left[ \psi''(\eta_W^*(X))(W-a(X))X r\right] = 0$. By the definition of $r$, the condition simplifies to
	\begin{align}\label{eq:condition}
		\EE\left[(W-a(X))X (Y - \psi'(\eta_W^*(X))\right] = 0.
	\end{align} 
	Note that in the exponential family, $\EE[Y \mid X, W] = \psi'(\eta_W(X))$ at true natural parameter functions $\eta_w(x)$, then by Taylor's expansion of $\psi'$, 
	\eqref{eq:condition} becomes
	\begin{align}\label{eq:condition2}
		\begin{split}
			0 
			\approx& \EE\left[e(X)(1-a(X))X \psi''(\eta_1(X)) (\eta_1(X) - \eta_1^*(X))\right.\\
			&\left.- (1-e(X))a(X)X \psi''(\eta_0(X)) (\eta_0(X) - \eta_0^*(X))\right]. 
		\end{split}
	\end{align} 
	Since the difference between the true parameters $\eta_w(x)$ and $\eta_w^*(x)$: $\eta_w^*(x) - \eta_w(x) = \nu^*(x) - \nu(x)$, is independent of the treatment assignment $W$, the two terms on the right hand side of \eqref{eq:condition2} can be merged together,
	\begin{align}\label{eq:condition3}
		\begin{split}
			0 
			=& \EE\left[X (\nu(X) - \nu^*(X))
			\left(e(X)(1-a(X)) \psi''(\eta_1(X)) - (1-e(X))a(X) \psi''(\eta_0(X))\right)\right].
		\end{split}
	\end{align} 
	We require condition \eqref{eq:condition3} to be true for arbitrary approximation error direction $(\nu^*(x) - \nu(x))/\|\nu^*(x) - \nu(x)\|$,
	then 
	\begin{align*}
		\begin{split}
			0 
			&= e(x)(1-a(x)) \psi''(\eta_1(x)) - (1-e(x))a(x) \psi''(\eta_0(x)),\\
			&\implies 
			\frac{a(x)}{1-a(x)} 
			= \frac{e(x)}{1-e(x)}\frac{\psi''(\eta_1(x))}{\psi''(\eta_0(x))}
			= \frac{e(x)}{1-e(x)}\frac{\frac{d\mu}{d\eta}(\eta_1(x))}{\frac{d\mu}{d\eta}(\eta_0(x))},
		\end{split}
	\end{align*} 	
	and subsequently $\nu(x)$,
	\begin{align*}
		\begin{split}
			\nu(x) 
			&= \frac{e(x)\frac{d\mu}{d\eta}(\eta_1(x)) }{e(x)\frac{d\mu}{d\eta}(\eta_1(x))  + (1-e(x))\frac{d\mu}{d\eta}(\eta_0(x))} \eta_1(x) + \frac{(1-e(x))\frac{d\mu}{d\eta}(\eta_0(x)) }{e(x)\frac{d\mu}{d\eta}(\eta_1(x))  + (1-e(x))\frac{d\mu}{d\eta}(\eta_0(x))} \eta_0(x).
		\end{split}
	\end{align*} 	
	Plug $V_1(x) = \frac{d\mu}{d\eta}(\eta_1(x)) $, $V_0(x) = \frac{d\mu}{d\eta}(\eta_0(x)) $ in and we have derived \eqref{eq:condition4} from the robustness requirement of $\nu(x)$. 
	
	In fact, we can show \eqref{eq:condition4} also protects the estimator from misspecified $a(x)$. Consider $a^*(x)$ as an approximation of $a(x)$, and let $\eta_w^*(x) = \nu(x) + (w - a^*(x)) x^\top \beta$. Then by Lemma \ref{lemm:likelihoodApproximationExponential},
	\begin{align*}
		\begin{split}
			&\ell(Y; \beta', a^*(x), \nu(x)) 
			\approx  \ell(Y;\beta, a^*(x), \nu(x)) \\
			&-\frac{1}{2} \psi''(\eta_W^*(X)) \left(r - (W - a^*(X)) X^\top (\beta' -\beta)\right)^2 + \frac{1}{2} \psi''(\eta_W^*(X)) r^2,
		\end{split}
	\end{align*}
	where the residual term is defined as $r := \left(Y - \psi'(\eta_W^*(X))\right)/\psi''(\eta_W^*(X))$. 
	The condition \eqref{eq:condition} becomes  
	\begin{align*}
		\begin{split}
			0 
			&= \EE\left[(W-a^*(X))X (Y - \psi'(\eta_W^*(X))\right] \\
			&= \underbrace{\EE\left[(W-a(X))X (Y - \psi'(\eta_W^*(X))\right]}_{:=\text{(I)}} + \underbrace{\EE\left[(a(X)-a^*(X))X (Y - \psi'(\eta_W^*(X))\right]}_{:=\text{(II)}}.
		\end{split}
	\end{align*} 
	Part (I) is similar to \eqref{eq:condition2} and is approximately zero under \eqref{eq:condition4}. For part (II), 
	\begin{align}\label{eq:conditiona2}
		\begin{split}
			\text{(II)} 
			&= \EE\left[(a(X)-a^*(X))X (Y - \psi'(\eta_W(X))\right] \\
			&\quad + \EE\left[(a(X)-a^*(X))X (\psi'(\eta_W(X)) - \psi'(\eta_W^*(X))\right]\\
			&= 0 + \EE\left[(a(X)-a^*(X))X \psi''(\eta_W(X))(a^*(X) - a(X))X^\top \beta)\right] + o\left(\EE[(a^*(X) - a(X))^2]\right)
		\end{split}
	\end{align} 
	by Taylor's expansion. Therefore, the error in $\beta$ is at most the quadratic of that in $a(x)$, and the method is insensitive to the errors in $a(x)$.
\end{proof}

\begin{lemma}\label{lemm:likelihoodApproximationExponential}
	Let $Y$ be generated from the model \eqref{eq:model:exponentialFamily} with the natural parameter $\eta$, then for arbitrary $\eta'$, the likelihood of $Y$ satisfies
	\begin{align*}
		\begin{split}
			\ell(Y; \eta') 
			= \ell(Y; \eta) - \frac{1}{2} \psi''(\eta) \left(r + \eta - \eta'\right)^2 + \frac{1}{2} \psi''(\eta) r^2 + O\left(\|\eta - \eta'\|_2^3\right),
		\end{split}
	\end{align*}	
	where $r := \left(Y - \psi'(\eta)\right)/\psi''(\eta)$.
\end{lemma}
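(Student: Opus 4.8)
The plan is to reduce the claim to a one-variable Taylor expansion of the cumulant generating function $\psi$ followed by completing the square. First I would write the log-likelihood for a single observation explicitly from model~\eqref{eq:model:exponentialFamily}, namely $\ell(Y;\eta') = \log\kappa(Y) + Y\eta' - \psi(\eta')$, and form the difference against the reference point $\eta$. The carrier density $\kappa(Y)$ is free of the natural parameter, so it cancels and we obtain
\begin{align*}
	\ell(Y;\eta') - \ell(Y;\eta)
	= Y(\eta' - \eta) - \bigl(\psi(\eta') - \psi(\eta)\bigr).
\end{align*}
This isolates the only nonlinear dependence on $\eta'$ into the single term $\psi(\eta')$.

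Next I would Taylor expand $\psi$ about $\eta$ to second order. Because $\psi$ is the cumulant generating function of an exponential family, it is smooth on the interior of the natural parameter space and, under the boundedness hypotheses invoked elsewhere (the covariates and parameters range over bounded sets, so the relevant $\eta$-values lie in a compact set), $\psi'''$ is uniformly bounded there; this is what legitimizes the cubic remainder. Substituting
\begin{align*}
	\psi(\eta') - \psi(\eta)
	= \psi'(\eta)(\eta'-\eta) + \tfrac{1}{2}\psi''(\eta)(\eta'-\eta)^2 + O\bigl(\|\eta-\eta'\|_2^3\bigr)
\end{align*}
into the difference yields
\begin{align*}
	\ell(Y;\eta') - \ell(Y;\eta)
	= \bigl(Y - \psi'(\eta)\bigr)(\eta'-\eta) - \tfrac{1}{2}\psi''(\eta)(\eta'-\eta)^2 + O\bigl(\|\eta-\eta'\|_2^3\bigr).
\end{align*}

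Finally I would rewrite this in the stated form. Using the definition $r = \bigl(Y-\psi'(\eta)\bigr)/\psi''(\eta)$, I substitute $Y-\psi'(\eta) = \psi''(\eta)\,r$ so that the linear term becomes $\psi''(\eta)\,r(\eta'-\eta)$, and then verify by direct expansion that
\begin{align*}
	-\tfrac{1}{2}\psi''(\eta)(r+\eta-\eta')^2 + \tfrac{1}{2}\psi''(\eta)r^2
	= \psi''(\eta)\,r(\eta'-\eta) - \tfrac{1}{2}\psi''(\eta)(\eta'-\eta)^2,
\end{align*}
which matches the two non-remainder terms above. This completing-the-square identity is what repackages the linear-plus-quadratic expansion into the "Gaussian residual" shape that makes the subsequent argument in Proposition~\ref{prop:derivationExponential} transparent. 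There is no genuine obstacle here: the computation is routine once the carrier cancels, and the only point requiring a word of care is justifying that the cubic remainder is uniform, which follows from the boundedness of $\psi'''$ on the compact range of admissible natural parameters.
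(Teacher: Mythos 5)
Your proposal is correct and follows essentially the same route as the paper's proof: cancel the carrier density, Taylor expand $\psi$ at $\eta$ to second order with a cubic remainder, and complete the square using $Y - \psi'(\eta) = \psi''(\eta)\,r$. Your added remark about the uniformity of the remainder via boundedness of $\psi'''$ on a compact parameter range is a reasonable refinement that the paper leaves implicit.
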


\begin{proof}[Proof of Lemma \ref{lemm:likelihoodApproximationExponential}]\label{proof:exponential:likelihoodApproximation}
	Let $Y$ be generated from \eqref{eq:model:exponentialFamily} with the natural parameter $\eta$, then the log-likelihood of $Y$ at the natural parameter $\eta'$ takes the form
	\begin{align}\label{eq:loglike} 
		\begin{split}
			\ell(Y; \eta') 
			=  Y\eta' - \psi(\eta') + \log(\kappa(Y)) 
			= \ell(Y; \eta)  + Y(\eta' -\eta) - (\psi(\eta') - \psi(\eta)).
		\end{split}
	\end{align}
	We apply Taylor's expansion to $\psi$ at $\eta$ 
	\begin{align}\label{eq:Taylor}
		\psi(\eta') - \psi(\eta)
		= \psi'(\eta)(\eta' - \eta) + \frac{1}{2} \psi''(\eta)(\eta' - \eta)^2 + O\left(\|\eta' - \eta\|_2^3\right).
	\end{align}
	Plugging \eqref{eq:Taylor} into \eqref{eq:loglike} gives us
	\begin{align}\label{eq:loglike1.5}
		\begin{split}
			\ell(Y; \eta') 
			&= \ell(Y; \eta) + Y(\eta' -\eta) - \psi'(\eta)(\eta' - \eta) - \frac{1}{2} \psi''(\eta)(\eta' - \eta)^2 + O\left(\|\eta' - \eta\|_2^3\right)\\
			&= \ell(Y; \eta) - \frac{1}{2} \psi''(\eta) \left(r + \eta - \eta'\right)^2 + \frac{1}{2} \psi''(\eta) r^2 + O\left(\|\eta' - \eta\|_2^3\right),
		\end{split}
	\end{align}
	where the residual term is defined as $r := \left(Y - \psi'(\eta)\right)/\psi''(\eta)$.
\end{proof}


\begin{proof}[Proof of Claim \ref{claim:conditionalMean}]
	Under the model \eqref{eq:model:exponentialFamily},
	\begin{align*}
		\EE[Y \mid X = x]
		&= \EE[Y \mid W = 1, X = x] ~\PP(W = 1 \mid X=x) \\
		&\quad~+ \EE[Y\mid W = 0, X = x] ~\PP(W = 0 \mid X=x)\\
		&=  e(x) \mu(\eta_1(x)) + (1- e(x))\mu(\eta_0(x)). 
	\end{align*}
	Therefore, 
	\begin{align}\label{eq:proof:conditionalMean1}
		\begin{split}
			&\quad~\EE[Y \mid X = x] - \mu(\nu(x))\\
			&=  e(x) \mu(\eta_1(x))+ (1-e(x))  \mu(\eta_0(x)) - \mu(\nu(x))\\
			&= e(x) \left(\mu(\eta_1(x)) - \mu(\nu(x))\right) + (1-e(x)) \left(\mu(\eta_0(x)) - \mu(\nu(x))\right) \\
			&= e(x) \mu'(\eta_1(x))(\eta_1(x) - \nu(x)) - \frac{1}{2} e(x)\mu''(\eta_{\varepsilon_1}(x))(\eta_1(x) - \nu(x))^2\\
			&\quad~+ (1-e(x)) \mu'(\eta_0(x))(\eta_0(x) - \nu(x)) - \frac{1}{2} (1-e(x))\mu''(\eta_{\varepsilon_0}(x))(\eta_0(x) - \nu(x))^2,
		\end{split}
	\end{align}
	where we use Taylor's expansion in the last equality, $\eta_{\varepsilon_0}(x)$ is between $\eta_0(x)$, $\nu(x)$, and $\eta_{\varepsilon_1}(x)$ is between $\eta_1(x)$, $\nu(x)$.
	Recall that $\nu(x) = a(x) \eta_1(x) + (1-a(x))\eta_0(x)$ in \eqref{eq:condition4}, then
	\begin{align}\label{eq:proof:conditionalMeanDiff}
		\eta_0(x) - \nu(x) = -a(x) (\eta_1(x) - \eta_0(x)), \quad
		\eta_1(x) - \nu(x) = (1-a(x)) (\eta_1(x) - \eta_0(x)).
	\end{align}
	Plug \eqref{eq:proof:conditionalMeanDiff} into \eqref{eq:proof:conditionalMean1}, 
	\begin{align}\label{eq:proof:conditionalMean2}
		\begin{split}
			&\quad~\EE[Y \mid X = x] - \mu(\nu(x))\\
			&= e(x) \mu'(\eta_1(x)) (1-a(x)) (\eta_1(x) - \eta_0(x)) -
			(1-e(x)) \mu'(\eta_0(x))a(x) (\eta_1(x) - \eta_0(x)) + r(x) \\
			&= \left(e(x) \mu'(\eta_1(x)) (1-a(x)) - (1-e(x)) \mu'(\eta_0(x))a(x)\right) (\eta_1(x) - \eta_0(x)) + r(x),
		\end{split}
	\end{align}
	where the residual is defined as
	\begin{align}\label{eq:proof:conditionalMeanResidual}
		\begin{split}
			r(x)&:= -\frac{1}{2}e(x)\mu''(\eta_{\varepsilon_1}(x))(\eta_1(x) - \nu(x))^2  -\frac{1}{2} (1-e(x))\mu''(\eta_{\varepsilon_0}(x))(\eta_0(x) - \nu(x))^2.
		\end{split}
	\end{align}
	Next, by \eqref{eq:condition4},
	\begin{align}\label{eq:proof:conditionalMeanRatio}
		\frac{e(x) \mu'(\eta_1(x)) (1-a(x)) }{(1-e(x)) \mu'(\eta_0(x)) a(x)}
		= \frac{e(x) V_1(x) }{(1-e(x)) V_0(x)} \frac{1-a(x)}{a(x)}
		= 1.
	\end{align}
	Therefore, plug \eqref{eq:proof:conditionalMeanRatio} into \eqref{eq:proof:conditionalMean2} and we have $\EE[Y \mid X = x] - \mu(\nu(x)) = r(x) $. Finally, by \eqref{eq:proof:conditionalMeanDiff}, \eqref{eq:proof:conditionalMeanResidual}, we have $r(x) = O\left((\eta_1(x) - \eta_0(x))^2\right)$.
\end{proof}

\begin{claim}\label{prop:Neyman}
	The score derived from \eqref{eq:MLE} satisfies the Neyman's orthogonality.	
\end{claim}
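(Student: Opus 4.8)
The plan is to write out the estimating equation (score) defined by \eqref{eq:MLE}, confirm it is unbiased at the truth, and then verify that the expected score is first-order insensitive to perturbations of the nuisance functions $a(x)$ and $\nu(x)$ about their true values --- which is exactly the content of Neyman orthogonality. First I would record the score. Differentiating the summand of \eqref{eq:MLE} with respect to $\beta'$ and dropping the empirical average gives the per-observation score
\begin{align*}
	s(a, \nu, \beta) = (W - a(X))\, X \left(Y - \psi'\!\left(\nu(X) + (W - a(X)) X^\top \beta\right)\right),
\end{align*}
where $\psi' = \mu$ is the mean function. Writing $\eta_W(X) := \nu(X) + (W - a(X)) X^\top \beta$ for the induced natural parameter, note that at the true nuisance functions $(a, \nu)$ of \eqref{eq:condition4} and the true $\beta$, $\eta_W(X)$ coincides with the true natural parameter of model \eqref{eq:model:exponentialFamily}, so $\EE[Y - \psi'(\eta_W(X)) \mid W, X] = 0$ and hence $\EE[s(a, \nu, \beta)] = 0$; the score is a valid estimating equation at the truth.

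The heart of the argument is a single conditional identity. Using $V_1(X) = \psi''(\eta_1(X))$, $V_0(X) = \psi''(\eta_0(X))$ and the definition of $a(x)$ in \eqref{eq:condition4}, one checks directly that
\begin{align*}
	\EE\left[(W - a(X))\, \psi''(\eta_W(X)) \mid X\right] = e(X)(1 - a(X)) V_1(X) - (1 - e(X)) a(X) V_0(X) = 0.
\end{align*}
I would then compute the two Gateaux derivatives of $\EE[s(a, \nu, \beta)]$ at the truth. For the perturbation $\nu \mapsto \nu + t h_\nu$, differentiating in $t$ at $t = 0$ and conditioning on $X$ yields $-\EE\big[X h_\nu(X)\, \EE[(W - a(X))\psi''(\eta_W(X)) \mid X]\big]$, which vanishes by the identity above for every direction $h_\nu$.

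For the perturbation $a \mapsto a + t h_a$, the variable $a$ appears twice in $s$ --- once as the multiplier $(W - a(X))$ and once inside $\eta_W(X)$ --- so the derivative splits into two terms:
\begin{align*}
	-\EE\left[h_a(X) X (Y - \psi'(\eta_W(X)))\right] + \EE\left[(W - a(X)) X \psi''(\eta_W(X)) h_a(X) X^\top \beta\right].
\end{align*}
The first term is zero because $\EE[Y - \psi'(\eta_W(X)) \mid W, X] = 0$ at the truth, and the second term equals $\EE\big[X h_a(X) X^\top\beta \,\EE[(W - a(X))\psi''(\eta_W(X)) \mid X]\big] = 0$ again by the identity. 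Hence both directional derivatives vanish and $s$ is Neyman orthogonal.

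The main obstacle is bookkeeping rather than depth: I must be careful to evaluate the derivatives at the true $(a, \nu, \beta)$ (so that $\eta_W(X)$ is the genuine natural parameter and the residual has conditional mean zero), and to account for both appearances of $a$ in the $a$-direction derivative rather than only the explicit multiplier. Once these are handled, everything collapses onto the single conditional moment identity, which is precisely the defining relation for $a(x)$ established in the proof of Proposition~\ref{prop:derivationExponential}; indeed this claim is essentially a restatement of that robustness computation in the language of orthogonal scores, so I would cite \eqref{eq:condition4} and the derivation there to avoid repeating the Taylor-expansion bookkeeping.
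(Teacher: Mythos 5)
Your proof is correct and follows essentially the same route as the paper: both compute the two directional (Gateaux) derivatives of the expected score in the $a$ and $\nu$ directions at the true $(a,\nu,\beta)$, and show each vanishes by combining the conditional mean-zero residual $\EE[Y - \psi'(\eta_W(X)) \mid W, X] = 0$ with the defining identity $e(x)(1-a(x))V_1(x) = (1-e(x))a(x)V_0(x)$ from \eqref{eq:condition4}. Your explicit isolation of the conditional identity $\EE[(W-a(X))\psi''(\eta_W(X))\mid X]=0$, and your care in noting that both derivatives must be evaluated at the true $\beta$, are if anything slightly cleaner bookkeeping than the paper's own write-up.
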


\begin{proof}[Proof of Claim \ref{prop:Neyman}]
	Let $a_{\alpha}(x) = a(x) + \alpha \xi(x)$, $\nu_{\rho}(x) = \nu(x) + \rho \zeta(x)$ with $\EE[\xi^2(X)] = \EE[\zeta^2(X)] = 1$, then the score function implied by \eqref{eq:MLE} with nuisance parameters $a_{\alpha}(x)$, $\nu_{\rho}(x)$ is
	\begin{align}\label{eq:proof:score}
		\begin{split}
			s(\alpha, \rho, \beta)
			= \EE\left[\left(Y - \psi'\left(\nu_{\rho}(X) + (W-a_{\alpha}(X))X^\top \beta\right)\right)(W-a_{\alpha}(X))X\right].
		\end{split}
	\end{align}
	We take derivative of \eqref{eq:proof:score} with regard to $\alpha$ and evaluate at $\alpha = \rho = 0$,
	\begin{align}\label{eq:proof:scoreAlpha}
		\begin{split}
			\nabla_{\alpha} s(\alpha, \rho, \beta) \mid_{\alpha = \rho = 0}
			=& \underbrace{\EE\left[\psi''\left(\nu(X) + (W-a(X))X^\top \beta\right)\xi(X)X^\top\beta(W-a(X))X \right]}_{:=\text{(I)}}\\
			&- \underbrace{\EE \left[ \left(Y - \psi'\left(\nu(X) + (W-a(X))X^\top \beta\right)\right)\xi(X)X\right]}_{:=\text{(II)}}.
		\end{split}
	\end{align}
	Part (I) is zero by \eqref{eq:condition4}, and part (II) is zero because $\EE[Y \mid X,W] = \psi'(\nu(X) + (W-a(x))X^\top \beta)$. Therefore, \eqref{eq:proof:scoreAlpha} is zero.
	Next, we take derivative of \eqref{eq:proof:score} with regard to $\rho$ and evaluate at $\alpha = \rho = 0$,
	\begin{align}\label{eq:proof:scoreRho}
		\begin{split}
			\nabla_{\rho} s(\alpha, \rho, \beta) \mid_{\alpha = \rho = 0}
			=& -\EE\left[\psi''\left(\nu(X) + (W-a(X))X^\top \beta\right)\zeta(X)(W-a(X))X \right].
		\end{split}
	\end{align}
	By \eqref{eq:condition4}, \eqref{eq:proof:scoreRho} is zero. Therefore, the score derived from \eqref{eq:MLE} satisfies the Neyman's orthogonality.	
\end{proof}


\begin{proof}[Proof of Proposition~\ref{prop:main}]
	
	
	Rewrite ${a}_n(x) - a(x) = \alpha_n \xi_n(x)$, ${\nu}_n(x) - \nu(x) = \rho_n \zeta_n(x)$, where $\EE[\xi_n^2(X)]$ and $\EE[\zeta_n^2(X)]$ are one. By the assumption that ${a}_n(x)$, ${\nu}_n(x)$ converge at rate $O(c_n)$, we have $\alpha_n$, $\rho_n  = O(c_n)$. Let $s(a(X_i), \nu(X_i),\beta)$ denote the score function of the $i$-th sample, and define the empirical score $s_n(a(x), \nu(x),\beta) = \frac{1}{n}\sum_{i=1}^n s(a(X_i), \nu(X_i),\beta)$. 
	For simplicity, we write $s_n(a_n(x), \nu_n(x), \beta_n) $ as $s_n(\alpha_n, \rho_n, \beta_n)$, then the derivatives of the score take the form
	\begin{align*}
		\nabla_{\alpha} s_n(\alpha_n, \rho_n, \beta_n) &= 
		\frac{1}{n}\sum_{i=1}^n\nabla_{\alpha}s(a(X_i) + \alpha \xi_n(X_i), \nu(X_i) + \rho_n \zeta_n(X_i), \beta_n) |_{\alpha = \alpha_n}, \\
		\nabla_{\rho} s_n(\alpha_n, \rho_n, \beta_n) &= 
		\frac{1}{n}\sum_{i=1}^n\nabla_{\rho}s(a(X_i) + \alpha_n \xi_n(X_i), \nu(X_i) + \rho \zeta_n(X_i), \beta_n) |_{\rho = \rho_n}, \\
		\nabla_{\beta} s_n(\alpha_n, \rho_n, \beta_n) &= 
		\frac{1}{n}\sum_{i=1}^n\nabla_{\beta}s(a(X_i) + \alpha_n \xi_n(X_i), \nu(X_i) + \rho_n \zeta_n(X_i), \beta) |_{\beta = \beta_n}.
	\end{align*}
	Similarly we have the second order derivatives.
	
	We first show $\beta_n$ is consistent. Taylor's expansion of $s_n(\alpha_n, \rho_n, \beta_n) $ at $\alpha_n = \rho_n = 0$ is 
	\begin{align}\label{eq:Taylor1}
		\begin{split}
			s_n(\alpha_n, \rho_n, \beta_n) 
			&=  s_n(0, 0, \beta_n) + \nabla_{\alpha} s_n(\alpha_{\varepsilon}, \rho_{\varepsilon}, \beta_n)\alpha_n + \nabla_{\rho} s_n(\alpha_{\varepsilon}, \rho_{\varepsilon}, \beta_n)\rho_n, 
		\end{split}
	\end{align}
	where $\alpha_{\varepsilon} \in [0,\alpha_n]$, $\rho_{\varepsilon} \in [0,\rho_n]$. Furthermore, Taylor's expansion of $s_n(0,0,\beta_n)$ at the true $\beta$ is
	\begin{align}\label{eq:Taylor1.5}
		\begin{split}
			s_n(0, 0, \beta_n) = s_n(0, 0, \beta) + \nabla_{\beta} s_n(0, 0, \beta_{\varepsilon})(\beta_n - \beta),
		\end{split}
	\end{align}
	where $\beta_{\varepsilon} \in [\beta_n, \beta]$.
	By central limit theorem (CLT), \begin{align}\label{eq:CLT}
		s_n(0,0,\beta) = s(0,0,\beta) + O_p(n^{-1/2}) = O_p(n^{-1/2}). 
	\end{align}
	Under the regularity condition, $s_n(0, 0, \beta)$ is Lipschitz in $\beta$ and $\calB$ is bounded,  then by the uniform law of large numbers (ULLN),
	\begin{align}\label{eq:LLN}
		\sup_{\beta' \in \calB}|\nabla_{\beta} s_n(0, 0, \beta')
		- \nabla_{\beta} s(0, 0, \beta')| = o_p(1).
	\end{align}
	Thus for $n$ large enough, the minimal eigenvalue of $\nabla_{\beta} s_n(0, 0, \beta_{\varepsilon})$ is at least half of the minimal eigenvalue of $\nabla_{\beta} s(0, 0, \beta_{\varepsilon})$ and is lower bounded by $C/2$.
	Notice that $\nabla_{\alpha} s_n(\alpha_n, \rho_n, \beta_n)$, $\nabla_{\rho} s_n(\alpha_n, \rho_n, \beta_n)$ are bounded,
	we plug Eq.~\eqref{eq:Taylor1.5}, \eqref{eq:CLT}, \eqref{eq:LLN} in Eq.~\eqref{eq:Taylor1},
	\begin{align*}
		0 
		&= s_n(\alpha_n, \rho_n, {\beta}_n) 
		= \nabla_{\beta} s_n(0, 0, \beta_{\varepsilon})(\beta_n - \beta)  + O_p(n^{-1/2} + \alpha_n + \rho_n)\\
		\implies &{\beta}_n - \beta
		= \left(\nabla_{\beta} s_n(0, 0, \beta_{\varepsilon})\right)^{-1} O_p(n^{-1/2} + \alpha_n + \rho_n) = o_p(1).
	\end{align*}
	Therefore, ${\beta}_n$ is consistent.
	
	We next prove the convergence rate of ${\beta}_n$. The second order Taylor's expansion of $s_n(\alpha_n, \rho_n, \beta_n) $ at $\beta_n = \beta$ and $\alpha = \rho = 0$ is 
	\begin{align}\label{eq:Taylor2}
		\begin{split}
			s_n(\alpha_n, \rho_n, \beta_n) 
			&=  s_n(0, 0, \beta_n) + \nabla_{\alpha} s_n(0, 0, \beta_n)\alpha_n + \nabla_{\rho} s_n(0,0, \beta_n)\rho_n \\
			&\quad ~+ \frac{1}{2} \nabla^2_{\alpha} s_n(\alpha_{\varepsilon}, \rho_{\varepsilon}, \beta_n)\alpha_n^2 + \frac{1}{2} \nabla^2_{\rho} s_n(\alpha_{\varepsilon}, \rho_{\varepsilon}, \beta_n)\rho_n^2 + \nabla_{\alpha\rho} s_n(\alpha_{\varepsilon}, \rho_{\varepsilon}, \beta_n)\alpha_n \rho_n\\
			&= s_n(0, 0, \beta) + \nabla_{\beta} s_n(0, 0, \beta_{\varepsilon})(\beta_n - \beta) + \nabla_{\alpha} s_n(0, 0, \beta_n)\alpha_n + \nabla_{\rho} s_n(0,0, \beta_n)\rho_n \\
			&\quad ~+ \frac{1}{2} \nabla^2_{\alpha} s_n(\alpha_{\varepsilon}, \rho_{\varepsilon}, \beta_n)\alpha_n^2 + \frac{1}{2} \nabla^2_{\rho} s_n(\alpha_{\varepsilon}, \rho_{\varepsilon}, \beta_n)\rho_n^2 + \nabla_{\alpha\rho} s_n(\alpha_{\varepsilon}, \rho_{\varepsilon}, \beta_n)\alpha_n \rho_n,
		\end{split}
	\end{align}
	where $\beta_{\varepsilon}\in [\beta_n,\beta]$, $\alpha_{\varepsilon} \in [0,\alpha]$, $\rho_{\varepsilon} \in [0,\rho]$.
	The first order Taylor's expansion of $\nabla_{\alpha} s_n(0, 0, \beta_n)$ at $\beta$ is
	\begin{align}\label{eq:Taylor3}
		\begin{split}
			\nabla_{\alpha} s_n(0, 0, \beta_n)
			&= \nabla_{\alpha} s_n(0, 0,\beta) + \nabla_{\alpha\beta} s_n(0, 0, \beta_{\varepsilon})(\beta_n - \beta)\\
			&= \nabla_{\alpha} s(0, 0,\beta) + O_p(n^{-1/2}) + \nabla_{\alpha\beta} s_n(0, 0, \beta_{\varepsilon})(\beta_n - \beta)\\
			&= O_p(n^{-1/2}) + \nabla_{\alpha\beta} s_n(0, 0, \beta_{\varepsilon})(\beta_n - \beta),
		\end{split}
	\end{align}
	where we apply CLT again in the second last equation, and use $\nabla_{\alpha} s(0, 0,\beta) = 0$ in the last equation.
	Similar analysis can be done for $\nabla_{\rho} s_n(0, 0, \beta_n)$.
	Plug Eq.~\eqref{eq:CLT}, \eqref{eq:Taylor3} into Eq.~\eqref{eq:Taylor2},
	\begin{align}\label{eq:Taylor4}
		\begin{split}
			s_n(\alpha_n, \rho_n, \beta)
			&= O_p(n^{-1/2}) +  \nabla_{\beta} s_n(0, 0, \beta_{\varepsilon})(\beta_n - \beta) + O_p(n^{-1/2} (\alpha_n + \rho_n)) \\
			&\quad~+ \nabla_{\alpha\beta} s_n(0, 0, \beta_{\varepsilon})(\beta_n - \beta)\alpha_n + \nabla_{\rho\beta} s_n(0, 0, \beta_{\varepsilon})(\beta_n - \beta)\rho_n \\
			&\quad~+ \frac{1}{2} \nabla^2_{\alpha} s_n(\alpha_{\varepsilon}, \rho_{\varepsilon}, \beta_n)\alpha_n^2 + \frac{1}{2} \nabla^2_{\rho} s_n(\alpha_{\varepsilon}, \rho_{\varepsilon}, \beta_n)\rho_n^2 + \nabla_{\alpha\rho} s_n(\alpha_{\varepsilon}, \rho_{\varepsilon}, \beta_n)\alpha_n \rho_n\\
			&= \left(\nabla_{\beta} s_n(0, 0, \beta_{\varepsilon}) + O_p(\alpha_n + \rho_n) \right)(\beta_n - \beta) + O_p(\alpha_n^2 + \rho_n^2 + \alpha_n \rho_n + n^{-1/2}),
		\end{split}
	\end{align}
	where we use the condition that the second derivatives are bounded. 
	As shown above, the minimal eigenvalue of $\nabla_{\beta} s_n(0, 0, \beta_{\varepsilon})$ is uniformly lower bounded by $C/2$.
	Thus Eq.~\eqref{eq:Taylor4} implies
	\begin{align*}
		\beta_n - \beta
		= \left(\nabla_{\beta} s_n(0, 0, \beta_{\varepsilon}) + O_p(\alpha_n + \rho_n)\right)^{-1} O_p(n^{-1/2} + \alpha_n^2 + \rho_n^2 + \alpha_n \rho_n)
		= O_p(n^{-1/2} + c_n^2),
	\end{align*}
	and we finish the proof.
\end{proof}

\subsection{Cox model}\label{subsec:appendix:cox}

\begin{proof}[Proof of Claim \ref{claim:coxExp}]
	Since $\lambda(y) > 0$, then $\Lambda(y)$ is strictly increasing and there exists an inverse function $\Lambda^{-1}(z)$. We compute the cumulative distribution function of $Z = \Lambda(Y)$,
	\begin{align*}
		&\quad ~\PP(Z > z \mid X = x, W = w)
		= \PP(\Lambda(Y) > z \mid X = x, W = w)\\
		&= \PP(Y > \Lambda^{-1}(z) \mid X = x, W = w)
		= e^{-\Lambda(\Lambda^{-1}(z)) e^{\eta_w(x)} }
		= e^{-z e^{\eta_w(x)} }.
	\end{align*}
	Therefore $\Lambda(Y) \mid X = x, W= w$ follows the exponential distribution with rate $e^{\eta_w(x)}$.
\end{proof}

\begin{proposition}\label{prop:derivationCox}
	Let $Y$ be generated from the model \eqref{eq:model:cox} with baseline $\Lambda(y)$, parameter functions $\eta_w(x) = \nu(x) + (w-a(x)) x^\top \beta$, and random censorship.
	For arbitrary $\nu^*(x)$, $\beta'$, let $\ell(Y; \beta', a(x),  \nu^*(x), \Lambda(y))$ be the full log-likelihood of  $Y$ evaluated at parameter functions $\nu^*(x) + (w-a(x)) x^\top \beta'$.  Then $a(x)$, $\nu(x)$ in \eqref{eq:conditionCox3} ensure 
	\begin{align*}
		\beta = \argmax_{\beta'} \EE\left[\ell(Y; \beta', a(x),  \nu^*(x), \Lambda(y))\right]
	\end{align*}	
	is true under the log-likelihood quadratic approximation in Lemma \ref{lemm:likelihoodApproximationExponential}.
\end{proposition}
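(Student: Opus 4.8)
The plan is to mirror the proof of Proposition~\ref{prop:derivationExponential}, replacing the exponential-family likelihood by the censored Cox full likelihood recast in Poisson form, and replacing the variance-function ratio by a censoring-probability ratio. First I would use Claim~\ref{claim:coxExp} to put the full likelihood into an exponential-family shape amenable to Lemma~\ref{lemm:likelihoodApproximationExponential}. Writing $Z=\Lambda(Y)$ and $D=\Lambda(C)$, so that $\Lambda(Y^c)=Z\wedge D$ and $\Delta=\1_{\{Z\le D\}}$, the contribution of one observation to \eqref{eq:MLECoxFull} is, up to an additive constant free of the parameters,
\[
\ell = \Delta\,\eta_w(x) - \Lambda(Y^c)\,e^{\eta_w(x)} = \Delta\,\theta_w - e^{\theta_w} - \Delta\log\Lambda(Y^c),\qquad \theta_w:=\eta_w(x)+\log\Lambda(Y^c).
\]
This is precisely a Poisson log-likelihood in the response $\Delta$ with natural parameter $\theta_w$ and cumulant $\psi(\theta)=e^\theta$, so $\psi'(\theta)=\psi''(\theta)=e^\theta$. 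Since varying $\beta'$ shifts $\theta_w$ by $(w-a(x))x^\top(\beta'-\beta)$ (the offset $\log\Lambda(Y^c)$ does not depend on $\beta'$), Lemma~\ref{lemm:likelihoodApproximationExponential} applies with $\psi=\exp$ and response $\Delta$, and the first-order condition of the expected quadratic approximation reduces, exactly as in \eqref{eq:condition}, to
\[
\EE\big[(W-a(X))\,X\,(\Delta - \Lambda(Y^c)\,e^{\eta_W^*(X)})\big]=0,\qquad \eta_w^*(x)=\nu^*(x)+(w-a(x))x^\top\beta .
\]

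The crux is then to evaluate the inner conditional expectation, and this is where the censoring structure enters. Conditioning on $(X,W)$ and on $D$, the martingale identity for the censored exponential gives $\EE[\Delta\mid X,W,D]=1-e^{-e^{\eta_w}D}=\EE[e^{\eta_w}(Z\wedge D)\mid X,W,D]=\EE[e^{\eta_W}\Lambda(Y^c)\mid X,W,D]$, using $Z\sim\mathrm{Exp}(e^{\eta_w})$ from Claim~\ref{claim:coxExp} together with $Z\indep D\mid X,W$ (Assumption~\ref{assu:censor}). Averaging over $D$ yields $\EE[\Delta\mid X,W]=\EE[\Lambda(Y^c)e^{\eta_W}\mid X,W]$, and since $\Delta=\1_{\{C\ge Y\}}$ this common value equals $\PP(C\ge Y\mid X,W)$. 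Because $e^{\eta_W}$ and $e^{\eta_W^*}$ are constant given $(X,W)$, it follows that
\[
\EE[\Delta-\Lambda(Y^c)e^{\eta_W^*}\mid X,W]=\EE[\Lambda(Y^c)\mid X,W]\big(e^{\eta_W}-e^{\eta_W^*}\big)\approx \EE[\Lambda(Y^c)\mid X,W]\,e^{\eta_W}(\eta_W-\eta_W^*),
\]
by a first-order Taylor expansion of the exponential, matching the remainder control already used for \eqref{eq:condition2}.

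Finally I would substitute $\eta_W-\eta_W^*=\nu-\nu^*$ (independent of $W$) and split the outer expectation into the $W=1$ and $W=0$ pieces, which merge as in \eqref{eq:condition3}:
\[
0=\EE\Big[X(\nu-\nu^*)\big(e(X)(1-a(X))g_1(X)-(1-e(X))a(X)g_0(X)\big)\Big],\quad g_w(x):=e^{\eta_w(x)}\EE[\Lambda(Y^c)\mid X=x,W=w].
\]
Requiring this to hold for every approximation-error direction $(\nu-\nu^*)/\|\nu-\nu^*\|$ forces $\tfrac{a(x)}{1-a(x)}=\tfrac{e(x)}{1-e(x)}\tfrac{g_1(x)}{g_0(x)}$; the identity $g_w(x)=\PP(C\ge Y\mid W=w,X=x)$ derived above then turns this into precisely \eqref{eq:conditionCox3}, with $\nu(x)=a(x)\eta_1(x)+(1-a(x))\eta_0(x)$. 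The main obstacle is the censored-exponential martingale identity of the crux step: it is what replaces the variance-function ratio $V_1/V_0$ of the plain exponential-family case by the censoring-probability ratio, and it relies essentially on both the exponential reduction of Claim~\ref{claim:coxExp} and the conditional independence of censoring in Assumption~\ref{assu:censor}.
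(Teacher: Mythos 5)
Your proof is correct and reaches the paper's own orthogonality condition \eqref{eq:conditionCox} and the ratio \eqref{eq:conditionCox3} by the same overall strategy as the paper's proof (mirror Proposition~\ref{prop:derivationExponential}: quadratic approximation of the full likelihood, first-order condition at $\beta'=\beta$, then require invariance to the direction of $\nu^*-\nu$), but the two key ingredients are justified differently. First, you recast the censored full likelihood as a Poisson log-likelihood in the response $\Delta$ with natural parameter $\theta_w=\eta_w(x)+\log\Lambda(Y^c)$ and cumulant $\psi=\exp$, so that Lemma~\ref{lemm:likelihoodApproximationExponential} applies literally; the paper instead performs ``a similar quadratic approximation'' ad hoc in \eqref{eq:cox:fulllike2}. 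The two expansions are algebraically identical (there $\psi''(\theta^*_W)=\Lambda(Y^c)e^{\eta^*_W(X)}$ and $r=\Delta/(\Lambda(Y^c)e^{\eta^*_W(X)})-1$), but your observation turns an argument by analogy into an exact appeal to the lemma. Second, and more substantively, where the paper establishes the identity $\EE[\Delta\mid W,X]=e^{\eta_W(X)}\,\EE[\Lambda(Y^c)\mid W,X]=\PP(C\ge Y\mid W,X)$ through the explicit double-integral computation of Lemma~\ref{lemm:equationCox}, you derive it by conditioning on $D=\Lambda(C)$ and invoking Claim~\ref{claim:coxExp} together with Assumption~\ref{assu:censor}: given $(X,W,D)$, the variable $Z=\Lambda(Y)$ is exponential with rate $e^{\eta_W}$, hence $\EE[\Delta\mid X,W,D]=1-e^{-e^{\eta_W}D}=e^{\eta_W}\,\EE[Z\wedge D\mid X,W,D]$, and averaging over $D$ gives the identity. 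This buys a shorter, integration-free derivation that makes transparent why the censoring-probability ratio replaces the variance ratio $V_1(x)/V_0(x)$ of \eqref{eq:condition4}; what the paper's integral form of Lemma~\ref{lemm:equationCox} buys in exchange is the explicit expression for $\PP(C\ge Y\mid W,X)$ as an integral against $f_C(c\mid W,X)$, which the paper reuses elsewhere (Table~\ref{tab:example2} and the proofs of Propositions~\ref{prop:mainCoxFull} and \ref{claim:cox:null}).
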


\begin{proof}[Proof of Proposition \eqref{prop:derivationCox}]\label{proof:robust:cox:full}
	We write down the full log-likelihood
	\begin{align}\label{eq:cox:fulllike}
		\ell(Y^c; \eta_w(x), \Lambda(y))
		= \Delta (\lambda(Y^c) + \eta_W(X))  - \Lambda(Y^c) e^{\eta_W(X)}.
	\end{align}
	We stick to the reparametrization \eqref{eq:model:exponentialFamily2} and derive the $a(x)$ robust to baseline model misspecifications. Let $\nu^*(x)$ be an approximation of the baseline function, and $\eta^*_w(x) := \nu^*(x) + (w-a(x))x^\top \beta$.
	A similar quadratic approximation as Lemma \ref{lemm:likelihoodApproximationExponential} gives us\footnote{We omit the part independent of $\beta'$ in the right hand side of \eqref{eq:cox:fulllike2}.}
	\begin{align}\label{eq:cox:fulllike2}
		\begin{split}
			\ell(Y^c; \beta', &a(x), \nu^*(x), \Lambda(y)) 
			\approx  \ell(Y^c;\beta, a(x), \nu^*(x),\Lambda(y))\\ 
			&-\frac{1}{2} \Lambda(Y^c) e^{\eta_W^*(X)} \left(\frac{\Delta}{\Lambda(Y^c) e^{\eta_W^*(X)}} - 1 - (W - a(X)) X^\top (\beta' -\beta)\right)^2.
		\end{split}
	\end{align}
	Maximizing the expectation of the quadratic term yields
	\begin{align}\label{eq:solutionCox}
		\beta' -\beta = \EE\left[\Lambda(Y^c) e^{\eta_W^*(X)} (W-a(X))^2XX^\top\right]^{-1} \EE\left[ (W-a(X))X\left(\Delta -  \Lambda(Y^c) e^{\eta_W^*(X)}\right) \right].
	\end{align}
	Our target $\beta' = \beta$ is equivalent to
	\begin{align}\label{eq:conditionCox}
		0 
		= \EE\left[(W-a(X))X\left(\Delta - \Lambda(Y^c) e^{\eta_W^*(X)} \right)\right].
	\end{align} 
	Plug Lemma \ref{lemm:equationCox} in \eqref{eq:conditionCox}, we have
	\begin{align}\label{eq:conditionCox2}
		0 
		= \EE\left[(W-a(X))X\left(1 - e^{\eta_W^*(X)-\eta_W(X)} \right)\int \left(1-e^{-\Lambda(c)e^{\eta_W(X)}}\right) f_C(c \mid W, X) ~dc\right],
	\end{align}
	where $f_C(c \mid W, X)$ denotes the conditional density of the censoring time.
	Notice that $\eta^*_w(x) - \eta_w(x) = \nu^*(x) - \nu(x)$ is independent of the treatment assignment $W$ and can be in arbitrary direction, then the condition \eqref{eq:conditionCox2} entails
	\begin{align*}
		\frac{a(x)}{1-a(x)} = \frac{e(x)}{1-e(x)}\frac{\int \left(1-e^{-\Lambda(c)e^{\eta_1(X)}}\right) f_C(c\mid W=1, X) ~dc}{\int \left(1-e^{-\Lambda(c)e^{\eta_0(X)}}\right) f_C(c\mid W=0, X) ~dc}
		= \frac{e(x)}{1-e(x)} \frac{\PP(C \ge Y \mid W=1, X=x )}{\PP(C \ge Y \mid W=0, X = x)}.
	\end{align*}
	
\end{proof}

\begin{lemma}\label{lemm:equationCox} 
	Let $Y$ be generated from the Cox model \eqref{eq:model:cox} with the natural parameter function $\eta_w(x)$. Consider random censorship and let $f_C(c \mid W, X)$ be the conditional density of the censoring time. Then 	
	\begin{align*}
		\begin{split}
			&\EE[\Delta \mid W, X]
			=\PP(C \ge Y \mid W, X )
			= \int \left(1 - e^{-\Lambda(c) e^{\eta_W(X)}}\right) f_C(c \mid W, X) ~dc, \\
			&\EE[\Lambda(Y^c) \mid W, X]
			= e^{-\eta_W(X)}\int \left(1-e^{-\Lambda(c)e^{\eta_W(X)}}\right) f_C(c \mid W, X) ~dc.
		\end{split}
	\end{align*}
\end{lemma}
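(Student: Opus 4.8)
The plan is to treat the two identities separately, in both cases reducing the computation to a one-dimensional integral over the censoring time by invoking the conditional independence $Y \independent C \mid W, X$ supplied by Assumption~\ref{assu:censor}.

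For the first identity, observe that $\Delta = \1_{\{C \ge Y\}}$ by definition, so $\EE[\Delta \mid W, X] = \PP(C \ge Y \mid W, X)$ is immediate. To evaluate this probability I would condition on the censoring time and use both $Y \independent C \mid W, X$ and the Cox survival function $\PP(Y > c \mid W, X) = e^{-\Lambda(c) e^{\eta_W(X)}}$ implied by model~\eqref{eq:model:cox}:
\begin{align*}
	\PP(C \ge Y \mid W, X)
	= \int \PP(Y \le c \mid W, X)\, f_C(c \mid W, X)\, dc
	= \int \left(1 - e^{-\Lambda(c) e^{\eta_W(X)}}\right) f_C(c \mid W, X)\, dc,
\end{align*}
which is the claimed expression.

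For the second identity, the key observation is that $\Lambda$ is nondecreasing, so $\Lambda(Y^c) = \Lambda(\min\{Y, C\}) = \min\{\Lambda(Y), \Lambda(C)\}$. By Claim~\ref{claim:coxExp}, $\Lambda(Y) \mid W, X$ is exponential with rate $e^{\eta_W(X)}$, and it remains independent of $C$ given $W, X$. Conditioning on $C = c$ and applying the tail formula $\EE[\min\{U, t\}] = \int_0^t \PP(U > s)\, ds$ for a nonnegative variable $U$, taken here with $U = \Lambda(Y)$ and $t = \Lambda(c)$, yields
\begin{align*}
	\EE[\Lambda(Y^c) \mid W, X, C = c]
	= \int_0^{\Lambda(c)} e^{-s\, e^{\eta_W(X)}}\, ds
	= e^{-\eta_W(X)}\left(1 - e^{-\Lambda(c) e^{\eta_W(X)}}\right).
\end{align*}
Integrating against $f_C(c \mid W, X)$ and pulling the constant $e^{-\eta_W(X)}$ outside the integral then produces the second identity.

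The computations are elementary; the only points requiring care are (i) justifying the order of conditioning through Assumption~\ref{assu:censor}, so that $\Lambda(Y)$ retains its exponential law after conditioning on $C$, and (ii) the truncated-exponential expectation, where the monotone rewriting $\Lambda(Y^c) = \min\{\Lambda(Y), \Lambda(C)\}$ converts the problem into a clean exponential tail integral. I do not anticipate any substantive obstacle beyond this bookkeeping.
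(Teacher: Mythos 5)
Your proof is correct, and for the main (second) identity it takes a genuinely different computational route from the paper. The paper decomposes $\Lambda(Y^c) = \Lambda(Y)\1_{\{C \ge Y\}} + \Lambda(C)\1_{\{C < Y\}}$, writes both pieces as explicit double integrals against the conditional density of $Y$, changes variables to $z = \Lambda(y)e^{\eta_W(X)}$, and applies the antiderivative $\int z e^{-z}\,dz = -e^{-z}(z+1)$; the stated formula only emerges after a cancellation between the censored and uncensored contributions. You instead observe $\Lambda(Y^c) = \min\{\Lambda(Y),\Lambda(C)\}$ (valid since $\Lambda$ is nondecreasing), invoke Claim~\ref{claim:coxExp} directly, and apply the tail formula $\EE[\min\{U,t\}] = \int_0^t \PP(U > s)\,ds$, which collapses the inner conditional expectation to a one-line exponential integral. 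Both arguments rest on the same two ingredients — the conditional independence $Y \independent C \mid W,X$ from Assumption~\ref{assu:censor}, which lets you condition on $C=c$ without disturbing the law of $\Lambda(Y)$, and the exponential distribution of $\Lambda(Y)$ given $(W,X)$ — but your version avoids the case split, the integration by parts, and the cancellation, so it is shorter and harder to get wrong. What the paper's more explicit computation buys is that it displays the censored and uncensored contributions separately, in a form that mirrors the integral manipulations reused later (e.g., in the proof of Lemma~\ref{lemm:likelihoodApproximationCox}); your version makes the structural reason for the result more transparent, namely that truncating an exponential variable at $\Lambda(c)$ has mean $e^{-\eta_W(X)}\left(1 - e^{-\Lambda(c)e^{\eta_W(X)}}\right)$, which is exactly $e^{-\eta_W(X)}$ times the non-censoring probability appearing in the first identity. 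Your treatment of the first identity coincides with the paper's (definitional, then condition on $C$).
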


\begin{proof}[Proof of Lemma \ref{lemm:equationCox}]
	By definition, the first equality in Lemma \ref{lemm:equationCox} is true. For  the second equality, by definition and random censorship, 
	\label{proof:cox:censoring}
	\begin{align*}
		\begin{split}
			\EE\left[\Lambda(Y^c) \mid W, X\right]
			&= \EE\left[\Lambda(Y) \1_{\{C \ge Y\}} \mid W, X\right] + \EE\left[\Lambda(C) \1_{\{C < Y\}} \mid W, X\right]\\
			&=\int_0^\infty \int_{0}^c \Lambda(y) e^{-\Lambda(y)e^{\eta_W(X)}} d\Lambda(y)e^{\eta_W(X)}  ~f_C(c \mid W, X) dc \\
			&\quad~+ \int_0^\infty \int_{c}^{\infty} \Lambda(c) e^{-\Lambda(y)e^{\eta_W(X)}} d\Lambda(y)e^{\eta_W(X)}  ~f_C(c \mid W, X) dc\\
			&= e^{-\eta_W(X)} \int_0^\infty \int_{0}^{\Lambda(c)e^{\eta_W(X)}} z e^{-z} dz  ~f_C(c \mid W, X) dc \\
			&\quad ~ + \int_0^\infty \Lambda(c) e^{-\Lambda(c)e^{\eta_W(X)}} ~f_C(c \mid W, X) dc \\
			&= e^{-\eta_W(X)} \int_0^\infty \left(1-e^{-\Lambda(c)e^{\eta_W(X)}}\left(\Lambda(c)e^{\eta_W(X)}+1\right)\right) f_C(c \mid W, X) dc \\
			&\quad ~+ \int_0^\infty \Lambda(c) e^{-\Lambda(c)e^{\eta_W(X)}} ~f_C(c \mid W, X) dc \\
			&= e^{-\eta_W(X)} \int_0^\infty \left(1 - e^{-\Lambda(c)e^{\eta_W(X)}}\right) f_C(c \mid W, X) dc,
		\end{split}
	\end{align*}
	where we use $\int z e^{-z} dz = - e^{-z}(z+1)$.
\end{proof}


\begin{proof}[Proof of Proposition~\ref{prop:mainCoxFull}]
	To show Proposition~\ref{prop:mainCoxFull} is true, we only need to verify the score function is Neyman's orthogonal, and the rest of the proof is the same as that of Proposition~\ref{prop:main}. 
	
	The score equation of the full likelihood function at $a_n(x) = a(x) + \alpha_n \xi_n(x)$, $\nu_{n}(x) = \nu(x) + \rho_n \zeta_n(x)$ is
	\begin{align}\label{eq:proof:cox:scoreFull}
		\begin{split}
			s(\alpha_n, \rho_n, \beta_n) = \EE\left[\Delta (W-a_{n}(X))X - \Lambda(Y^c) (W-a_{n}(X))X e^{\nu_{n}(X) + (W-a_{n}(X))X^\top \beta_n}\right]. 
		\end{split}
	\end{align}
	The derivative of \eqref{eq:proof:cox:scoreFull} with regard to $\rho$ evaluated at $\alpha = \rho = 0$ and $\beta_n = \beta$ is 
	\begin{align}\label{eq:proof:cox:scoreRhoFull}
		\begin{split}
			\nabla_{\rho} s(0,0, \beta)
			&= -\EE\left[\Lambda(Y^c) (W-a(X))X e^{\nu(X) + (W-a(X))X^\top \beta}\zeta(X)\right]. 
		\end{split}
	\end{align}
	By Lemma \ref{lemm:equationCox} and \eqref{eq:conditionCox3}, \eqref{eq:proof:cox:scoreRhoFull} becomes
	\begin{align*}
		\begin{split}
			\nabla_{\rho} s(0,0, \beta) 
			&= -\EE\left[\PP(C \ge Y \mid W, X) e^{-\eta_W(X)} (W-a(X))X e^{\nu(X) + (W-a(X))X^\top \beta }\zeta(X)\right]\\
			&= -\EE\left[\PP(C \ge Y \mid W, X) (W-a(X))X \zeta(X)\right] = 0.
		\end{split}
	\end{align*}
	Next, the derivative of \eqref{eq:proof:cox:scoreFull} with regard to $\alpha$ evaluated at $\alpha = \rho = 0$ and $\beta_n = \beta$ is 
	\begin{align}\label{eq:proof:cox:scoreAlphaFull}
		\begin{split}
			\nabla_{\alpha} s(0,0, \beta) 
			&= -\EE\left[\Delta \xi(X)X\right] + \EE\left[\Lambda(Y^c) \xi(X)X e^{\nu(X) + (W-a(X))X^\top \beta }\right] \\
			&\quad~ + \EE\left[\Lambda(Y^c) (W-a(X))X e^{\nu(X) + (W-a(X))X^\top \beta }\xi(X) X^\top \beta\right]. 
		\end{split}
	\end{align}
	The third term in \eqref{eq:proof:cox:scoreAlphaFull} is zero by the same argument of \eqref{eq:proof:cox:scoreRhoFull}. For the second term, by Lemma \ref{lemm:equationCox}, 
	\begin{align*}
		\begin{split}
			\EE\left[\Lambda(Y^c) \xi(X)X e^{\nu(X) + (W-a(X))X^\top \beta }\right]
			= \EE\left[\PP(C \ge Y \mid W, X) \xi(X) X \right] 
			= \EE\left[\Delta \xi(X)X\right].
		\end{split}
	\end{align*}
	Therefore, combine the three terms and we prove \eqref{eq:proof:cox:scoreAlphaFull} is zero. In this way, we have verified the Neyman's orthogonality of the full likelihood score.
\end{proof}


\begin{proof}[Analysis of  Algorithm~\ref{algo:coxPartial}]\label{proof:cox:partial:robustness}
	The partial likelihood proposed by \citep{cox1972regression} is
	\begin{align}\label{eq:cox:partiallike}
		\text{pl}_n(\eta_w(x))
		= \frac{1}{n}\sum_{\Delta_i = 1} \left(\eta_{W_i}(X_i) - \log\left(\sum_{\calR_i} e^{\eta_{W_i}(X_i)}\right) \right), 
	\end{align}
	where the risk set of subject $i$ is defined as $\calR_i = \{j: Y_j^c \ge Y_i^c\}$. We follow the notations in \citep{tsiatis1981large} and analyse the asymptotic value of the partial likelihood \eqref{eq:cox:partiallike}
	\begin{align}\label{eq:cox:partiallike2}
		\text{pl}(\eta_w(x))
		=  \EE\left[\Delta \eta_W(X) \right] - \int_{0}^\infty \log\left(\EE\left[e^{\eta_W(X)} 1_{\{Y^c \ge y\}}\right]\right) d\PP(y)
	\end{align}
	where $\PP(y) = \EE[Y_i^c \le y]$ is the expected probability that the censored survival time of a randomly selected unit is no longer than $y$. We continue to use the notations in the full likelihood analysis and obtain by Lemma \ref{lemm:likelihoodApproximationCox}
	\begin{align}\label{eq:cox:partiallike3}
		\begin{split}
			\text{pl}(\eta'_w(x))
			&\approx  \text{pl}(\eta^*_w(x)) + (\beta' - \beta)^\top \left(\underbrace{\EE\left[ (\nu(X) - \nu^*(X))\PP(C \ge Y \mid W, X) (W-a(X))X\right]}_{:=\text{(a)}}  \right. \\
			&\left.\quad\underbrace{\int_{0}^\infty \frac{\EE\left[(\nu(X) - \nu^*(X)) e^{\eta_W^*(X)}1_{\{Y^c \ge y\}}\right]}{\EE\left[e^{\eta_W^*(X)} 1_{\{Y^c \ge y\}}\right]}\EE\left[e^{\eta_W^*(X)} 1_{\{Y^c \ge y\}}(W-a(X))X\right] d\PP(y)}_{:=\text{(b)}} \right) \\
			&\quad - (\beta' - \beta)^\top \Sigma (\beta' - \beta),
		\end{split}
	\end{align}
	where $\Sigma$ is a positively definite weight matrix. To guarantee the robustness to baseline parameters, we look for $a(x)$ such that $(a) + (b)$ in \eqref{eq:cox:partiallike3} is zero. Nevertheless, the $a(x)$ in \eqref{eq:conditionCox3} ensures part $(a)$ to be zero, but not part $(b)$, and thus Proposition~\ref{prop:main} is not valid. We comment that part $(b)$ is difficult to handle due to the three expectations and the integration, which inhibits the construction of the desired $a(x)$.
\end{proof}

\begin{lemma}\label{lemm:likelihoodApproximationCox}
	Let $Y$ be generated from the Cox model \eqref{eq:model:cox} with the natural parameter function $\eta_w(x) = \nu(x) + (w-a(x)) x^\top \beta$. Then for $\eta^*_w(x) = \nu^*(x) + (w-a(x)) x^\top \beta$ and $\eta'_w(x) = \nu^*(x) + (w-a(x)) x^\top \beta'$, the asymptotic partial likelihood satisfies 
	\begin{align*}
		\begin{split}
			\text{pl}(\eta'_w(x))
			&=  \text{pl}(\eta^*_w(x)) + (\beta' - \beta)^\top \left(\underbrace{\EE\left[ (\nu(X) - \nu^*(X))\PP(C \ge Y \mid W, X) (W-a(X))X\right]}_{:=\text{(a)}}  \right. \\
			&\left.\quad\underbrace{\int_{0}^\infty \frac{\EE\left[(\nu(X) - \nu^*(X)) e^{\eta_W^*(X)}1_{\{Y^c \ge y\}}\right]}{\EE\left[e^{\eta_W^*(X)} 1_{\{Y^c \ge y\}}\right]}\EE\left[e^{\eta_W^*(X)} 1_{\{Y^c \ge y\}}(W-a(X))X\right] d\PP(y)}_{:=\text{(b)}} \right) \\
			&\quad - (\beta' - \beta)^\top \Sigma (\beta' - \beta) + O\left(\|\beta' - \beta\|_2^3\right),
		\end{split}
	\end{align*}
	where $\Sigma$ is a positively definite weight matrix.
\end{lemma}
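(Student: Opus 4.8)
The plan is to treat the asymptotic partial likelihood \eqref{eq:cox:partiallike2} as a smooth functional of the plugged-in tilt and Taylor-expand it in $\beta'-\beta$ around $\beta'=\beta$, keeping the baseline fixed at the misspecified $\nu^*$. Writing $h_W(X)=(W-a(X))X^\top(\beta'-\beta)$ so that $\eta'_W(X)=\eta^*_W(X)+h_W(X)$, and using that $\text{pl}$ depends on the candidate parameter only through its explicit occurrences (the law of the data and the measure $\PP$ stay fixed at the truth), I would first record the two directional derivatives
\[
	D\text{pl}(\eta^*)[h]=\EE[\Delta h_W]-\int_0^\infty \frac{\EE[h_W e^{\eta^*_W}1_{\{Y^c\ge y\}}]}{\EE[e^{\eta^*_W}1_{\{Y^c\ge y\}}]}\,d\PP(y),
\]
\[
	D^2\text{pl}(\eta^*)[h,h]=-\int_0^\infty\!\left(\frac{\EE[h_W^2 e^{\eta^*_W}1_{\{Y^c\ge y\}}]}{\EE[e^{\eta^*_W}1_{\{Y^c\ge y\}}]}-\Big(\frac{\EE[h_W e^{\eta^*_W}1_{\{Y^c\ge y\}}]}{\EE[e^{\eta^*_W}1_{\{Y^c\ge y\}}]}\Big)^2\right)d\PP(y),
\]
since $\EE[\Delta\eta'_W]$ is linear in the candidate and the logarithm of the risk-set expectation contributes the usual cumulant (mean/variance) structure. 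Taylor's theorem then gives $\text{pl}(\eta')=\text{pl}(\eta^*)+D\text{pl}(\eta^*)[h]+\tfrac12 D^2\text{pl}(\eta^*)[h,h]+O(\|\beta'-\beta\|_2^3)$, so the whole task reduces to identifying these two derivatives with $(\beta'-\beta)^\top((a)+(b))$ and $-(\beta'-\beta)^\top\Sigma(\beta'-\beta)$.

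The core step is to rewrite the linear term so that the baseline error $g:=\nu-\nu^*$ appears explicitly, which is exactly what splits off $(a)$ and $(b)$. The key input I would invoke is that the true tilt $\eta$ is a stationary point of the asymptotic partial likelihood, equivalently the population score identity $\EE[\Delta\phi_W]=\int_0^\infty \EE[\phi_W e^{\eta_W}1_{\{Y^c\ge y\}}]/\EE[e^{\eta_W}1_{\{Y^c\ge y\}}]\,d\PP(y)$ for every bounded $\phi_W=\phi(W,X)$; this follows from the counting-process compensator of the event process under \eqref{eq:model:cox} together with $\EE[\Delta\mid W,X]=\PP(C\ge Y\mid W,X)$ from Lemma~\ref{lemm:equationCox}. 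Applying it with $\phi=h_W$ turns $\EE[\Delta h_W]$ into the same integral evaluated at the true tilt, so $D\text{pl}(\eta^*)[h]=\int_0^\infty\big(R_\eta(y)-R_{\eta^*}(y)\big)d\PP(y)$ with $R_\eta(y)=\EE[h_W e^{\eta_W}1_{\{Y^c\ge y\}}]/\EE[e^{\eta_W}1_{\{Y^c\ge y\}}]$. Expanding this difference of ratios to first order in $g$ (so $\eta^*_W=\eta_W-g$) produces two pieces: the numerator-derivative piece $\int_0^\infty \EE[g h_W e^{\eta_W}1_{\{Y^c\ge y\}}]/\EE[e^{\eta_W}1_{\{Y^c\ge y\}}]\,d\PP(y)$, which by the score identity applied to $\phi=g(W-a)X$ collapses to $\EE[\Delta\, g(W-a)X]^\top(\beta'-\beta)=(\beta'-\beta)^\top(a)$ (recovering the displayed form of $(a)$ through $\EE[\Delta\mid W,X]=\PP(C\ge Y\mid W,X)$); and the denominator-derivative piece, which is the self-normalizing risk-set correction $(b)$.

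Finally, the quadratic term $D^2\text{pl}(\eta^*)[h,h]$ is the negative of the $\PP$-integral of the conditional variance of $h_W$ under the risk-set reweighting $e^{\eta^*_W}1_{\{Y^c\ge y\}}$, so $\tfrac12 D^2\text{pl}(\eta^*)[h,h]=-(\beta'-\beta)^\top\Sigma(\beta'-\beta)$ with $\Sigma=\tfrac12\int_0^\infty \cov_y[(W-a)X]\,d\PP(y)$ nonnegative definite, and positive definite whenever $(W-a(X))X$ is nondegenerate on the risk sets; this is the claimed $\Sigma\succ0$. Collecting the three pieces yields the stated expansion with remainder $O(\|\beta'-\beta\|_2^3)$. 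I expect the main obstacle to be the score identity: it is the one place where the martingale/compensator structure of the Cox model and the precise meaning of $\PP$ (the expected-events measure, for which $d\PP(y)=\EE[e^{\eta_W}1_{\{Y^c\ge y\}}]\lambda(y)\,dy$) must be pinned down. Once it is in hand, the remaining work — differentiating the ratios and interchanging $d/d\beta'$, $d/dg$ with the expectations and the $y$-integral — is routine and is justified by the uniform boundedness of the covariates and tilts assumed in Proposition~\ref{prop:main}.
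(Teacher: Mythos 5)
Your proposal is correct and takes essentially the same route as the paper's proof: a second-order expansion of the asymptotic partial likelihood in $\beta'-\beta$ combined with a first-order linearization in $\nu-\nu^*$, where part (a) emerges from the numerator of the risk-set ratio converted back to a censoring-probability expectation, part (b) from the denominator (self-normalization) correction, and $\Sigma$ from the variance-type second-order term. Your ``score identity'' $\EE[\Delta\phi_W]=\int_0^\infty \EE[\phi_W e^{\eta_W}1_{\{Y^c\ge y\}}]/\EE[e^{\eta_W}1_{\{Y^c\ge y\}}]\,d\PP(y)$ is precisely the paper's Lemma~\ref{lemm:equationCox} combined with $d\PP(y)=\lambda(y)\EE[e^{\eta_W}1_{\{Y^c\ge y\}}]\,dy$ and Fubini --- the same computation the paper carries out explicitly for its term (II.a.1) --- so the two arguments differ only in packaging, with yours organizing the identity as a reusable tool applied twice rather than as a one-off integration.
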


\begin{proof}[Proof of Lemma \ref{lemm:likelihoodApproximationCox}]
	By \eqref{eq:cox:partiallike2}, the asymptotic partial likelihood obeys
	\begin{align*}
		\text{pl}(\eta_w'(x))
		&= \text{pl}(\eta_w^*(x)) + \underbrace{\EE\left[\Delta (W-a(X))X^\top (\beta' -\beta) \right]}_{:=\text{(I)}} - \underbrace{\int_{0}^\infty \log\left(\frac{\EE\left[e^{\eta_W'(X)} 1_{\{Y^c \ge y\}}\right]}{\EE\left[e^{\eta_W^*(X)} 1_{\{Y^c \ge y\}}\right]}\right) d\PP(y)}_{:=\text{(II)}}.
	\end{align*}
	For part (I), by the random censoring mechanism,
	\begin{align}\label{eq:proof:cox:(I)}
		\text{(I)} = \EE\left[\PP(C \ge Y \mid W, X) (W-a(X))X^\top (\beta' -\beta) \right].
	\end{align}
	For part (II), we apply the approximation $\log(1+x) = x - x^2/2 + O(x^3)$ and get 
	\begin{align*}
		\begin{split}
			\text{(II)} &= \underbrace{\int_{0}^\infty \frac{\EE\left[\left(e^{\eta_W'(X)} - e^{\eta_W^*(X)} \right)1_{\{Y^c \ge y\}}\right]}{\EE\left[e^{\eta_W^*(X)} 1_{\{Y^c \ge y\}}\right]} d\PP(y)}_{:=\text{(II.a)}} \\
			&\quad ~- \underbrace{\int_{0}^\infty \frac{1}{2} \left(\frac{\EE\left[\left(e^{\eta_W'(X)} - e^{\eta_W^*(X)} \right)1_{\{Y^c \ge y\}}\right]}{\EE\left[e^{\eta_W^*(X)} 1_{\{Y^c \ge y\}}\right]}\right)^2 d\PP(y)}_{:=\text{(II.b)}} + r,
		\end{split}
	\end{align*}
	where the residual $r = O\left( \int_{0}^\infty \left(\frac{\EE\left[\left(e^{\eta_W'(X)} - e^{\eta_W^*(X)} \right)1_{\{Y^c \ge y\}}\right]}{\EE\left[e^{\eta_W^*(X)} 1_{\{Y^c \ge y\}}\right]}\right)^3 d\PP(y) \right) = O(\|\eta_W'(X) - \eta_W^*(X)\|_2^3)$, and thus is $O(\|\beta - \beta'\|_2^3)$.
	Notice that
	\begin{align*}
		\begin{split}
			&\quad ~\EE\left[\left(e^{\eta_W'(X)} - e^{\eta_W^*(X)} \right)1_{\{Y^c \ge y\}}\right]\\
			&=\EE\left[\left(\eta_W'(X) -  \eta_W^*(X)\right)e^{\eta_W^*(X)} 1_{\{Y^c \ge y\}}\right] + O\left(\EE\left[\left(\eta_W'(X) -  \eta_W^*(X)\right)^2\right]\right)\\
			&= \EE\left[(W-a(X))X^\top \left(\beta' - \beta\right)e^{\eta_W^*(X)} 1_{\{Y^c \ge y\}}\right] + O\left(\EE\left[\left(\eta_W'(X) -  \eta_W^*(X)\right)^2\right]\right)\\
			&= \left(\beta' - \beta\right)^\top \EE\left[(W-a(X)) X e^{\eta_W^*(X)} 1_{\{Y^c \ge y\}}\right] + O\left(\EE\left[\left(\eta_W'(X) -  \eta_W^*(X)\right)^2\right]\right),
		\end{split}
	\end{align*}
	thus (II.b) is approximately a quadratic term of $(\beta - \beta')$.
	
	For part (II.a), notice that $d\PP(y) = \lambda(y) \EE\left[e^{\eta_W(X)}1_{\{Y^c \ge y\}}\right]$,
	\begin{align*}
		\begin{split}
			\text{(II.a)} &= \underbrace{\int_{0}^\infty \EE\left[\left(e^{\eta_W'(X)} - e^{\eta_W^*(X)} \right)1_{\{Y^c \ge y\}}\right] \lambda(y) dy}_{:=\text{(II.a.1)}} \\
			&\quad ~+ \underbrace{\int_{0}^\infty \frac{\EE\left[\left(e^{\eta_W'(X)} - e^{\eta_W^*(X)} \right)1_{\{Y^c \ge y\}}\right] \EE\left[\left(e^{\eta_W(X)} - e^{\eta_W^*(X)} \right)1_{\{Y^c \ge y\}}\right]}{\EE\left[e^{\eta_W^*(X)} 1_{\{Y^c \ge y\}}\right]} d\PP(y)}_{:=\text{(II.a.2)}}.
		\end{split}
	\end{align*}
	We first compute the first term (II.a.1). By Assumption~\eqref{assu:censor} and $\PP(Y \ge y \mid X = x, W = w) = e^{-\Lambda(y) e^{-\eta_w(x)}}$, we exchange the order of expectation and integral,
	\begin{align}
		\label{eq:proof:cox:(II.a.1)}
		\begin{split}
			\text{(II.a.1)} &= \EE\left[ \int_{0}^\infty \left(e^{\eta_W'(X)} - e^{\eta_W^*(X)} \right)\1_{\{Y \ge y\}} 1_{\{C \ge y\}} \lambda(y) dy \right] \\
			&= \EE\left[ \left(e^{\eta_W'(X)} - e^{\eta_W^*(X)} \right) \int_0^\infty \int_{0}^c e^{-\Lambda(y)e^{\eta_W(X)}} \lambda(y) dy ~f_C(c \mid W, X) dc\right] \\
			&= \EE\left[ \left(e^{\eta_W'(X) - \eta_W^*(X)} - 1 \right) e^{\eta_W^*(X) - \eta_W(X)} \int_0^\infty \int_{0}^c e^{-\Lambda(y)e^{\eta_W(X)}} d\Lambda(y)e^{\eta_W(X)}  ~f_C(c \mid W, X) dc\right] \\
			&= \EE\left[ \left(e^{\eta_W'(X) - \eta_W^*(X)} - 1 \right) e^{\eta_W^*(X) - \eta_W(X)} \int_0^\infty  \left(1 - e^{-\Lambda(c) e^{\eta_W(X)}}\right) ~f_C(c \mid W, X) dc\right] \\
			&= \EE\left[ \left(e^{\eta_W'(X) - \eta_W^*(X)} - 1 \right) e^{\eta_W^*(X) - \eta_W(X)} \PP(C \ge Y \mid W, X) \right], 
		\end{split}
	\end{align}
	where we use Lemma \ref{lemm:equationCox} in the last equality. Since $\eta_W'(X) - \eta_W^*(X) = (W-a(X))X^\top(\beta' - \beta)$,
	\begin{align}\label{eq:proof:cox:(II.b.1)2}
		\begin{split}
			\text{(II.a.1)} 
			&= \EE\left[ \left(\eta_W'(X) - \eta_W^*(X)\right) e^{\eta_W^*(X) - \eta_W(X)} \PP(C \ge Y \mid W, X) \right] \\ 
			&\quad ~ + (\beta' - \beta)^\top \Sigma (\beta' - \beta) + O\left(\|\beta' - \beta\|_2^3\right)\\
			&= \EE\left[ (W-a(X))X^\top(\beta' - \beta)e^{\eta_W^*(X) - \eta_W(X)} \PP(C \ge Y \mid W, X)  \right] \\
			&\quad~+ (\beta' - \beta)^\top \Sigma (\beta' - \beta) + O\left(\|\beta' - \beta\|_2^3\right).
		\end{split}
	\end{align}
	Combine (II.a.1) in \eqref{eq:proof:cox:(II.b.1)2} and (I) in \eqref{eq:proof:cox:(I)}, 
	\begin{align}\label{eq:proof:cox:(I)(II.b.1)}
		\begin{split}
			\text{(I)} - \text{(II.a.1)} 
			&= \EE\left[ \PP(C \ge Y \mid W, X)(W-a(X))X^\top(\beta' - \beta)(1 - e^{\eta_W^*(X) - \eta_W(X))}\right]\\
			&\quad ~ - (\beta' - \beta)^\top \Sigma (\beta' - \beta) + O\left(\|\beta' - \beta\|_2^3\right)\\
			&\approx \EE\left[ \PP(C \ge Y \mid W, X) (W-a(X))X^\top(\beta' - \beta)(\nu(X) - \nu^*(X))\right] \\
			&\quad ~ - (\beta' - \beta)^\top \Sigma (\beta' - \beta) + O\left(\|\beta' - \beta\|_2^3\right).	
		\end{split}
	\end{align}
	In this way, $\text{(I)} - \text{(II.a.1)}$ gives part (a) in Lemma \ref{lemm:likelihoodApproximationCox}. Furthermore, (II.a.2) gives part (b) in Lemma \ref{lemm:likelihoodApproximationCox}. Combine all parts and we finish the proof.
\end{proof}

\begin{proof}[Proof of Proposition~\ref{claim:cox:null}]
	
	To show Proposition~\ref{claim:cox:null} is true, we only need to verify the asymptotic score is Neyman's orthogonal if $\tau(x) = 0$, and the rest of the proof is the same as Proposition~\ref{prop:main}. 
	
	According to \citep{tsiatis1981large}, the score equation of the partial likelihood function at $a_{n}(x) = a(x) + \alpha_n \xi_n(x)$, $\nu_{n}(x) = \nu(x) + \rho_n \zeta_n(x)$, and $\beta_n$ is asymptotically 
	\begin{align}\label{eq:proof:cox:score}
		\begin{split}
			s(\alpha_n, \rho_n, \beta_n) = \EE[\Delta (W-a_{n}(X))X) ] - \int_{0}^\infty \frac{\EE\left[1_{\{Y^c \ge y\}}e^{\nu_n(X) + (W-a_n(X)) X^\top \beta_n} (W-a_{n}(X)) X \right]}{\EE\left[1_{\{Y^c \ge y\}}e^{\nu_n(X) + (W-a_{n}(X)) X^\top \beta_n} \right]} d\PP(y),
		\end{split}
	\end{align}
	where $d\PP(y) = \lambda(y) \EE\left[e^{\eta_W(X)}1_{\{Y^c \ge y\}}\right]$ with the true parameter $\eta_w(x)$.
	The derivative of the score \eqref{eq:proof:cox:score} with regard to $\rho$ evaluated at $\alpha = \rho = 0$ and $\beta_n = \beta$ is  
	\begin{align}\label{eq:proof:cox:scoreRho}
		\begin{split}
			&\nabla_{\rho} s(0,0, \beta) 
			= \underbrace{\int_{0}^\infty \EE\left[1_{\{Y^c \ge y\}}e^{\nu(X) + (W-a(X)) X^\top \beta} \zeta(X)(W-a(X)) X \right] \lambda(y) dy}_{:=\text{(I)}} \\
			&- \underbrace{\int_{0}^\infty \frac{\EE\left[1_{\{Y^c \ge y\}}e^{\nu(X) + (W-a(X)) X^\top \beta} (W-a(X)) X \right] \EE\left[1_{\{Y^c \ge y\}}e^{\nu(X) + (W-a(X)) X^\top \beta} \zeta(X) \right]}{\EE\left[1_{\{Y^c \ge y\}}e^{\nu(X) + (W-a(X)) X^\top \beta} \right]} \lambda(y) dy}_{:=\text{(II)}}. 
		\end{split}
	\end{align}
	Part (I) is zero for arbitrary $\beta$ by the analysis of (II.a) in \eqref{eq:proof:cox:(II.a.1)} and \eqref{eq:conditionCox3}. As for part (II), recall the definition of $a(x)$ and  that $\beta = 0$,
	\begin{align*}
		\begin{split}
			&\quad~ \EE\left[1_{\{Y^c \ge y\}}e^{\nu(X) + (W-a(X)) X^\top \beta} (W-a(X)) X \right]
			= \EE\left[1_{\{Y^c \ge y\}}e^{\nu(X)} (W-a(X)) X\right]  = 0. 
		\end{split}
	\end{align*}
	Next, the derivative of the score \eqref{eq:proof:cox:score} with regard to $\alpha$ evaluated at $\alpha = \rho = 0$, $\beta_n = \beta$ is 	
	\begin{align}\label{eq:proof:cox:scoreAlpha}
		\begin{split}
			&\nabla_{\alpha} s(\alpha, \rho, \beta) 
			= -\EE[\Delta \xi(X) X]\\
			&+\underbrace{\int_{0}^\infty \EE\left[1_{\{Y^c \ge y\}}e^{\nu(X) + (W-a(X)) X^\top \beta} \xi(X) X^\top \beta(W-a(X)) X \right] \lambda(y) dy}_{:=\text{(I'.a)}} \\
			&+ \underbrace{\int_{0}^\infty \EE\left[1_{\{Y^c \ge y\}}e^{\nu(X) + (W-a(X)) X^\top \beta} \xi(X) X\right] \lambda(y) dy}_{:=\text{(I'.b)}} \\
			&- \underbrace{\int_{0}^\infty \frac{\EE\left[1_{\{Y^c \ge y\}}e^{\nu(X) + (W-a(X)) X^\top \beta} (W-a(X)) X \right] \EE\left[1_{\{Y^c \ge y\}}e^{\nu(X) + (W-a(X)) X^\top \beta} \xi(X) X^\top \beta \right]}{\EE\left[1_{\{Y^c \ge y\}}e^{\nu(X) + (W-a(X)) X^\top \beta} \right]} \lambda(y) dy}_{:=\text{(II')}}. \\
		\end{split}
	\end{align}
	The analysis of part (I'a) and (II') are the same as that of part (I) and (II) in \eqref{eq:proof:cox:scoreRho}. For part (I'b), for arbitrary $\beta$, the analysis of \eqref{eq:proof:cox:(II.a.1)} shows
	\begin{align*}
		\begin{split}
			\text{(I'b)} = \EE[\PP(C \ge Y \mid W, X) \xi(X) X],
		\end{split}
	\end{align*}
	which cancels the term $-\EE[\Delta \xi(X) X]$ in \eqref{eq:proof:cox:scoreAlpha}.
	Then combine all parts together and we have shown \eqref{eq:proof:cox:scoreAlpha} is zero.
\end{proof}

\section{Additional figures}\label{subsec:appendix:figure}

We display the simulation results with gradient boosting as the nuisance-function learners in Figure~\ref{fig:errorBoosting}. In order to demonstrate the robustness of the proposed method to inaccurate nuisance-function estimators, in gradient boosting, we consider a small number ($50$) of  trees, learning rate $0.1$, and tree-depth $3$. We limit the number of trees and the nuisance-function estimators will underfit the data. The results show that the proposed estimator is the most robust to inaccurate nuisance-function estimators. 

\begin{figure}
	\centering
	\begin{minipage}{7cm}
		\centering  
		{(a) {continuous}}
		\includegraphics[scale=0.45]{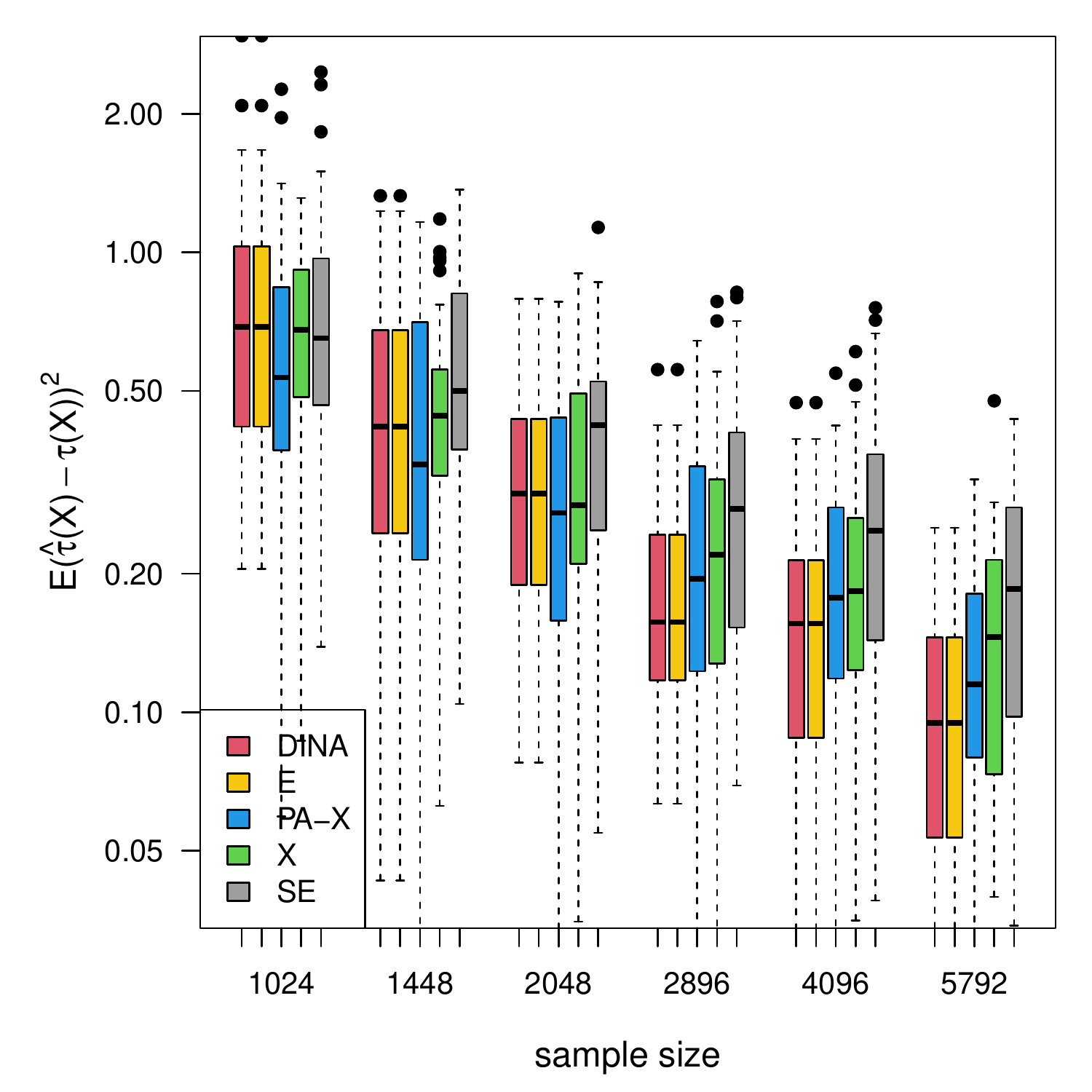}  
	\end{minipage}
	\begin{minipage}{7cm}
		\centering  
		{(b) {binary}}
		\includegraphics[scale=0.45]{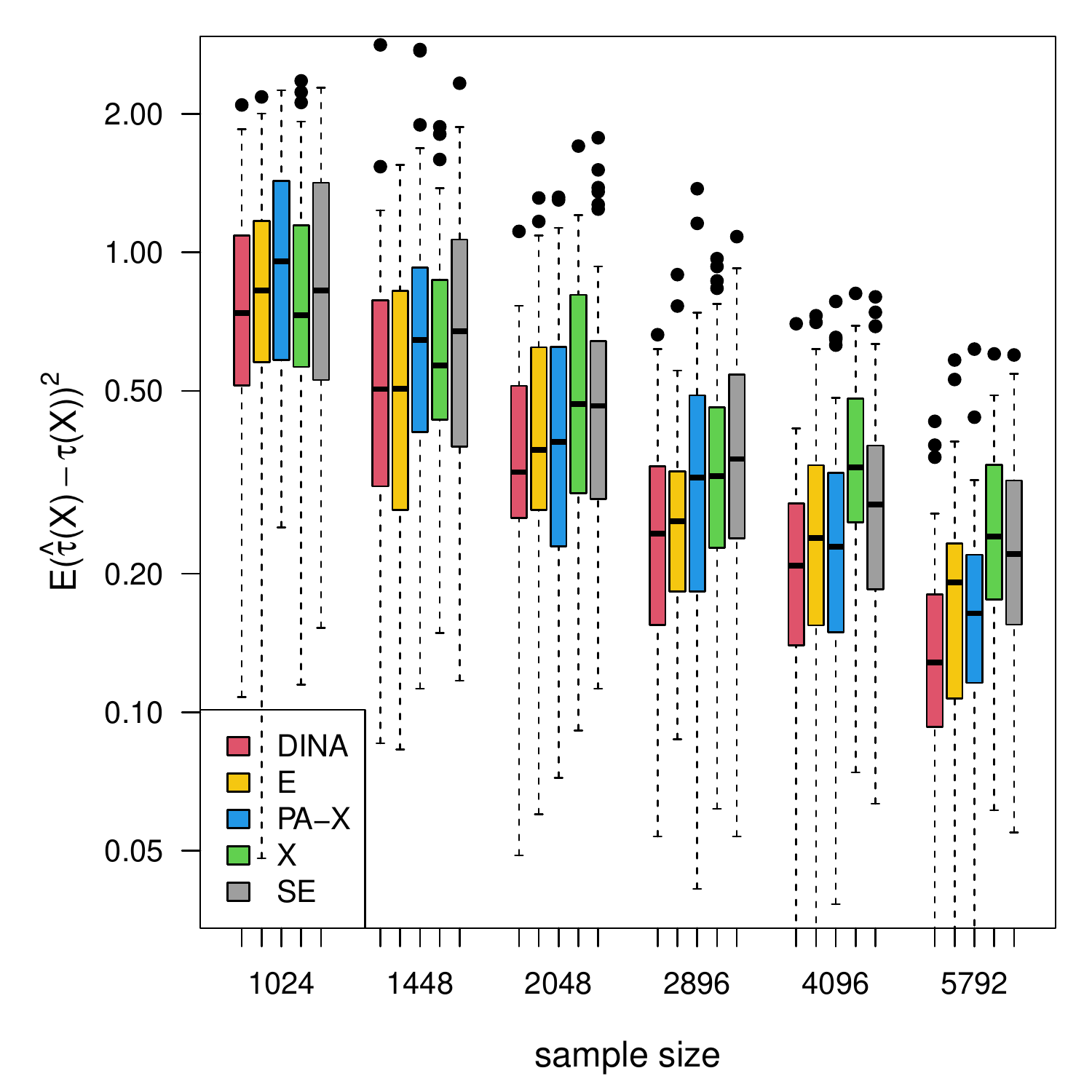}  
	\end{minipage}\\
	\begin{minipage}{7cm}
		\centering  
		{(c) {count data}}
		\includegraphics[scale=0.45]{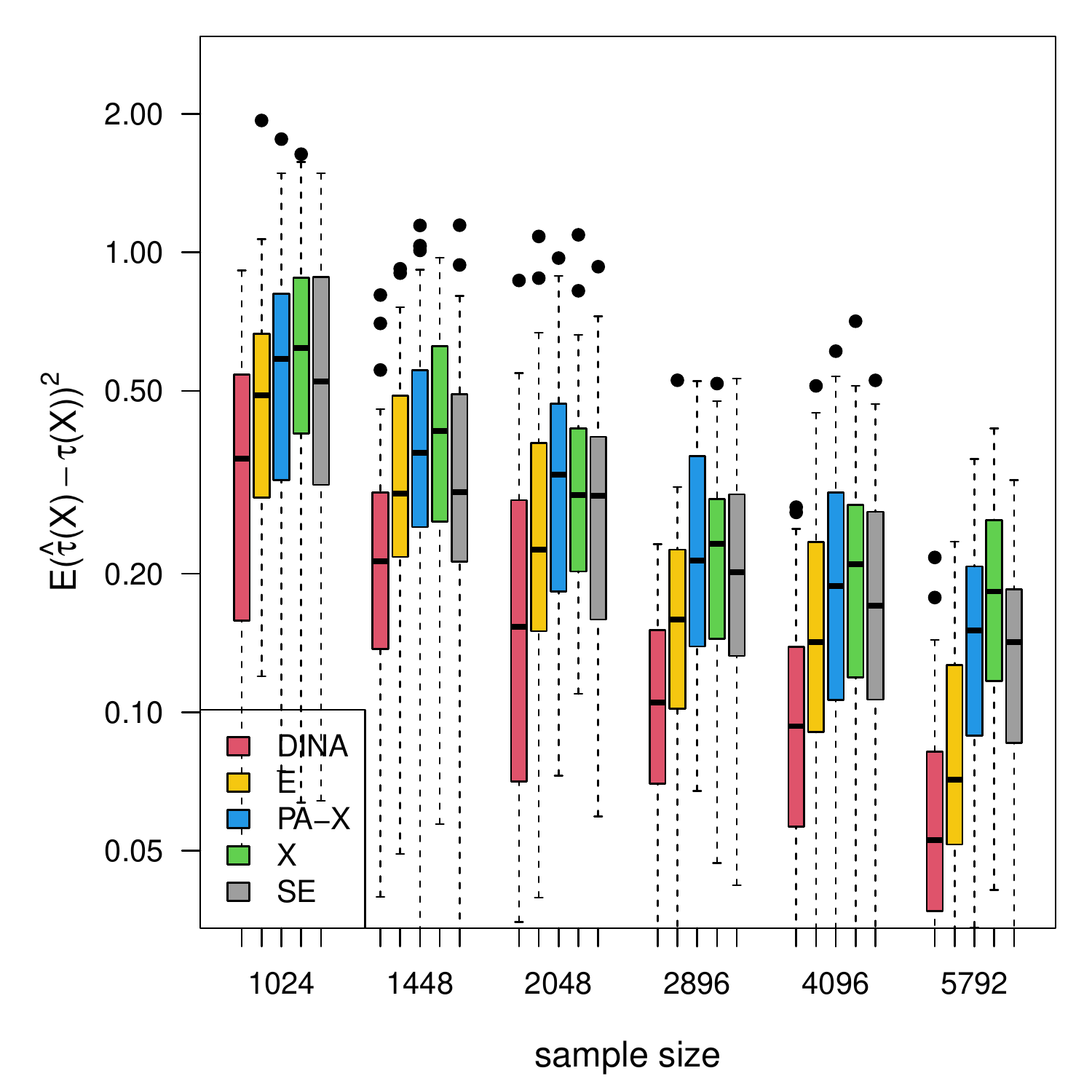}  
	\end{minipage}
	\begin{minipage}{7cm}
		\centering  
		{(d) {survival data}}
		\includegraphics[scale=0.45]{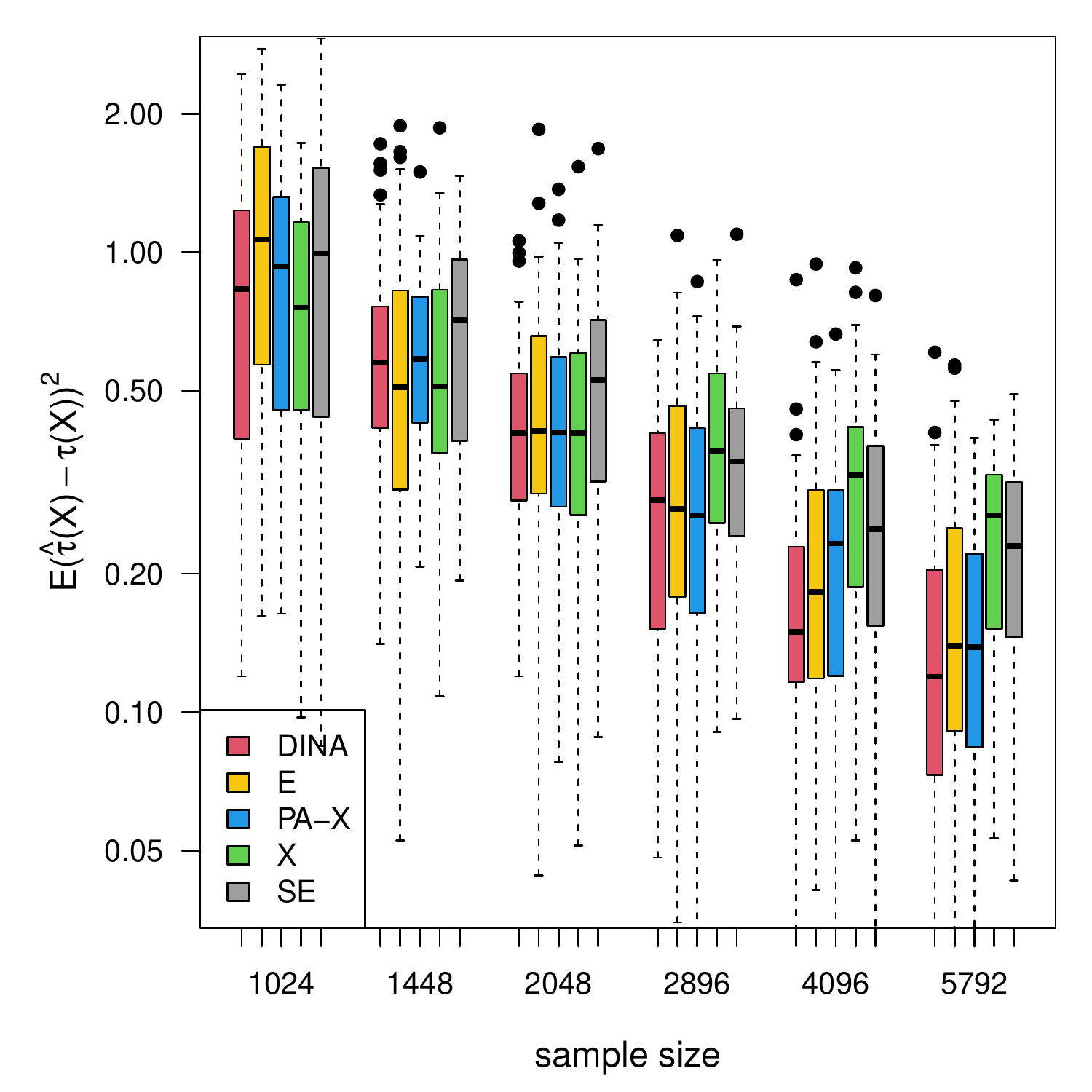}  
	\end{minipage}
	\caption{{Estimation error log-log boxplots. We display
			the estimation errors $\EE[(\hat{\tau}(X) - \tau(X))^2]$
			over sample sizes in $[1024, 1448, 2048, 2896, 4096, 5792]$. We compare five
			methods: separate estimation (SE), X-learner (X),  propensity score adjusted X-learner (PA-X), direct extension of R-learner (E),  and Algorithm~\ref{algo:exponentialFamily}
			(DINA). We adopt the response model~\eqref{eq:simulationModel} and consider four types of responses: (a) continuous (Gaussian); (b) binary (Bernoulli); (c) count data (Poisson); (d) survival data (Cox model with uniform censoring, $75\%$ units censored). We estimate the propensity score by logistic regression, and estimate the natural parameters by gradient boosting. We repeat all experiments $100$ times.
	}}
	\label{fig:errorBoosting}
\end{figure}

\section{Additional tables}\label{subsec:appendix:table}
In this section, we first display the coverage and width of the $95\%$ confidence intervals for various types of responses and nuisance-function estimators. 
We use bootstrap ($100$ bootstrap samples) to construct confidence intervals for $\hat{\beta}$. Table \ref{tab:gaussianGLMCI} considers Gaussian responses and uses linear regression as the nuisance-function estimator; Table \ref{tab:binomialGLMCI} considers binary responses and uses logistic regression as the nuisance-function estimator; Table \ref{tab:coxGLMCI} considers survival responses and uses Cox regression as the nuisance-function estimator; Table \ref{tab:gaussianBoostingCI}, \ref{tab:binomialBoostingCI}, \ref{tab:poissonBoostingCI}, and \ref{tab:coxBoostingCI} use gradient boosting as the nuisance-function estimator and deal with Gaussian, binary, count, and survival responses, respectively.

We observe the proposed method (DINA) produces the best coverages for all $\hat{\beta}$ and sample sizes.
The propensity score adjusted X-learner (PA-X) produces the second-best coverages but requires wider confidence intervals. 
For the direct extension of R-learner (E) (except for Gaussian responses),  X-learner (X), and separate estimation (SE), the coverages of confidence intervals (for $\hat{\beta}_1$ in particular) decrease as the sample size grows. The phenomenon is due to the non-vanishing bias in those estimators.

\begin{table}
		\caption{\label{tab:gaussianGLMCI}Coverage (cvrg) and width of $95\%$ confidence intervals for Gaussian responses. We consider the five meta-algorithms and vary the sample size from $1024$ to $5792$. For Gaussian responses, the proposed method (DINA) and R-learner (E) are the same. We estimate the propensity score by logistic regression and baseline natural parameter functions by linear regression. Confidence intervals are constructed using $100$ bootstrap samples. Results are averaged over $100$ trials.}
	\centering
	\scriptsize
	\fbox{%
	\begin{tabular}{c|c|cc|cc|cc|cc|cc|cc}
		Sample                    & Meth- & \multicolumn{2}{c|}{$\beta_1$} & \multicolumn{2}{c|}{$\beta_2$} & \multicolumn{2}{c|}{$\beta_3$} & \multicolumn{2}{c|}{$\beta_4$} & \multicolumn{2}{c|}{$\beta_5$} & \multicolumn{2}{c}{$\beta_0$} \\ 
		size &    od    & cvrg      & width     & cvrg      & width             & cvrg      & width           & cvrg      & width      & cvrg      & width     & cvrg      & width          \\\hline			
		\multirow{5}{*}{1024} & DINA   & 0.97  & 1.48    & 0.96  & 1.49    & 0.93  & 1.38    & 0.95  & 1.35    & 0.93  & 1.34    & 0.98  & 0.789   \\
		& E      & 0.97  & 1.48    & 0.96  & 1.49    & 0.93  & 1.38    & 0.95  & 1.35    & 0.93  & 1.34    & 0.98  & 0.789   \\
		& PA-X   & 0.97  & 1.56    & 0.93  & 1.56    & 0.91  & 1.46    & 0.94  & 1.44    & 0.96  & 1.47    & 0.97  & 0.787   \\
		& X      & 0.50   & 1.35    & 0.48  & 1.36    & 0.91  & 1.29    & 0.96  & 1.27    & 0.95  & 1.25    & 0.99  & 0.811   \\
		& SE     & 0.46  & 1.36    & 0.53  & 1.34    & 0.87  & 1.26    & 0.91  & 1.27    & 0.91  & 1.28    & 0.92  & 0.795   \\\hline
		\multirow{5}{*}{1448} & DINA   & 0.96  & 1.26    & 0.96  & 1.26    & 0.96  & 1.10     & 0.95  & 1.13    & 0.92  & 1.13    & 0.91  & 0.660    \\
		& E      & 0.96  & 1.26    & 0.96  & 1.26    & 0.96  & 1.10     & 0.95  & 1.13    & 0.92  & 1.13    & 0.91  & 0.660   \\
		& PA-X   & 0.97  & 1.30    & 0.97  & 1.29    & 0.93  & 1.22    & 0.97  & 1.20     & 0.98  & 1.22    & 0.95  & 0.653   \\
		& X      & 0.31  & 1.14    & 0.37  & 1.15    & 0.96  & 1.08    & 0.95  & 1.06    & 0.94  & 1.08    & 0.94  & 0.667   \\
		& SE     & 0.35  & 1.14    & 0.40  & 1.17    & 0.97  & 1.08    & 0.92  & 1.05    & 0.94  & 1.08    & 0.97  & 0.669   \\\hline
		\multirow{5}{*}{2048} & DINA   & 0.93  & 1.04    & 0.94  & 1.04    & 0.94  & 0.957   & 0.95  & 0.955   & 0.94  & 0.947   & 0.93  & 0.549   \\
		& E       & 0.93  & 1.04    & 0.94  & 1.04    & 0.94  & 0.957   & 0.95  & 0.955   & 0.94  & 0.947   & 0.93  & 0.549   \\
		& PA-X   & 0.93  & 1.07    & 0.95  & 1.05    & 0.95  & 1.00       & 0.96  & 1.01    & 0.99  & 0.991   & 0.95  & 0.544   \\
		& X      & 0.25  & 0.975   & 0.24  & 0.946   & 0.95  & 0.891   & 0.97  & 0.886   & 0.95  & 0.898   & 0.94  & 0.576   \\
		& SE     & 0.24  & 0.950    & 0.32  & 0.949   & 0.97  & 0.883   & 0.94  & 0.897   & 0.94  & 0.871   & 0.94  & 0.562   \\\hline
		\multirow{5}{*}{2896} & DINA   & 0.93  & 0.869   & 0.94  & 0.884   & 0.94  & 0.796   & 0.96  & 0.794   & 0.98  & 0.789   & 0.93  & 0.454   \\
		& E     & 0.93  & 0.869   & 0.94  & 0.884   & 0.94  & 0.796   & 0.96  & 0.794   & 0.98  & 0.789   & 0.93  & 0.454    \\
		& PA-X   & 0.96  & 0.906   & 0.98  & 0.901   & 0.93  & 0.847   & 0.94  & 0.840    & 0.95  & 0.846   & 0.97  & 0.462   \\
		& X      & 0.10   & 0.810    & 0.09  & 0.803   & 0.96  & 0.750    & 0.93  & 0.743   & 0.95  & 0.729   & 0.96  & 0.472   \\
		& SE     & 0.13  & 0.833   & 0.12  & 0.822   & 0.95  & 0.748   & 0.95  & 0.751   & 0.94  & 0.761   & 0.93  & 0.466   \\\hline
		\multirow{5}{*}{4096} & DINA   & 0.99  & 0.733   & 0.94  & 0.754   & 0.91  & 0.669   & 0.96  & 0.659   & 0.92  & 0.662   & 0.93  & 0.389   \\
		& E     & 0.99  & 0.733   & 0.94  & 0.754   & 0.91  & 0.669   & 0.96  & 0.659   & 0.92  & 0.662   & 0.93  & 0.389    \\
		& PA-X   & 0.96  & 0.749   & 0.96  & 0.739   & 0.94  & 0.705   & 0.99  & 0.685   & 0.95  & 0.707   & 0.98  & 0.380    \\
		& X      & 0.01  & 0.685   & 0.04  & 0.682   & 0.92  & 0.629   & 0.97  & 0.642   & 0.97  & 0.627   & 0.97  & 0.397   \\
		& SE     & 0.05  & 0.675   & 0.06  & 0.669   & 0.96  & 0.630    & 0.96  & 0.625   & 0.93  & 0.632   & 0.90   & 0.395   \\\hline
		\multirow{5}{*}{5792} & DINA   & 0.95  & 0.618   & 0.95  & 0.621   & 0.93  & 0.559   & 0.96  & 0.556   & 0.96  & 0.561   & 0.97  & 0.325   \\
		& E       & 0.95  & 0.618   & 0.95  & 0.621   & 0.93  & 0.559   & 0.96  & 0.556   & 0.96  & 0.561   & 0.97  & 0.325   \\
		& PA-X   & 0.97  & 0.627   & 0.98  & 0.627   & 0.97  & 0.595   & 0.90   & 0.575   & 0.96  & 0.597   & 0.94  & 0.321   \\
		& X      & 0.01  & 0.563   & 0.00     & 0.566   & 0.91  & 0.524   & 0.93  & 0.524   & 0.98  & 0.535   & 0.93  & 0.334   \\
		& SE     & 0.01  & 0.573   & 0.00     & 0.557   & 0.88  & 0.519   & 0.93  & 0.528   & 0.95  & 0.521   & 0.95  & 0.332   \\
	\end{tabular}
}
\end{table}

\begin{table}
		\caption{\label{tab:binomialGLMCI}  {Coverage (cvrg) and width of $95\%$ confidence intervals for binary responses. We consider the five meta-algorithms and vary the sample size from $1024$ to $5792$. We estimate the propensity score by logistic regression and baseline natural parameter functions by logistic regression. Confidence intervals are constructed using $100$ bootstrap samples. Results are averaged over $100$ trials.}}
	\centering
	\scriptsize
\fbox{%
	\begin{tabular}{c|c|cc|cc|cc|cc|cc|cc}
		\hline \hline
		Sample                    & Meth- & \multicolumn{2}{c|}{$\beta_1$} & \multicolumn{2}{c|}{$\beta_2$} & \multicolumn{2}{c|}{$\beta_3$} & \multicolumn{2}{c|}{$\beta_4$} & \multicolumn{2}{c|}{$\beta_5$} & \multicolumn{2}{c}{$\beta_0$} \\ 
		size &    od    & cvrg      & width     & cvrg      & width             & cvrg      & width           & cvrg      & width      & cvrg      & width     & cvrg      & width          \\\hline			
		\multirow{5}{*}{1024} & DINA   & 0.93  & 1.67    & 0.94  & 1.65    & 0.93  & 1.59    & 0.97  & 1.52    & 0.98  & 1.51    & 0.97  & 0.875   \\
		& E      & 0.91  & 1.62    & 0.95  & 1.65    & 0.92  & 1.55    & 0.97  & 1.48    & 0.96  & 1.48    & 0.94  & 0.891   \\
		& PA-X   & 0.93  & 2.24    & 0.97  & 2.07    & 0.97  & 1.90     & 0.93  & 1.86    & 0.96  & 1.82    & 0.99  & 1.08    \\
		& X      & 0.45  & 1.49    & 0.85  & 1.57    & 0.96  & 1.46    & 0.93  & 1.42    & 0.99  & 1.41    & 0.95  & 0.930    \\
		& SE     & 0.47  & 1.47    & 0.76  & 1.53    & 0.97  & 1.42    & 0.96  & 1.40     & 0.97  & 1.39    & 0.96  & 0.931   \\\hline
		\multirow{5}{*}{1448} & DINA   & 0.95  & 1.37    & 0.96  & 1.38    & 0.97  & 1.30     & 0.96  & 1.25    & 0.96  & 1.26    & 0.89  & 0.727   \\
		& E      & 0.88  & 1.31    & 0.92  & 1.36    & 0.98  & 1.25    & 0.97  & 1.23    & 0.97  & 1.24    & 0.89  & 0.715   \\
		& PA-X   & 1.00     & 1.75    & 0.96  & 1.59    & 0.96  & 1.54    & 0.96  & 1.46    & 0.98  & 1.49    & 0.98  & 0.887   \\
		& X      & 0.26  & 1.23    & 0.80   & 1.27    & 0.92  & 1.19    & 0.96  & 1.15    & 0.97  & 1.16    & 0.93  & 0.767   \\
		& SE     & 0.30   & 1.22    & 0.79  & 1.30     & 0.92  & 1.20     & 0.96  & 1.16    & 0.92  & 1.16    & 0.90   & 0.760    \\\hline
		\multirow{5}{*}{2048} & DINA   & 0.95  & 1.14    & 0.95  & 1.14    & 0.95  & 1.11    & 0.95  & 1.04    & 0.95  & 1.04    & 0.86  & 0.619   \\
		& E      & 0.86  & 1.08    & 0.85  & 1.12    & 0.98  & 1.05    & 0.95  & 1.03    & 0.94  & 1.03    & 0.82  & 0.599   \\
		& PA-X   & 0.95  & 1.41    & 0.92  & 1.27    & 0.96  & 1.21    & 0.93  & 1.21    & 0.97  & 1.20     & 0.95  & 0.693   \\
		& X      & 0.14  & 1.02    & 0.78  & 1.06    & 0.95  & 1.00       & 0.96  & 0.950    & 0.97  & 0.943   & 0.80   & 0.616   \\
		& SE     & 0.18  & 1.03    & 0.71  & 1.06    & 0.95  & 0.993   & 0.97  & 0.953   & 0.98  & 0.961   & 0.85  & 0.621   \\\hline
		\multirow{5}{*}{2896} & DINA   & 0.95  & 0.941   & 0.87  & 0.921   & 0.93  & 0.902   & 0.97  & 0.872   & 0.95  & 0.867   & 0.95  & 0.503   \\
		& E      & 0.82  & 0.896   & 0.79  & 0.931   & 0.90   & 0.866   & 0.96  & 0.832   & 0.95  & 0.857   & 0.84  & 0.494   \\
		& PA-X   & 0.92  & 1.10     & 0.93  & 1.03    & 0.93  & 0.990    & 0.94  & 0.973   & 0.94  & 0.980    & 0.94  & 0.565   \\
		& X      & 0.06  & 0.845   & 0.54  & 0.881   & 0.92  & 0.814   & 0.94  & 0.795   & 0.95  & 0.797   & 0.84  & 0.524   \\
		& SE     & 0.05  & 0.847   & 0.62  & 0.901   & 0.92  & 0.836   & 0.95  & 0.801   & 0.97  & 0.803   & 0.89  & 0.523   \\\hline
		\multirow{5}{*}{4096} & DINA   & 0.92  & 0.77    & 0.93  & 0.777   & 0.96  & 0.728   & 0.95  & 0.708   & 0.97  & 0.715   & 0.89  & 0.424   \\
		& E      & 0.77  & 0.762   & 0.8   & 0.764   & 0.96  & 0.727   & 0.96  & 0.705   & 0.95  & 0.703   & 0.66  & 0.410    \\
		& PA-X   & 0.95  & 0.921   & 0.84  & 0.849   & 0.92  & 0.827   & 0.94  & 0.807   & 0.94  & 0.813   & 0.95  & 0.457   \\
		& X      & 0.00     & 0.705   & 0.54  & 0.734   & 0.92  & 0.692   & 0.97  & 0.674   & 0.97  & 0.668   & 0.80   & 0.437   \\
		& SE     & 0.01  & 0.694   & 0.52  & 0.732   & 0.92  & 0.691   & 0.94  & 0.664   & 0.97  & 0.671   & 0.79  & 0.428   \\\hline
		\multirow{5}{*}{5792} & DINA   & 0.92  & 0.647   & 0.85  & 0.658   & 0.96  & 0.638   & 0.97  & 0.608   & 0.97  & 0.600     & 0.82  & 0.358   \\
		& E      & 0.73  & 0.633   & 0.66  & 0.648   & 0.96  & 0.611   & 0.97  & 0.582   & 0.94  & 0.589   & 0.69  & 0.351   \\
		& PA-X   & 0.91  & 0.768   & 0.87  & 0.713   & 0.98  & 0.701   & 0.99  & 0.674   & 0.91  & 0.659   & 0.89  & 0.381   \\
		& X      & 0.00     & 0.606   & 0.41  & 0.624   & 0.97  & 0.565   & 0.93  & 0.564   & 0.97  & 0.568   & 0.65  & 0.368   \\
		& SE     & 0.00     & 0.594   & 0.39  & 0.624   & 0.92  & 0.575   & 0.97  & 0.558   & 0.95  & 0.552   & 0.68  & 0.360   \\\hline
	\end{tabular}
}
\end{table}

\begin{table}
		\caption{\label{tab:coxGLMCI}  {Coverage (cvrg) and width of $95\%$ confidence intervals for survival responses. We consider the five meta-algorithms and vary the sample size from $1024$ to $5792$. We estimate the propensity score by logistic regression and baseline natural parameter functions by Cox regression. Confidence intervals are constructed using $100$ bootstrap samples. Results are averaged over $100$ trials.}}
	\centering
	\scriptsize
\fbox{%
	\begin{tabular}{c|c|cc|cc|cc|cc|cc|cc}
		Sample                    & Meth- & \multicolumn{2}{c|}{$\beta_1$} & \multicolumn{2}{c|}{$\beta_2$} & \multicolumn{2}{c|}{$\beta_3$} & \multicolumn{2}{c|}{$\beta_4$} & \multicolumn{2}{c|}{$\beta_5$} & \multicolumn{2}{c}{$\beta_0$} \\ 
		size &    od    & cvrg      & width     & cvrg      & width             & cvrg      & width           & cvrg      & width      & cvrg      & width     & cvrg      & width          \\\hline			
		\multirow{5}{*}{1024} & DINA   & 0.92  & 1.84    & 0.93  & 1.67    & 0.96  & 1.61    & 0.97  & 1.55    & 0.97  & 1.56    & 0.95  & 1.09    \\
		& E      & 0.87  & 1.91    & 0.91  & 1.66    & 0.96  & 1.58    & 0.96  & 1.54    & 0.92  & 1.51    & 0.92  & 1.11    \\
		& PA-X   & 0.95  & 3.38    & 0.98  & 3.23    & 0.97  & 2.14    & 1.00     & 2.01    & 0.97  & 2.04    & 0.97  & 1.20     \\
		& X      & 0.97  & 1.62    & 0.75  & 1.53    & 0.93  & 1.40     & 0.96  & 1.41    & 0.97  & 1.38    & 0.95  & 0.971   \\
		& SE     & 0.95  & 1.55    & 0.67  & 1.49    & 0.91  & 1.40     & 0.96  & 1.39    & 0.97  & 1.39    & 0.86  & 0.868   \\\hline		
		\multirow{5}{*}{1448} & DINA   & 0.92  & 1.52    & 0.88  & 1.33    & 0.93  & 1.29    & 0.97  & 1.29    & 0.94  & 1.27    & 0.90   & 0.885   \\
		& E      & 0.77  & 1.55    & 0.94  & 1.36    & 0.96  & 1.27    & 0.96  & 1.25    & 0.93  & 1.27    & 0.88  & 0.879   \\
		& PA-X   & 0.94  & 2.35    & 0.97  & 2.25    & 0.99  & 1.55    & 0.95  & 1.51    & 0.98  & 1.53    & 0.93  & 0.865   \\
		& X      & 0.88  & 1.30     & 0.43  & 1.25    & 0.94  & 1.15    & 0.94  & 1.15    & 0.91  & 1.11    & 0.94  & 0.770    \\
		& SE     & 0.89  & 1.29    & 0.51  & 1.26    & 0.95  & 1.15    & 0.93  & 1.15    & 0.95  & 1.15    & 0.83  & 0.718   \\\hline		
		\multirow{5}{*}{2048} & DINA   & 0.96  & 1.24    & 0.95  & 1.11    & 0.93  & 1.06    & 0.93  & 1.05    & 0.91  & 1.04    & 0.94  & 0.720    \\
		& E      & 0.73  & 1.28    & 0.94  & 1.11    & 0.89  & 1.04    & 0.95  & 1.03    & 0.91  & 1.02    & 0.96  & 0.734   \\
		& PA-X   & 0.92  & 1.77    & 0.94  & 1.68    & 0.97  & 1.20     & 0.95  & 1.19    & 0.98  & 1.18    & 0.94  & 0.672   \\
		& X      & 0.92  & 1.08    & 0.25  & 1.02    & 0.91  & 0.954   & 0.95  & 0.943   & 0.94  & 0.963   & 0.94  & 0.639   \\
		& SE     & 0.86  & 1.07    & 0.27  & 1.01    & 0.97  & 0.947   & 0.93  & 0.944   & 0.95  & 0.941   & 0.74  & 0.589   \\\hline		
		\multirow{5}{*}{2896} & DINA   & 0.93  & 1.04    & 0.88  & 0.926   & 0.93  & 0.868   & 0.95  & 0.866   & 0.94  & 0.873   & 0.90   & 0.586   \\
		& E      & 0.72  & 1.05    & 0.91  & 0.939   & 0.94  & 0.869   & 0.97  & 0.855   & 0.93  & 0.855   & 0.96  & 0.595   \\
		& PA-X   & 0.91  & 1.43    & 0.96  & 1.34    & 0.93  & 0.964   & 0.97  & 0.946   & 0.93  & 0.954   & 0.96  & 0.543   \\
		& X      & 0.93  & 0.903   & 0.19  & 0.838   & 0.94  & 0.799   & 0.95  & 0.776   & 0.94  & 0.778   & 0.96  & 0.530    \\
		& SE     & 0.88  & 0.881   & 0.19  & 0.844   & 0.95  & 0.785   & 0.90   & 0.772   & 0.95  & 0.788   & 0.56  & 0.468   \\\hline		
		\multirow{5}{*}{4096} & DINA   & 0.98  & 0.869   & 0.93  & 0.770    & 0.99  & 0.737   & 0.97  & 0.714   & 0.96  & 0.703   & 0.93  & 0.501   \\
		& E      & 0.48  & 0.872   & 0.91  & 0.777   & 0.99  & 0.724   & 0.96  & 0.715   & 0.97  & 0.717   & 0.93  & 0.486   \\
		& PA-X   & 0.92  & 1.14    & 0.94  & 1.06    & 0.98  & 0.805   & 0.97  & 0.786   & 0.98  & 0.772   & 0.92  & 0.453   \\
		& X      & 0.90   & 0.758   & 0.05  & 0.711   & 1.00     & 0.667   & 0.97  & 0.677   & 0.98  & 0.663   & 0.93  & 0.439   \\
		& SE     & 0.84  & 0.742   & 0.04  & 0.706   & 0.90   & 0.661   & 0.97  & 0.654   & 0.96  & 0.652   & 0.51  & 0.400     \\\hline		
		\multirow{5}{*}{5792} & DINA   & 0.93  & 0.715   & 0.93  & 0.633   & 0.93  & 0.605   & 0.94  & 0.594   & 0.92  & 0.595   & 0.93  & 0.410    \\
		& E      & 0.39  & 0.720    & 0.96  & 0.646   & 0.93  & 0.600     & 0.95  & 0.585   & 0.91  & 0.584   & 0.91  & 0.408   \\
		& PA-X   & 0.91  & 0.904   & 0.93  & 0.868   & 0.95  & 0.638   & 0.97  & 0.629   & 0.96  & 0.650    & 0.97  & 0.372   \\
		& X      & 0.85  & 0.623   & 0.02  & 0.590    & 0.94  & 0.546   & 0.95  & 0.541   & 0.98  & 0.546   & 0.99  & 0.369   \\
		& SE     & 0.82  & 0.619   & 0.00     & 0.590    & 0.97  & 0.549   & 0.94  & 0.542   & 0.94  & 0.541   & 0.30   & 0.327  \\
	\end{tabular}
}
\end{table}

\begin{table}
		\caption{\label{tab:gaussianBoostingCI}  {Coverage (cvrg) and width of $95\%$ confidence intervals for Gaussian responses.  We consider the five meta-algorithms and vary the sample size from $1024$ to $5792$. For Gaussian responses, the proposed method (DINA) and R-learner (E) are the same. We estimate the propensity score by logistic regression and baseline natural parameter functions by gradient boosting. Confidence intervals are constructed using $100$ bootstrap samples. Results are averaged over $100$ trials.}}
	\centering
	\scriptsize
	\fbox{%
	\begin{tabular}{c|c|cc|cc|cc|cc|cc|cc}
		Sample                    & Meth- & \multicolumn{2}{c|}{$\beta_1$} & \multicolumn{2}{c|}{$\beta_2$} & \multicolumn{2}{c|}{$\beta_3$} & \multicolumn{2}{c|}{$\beta_4$} & \multicolumn{2}{c|}{$\beta_5$} & \multicolumn{2}{c}{$\beta_0$} \\ 
		size &    od    & cvrg      & width     & cvrg      & width             & cvrg      & width           & cvrg      & width      & cvrg      & width     & cvrg      & width          \\\hline			
		\multirow{5}{*}{1024} & DINA   & 0.96  & 1.39    & 0.97  & 1.40     & 0.94  & 1.29    & 0.97  & 1.30     & 0.89  & 1.30     & 0.92  & 0.746   \\
		& E     & 0.96  & 1.39    & 0.97  & 1.40     & 0.94  & 1.29    & 0.97  & 1.30     & 0.89  & 1.30     & 0.92  & 0.746   \\
		& PA-X   & 0.93  & 1.43    & 0.86  & 1.35    & 0.92  & 1.26    & 0.99  & 1.21    & 0.95  & 1.20     & 0.95  & 0.787   \\
		& X      & 0.87  & 1.34    & 0.75  & 1.40     & 0.95  & 1.27    & 0.96  & 1.18    & 0.94  & 1.22    & 0.94  & 0.784   \\
		& SE     & 0.80   & 1.28    & 0.92  & 1.27    & 0.92  & 1.20     & 0.93  & 1.07    & 1.00     & 1.06    & 0.95  & 0.778   \\\hline		
		\multirow{5}{*}{1448} & DINA   & 0.99  & 1.16    & 0.91  & 1.14    & 0.93  & 1.09    & 0.94  & 1.11    & 0.88  & 1.11    & 0.94  & 0.617   \\
		& E     & 0.99  & 1.16    & 0.91  & 1.14    & 0.93  & 1.09    & 0.94  & 1.11    & 0.88  & 1.11    & 0.94  & 0.617   \\
		& PA-X   & 0.97  & 1.18    & 0.92  & 1.14    & 0.87  & 1.06    & 0.93  & 0.998   & 0.91  & 0.996   & 0.97  & 0.647   \\
		& X      & 0.89  & 1.12    & 0.81  & 1.19    & 0.93  & 1.03    & 0.96  & 1.01    & 0.91  & 1.02    & 0.93  & 0.650    \\
		& SE     & 0.85  & 1.04    & 0.89  & 1.05    & 0.91  & 0.968   & 0.96  & 0.88    & 0.91  & 0.887   & 0.92  & 0.650    \\\hline		
		\multirow{5}{*}{2048} & DINA   & 0.93  & 0.956   & 0.95  & 0.954   & 0.94  & 0.933   & 0.91  & 0.917   & 0.93  & 0.919   & 0.97  & 0.530    \\
		& E      & 0.93  & 0.956   & 0.95  & 0.954   & 0.94  & 0.933   & 0.91  & 0.917   & 0.93  & 0.919   & 0.97  & 0.530    \\
		& PA-X   & 0.96  & 0.992   & 0.89  & 0.965   & 0.91  & 0.900     & 0.90   & 0.837   & 0.94  & 0.821   & 0.95  & 0.546   \\
		& X      & 0.82  & 0.927   & 0.74  & 0.970    & 0.90   & 0.888   & 0.93  & 0.84    & 0.97  & 0.821   & 0.96  & 0.547   \\
		& SE     & 0.77  & 0.889   & 0.89  & 0.889   & 0.88  & 0.858   & 0.94  & 0.716   & 0.94  & 0.727   & 0.89  & 0.531   \\\hline		
		\multirow{5}{*}{2896} & DINA   & 0.92  & 0.797   & 0.87  & 0.819   & 0.94  & 0.765   & 0.89  & 0.758   & 0.95  & 0.765   & 0.94  & 0.446   \\
		& E     & 0.92  & 0.797   & 0.87  & 0.819   & 0.94  & 0.765   & 0.89  & 0.758   & 0.95  & 0.765   & 0.94  & 0.446   \\
		& PA-X   & 0.91  & 0.831   & 0.85  & 0.808   & 0.88  & 0.755   & 0.90   & 0.687   & 0.93  & 0.688   & 0.93  & 0.450    \\
		& X      & 0.79  & 0.764   & 0.65  & 0.833   & 0.89  & 0.736   & 0.91  & 0.692   & 0.92  & 0.695   & 0.94  & 0.454   \\
		& SE     & 0.83  & 0.745   & 0.90   & 0.740    & 0.86  & 0.697   & 0.96  & 0.590    & 0.92  & 0.602   & 0.94  & 0.437   \\\hline		
		\multirow{5}{*}{4096} & DINA   & 0.95  & 0.682   & 0.86  & 0.686   & 0.96  & 0.638   & 0.95  & 0.648   & 0.98  & 0.639   & 0.94  & 0.367   \\
		& E       & 0.95  & 0.682   & 0.86  & 0.686   & 0.96  & 0.638   & 0.95  & 0.648   & 0.98  & 0.639   & 0.94  & 0.367   \\
		& PA-X   & 0.93  & 0.706   & 0.88  & 0.688   & 0.84  & 0.637   & 0.94  & 0.584   & 0.96  & 0.573   & 0.93  & 0.375   \\
		& X      & 0.83  & 0.649   & 0.72  & 0.711   & 0.88  & 0.624   & 0.95  & 0.581   & 0.96  & 0.576   & 0.92  & 0.382   \\
		& SE     & 0.67  & 0.629   & 0.90   & 0.628   & 0.91  & 0.597   & 0.96  & 0.496   & 0.95  & 0.495   & 0.95  & 0.373   \\\hline		
		\multirow{5}{*}{5792} & DINA   & 0.95  & 0.562   & 0.93  & 0.572   & 0.94  & 0.550    & 0.96  & 0.550    & 0.89  & 0.541   & 0.91  & 0.307   \\
		& E   & 0.95  & 0.562   & 0.93  & 0.572   & 0.94  & 0.550    & 0.96  & 0.550    & 0.89  & 0.541   & 0.91  & 0.307   \\
		& PA-X   & 0.89  & 0.589   & 0.88  & 0.576   & 0.87  & 0.538   & 0.96  & 0.476   & 0.92  & 0.476   & 0.95  & 0.311   \\
		& X      & 0.79  & 0.55    & 0.62  & 0.582   & 0.91  & 0.516   & 0.97  & 0.477   & 0.91  & 0.479   & 0.92  & 0.316   \\
		& SE     & 0.61  & 0.521   & 0.92  & 0.530    & 0.91  & 0.516   & 0.97  & 0.401   & 0.96  & 0.410    & 0.89  & 0.312  \\
	\end{tabular}
}
\end{table}

\begin{table}
		\caption{\label{tab:binomialBoostingCI}  {Coverage (cvrg) and width of $95\%$ confidence intervals for binary responses.  We consider the five meta-algorithms and vary the sample size from $1024$ to $5792$. We estimate the propensity score by logistic regression and baseline natural parameter functions by gradient boosting. Confidence intervals are constructed using $100$ bootstrap samples. Results are averaged over $100$ trials.}}
	\centering
	\scriptsize
\fbox{%
	\begin{tabular}{c|c|cc|cc|cc|cc|cc|cc}
		Sample                    & Meth- & \multicolumn{2}{c|}{$\beta_1$} & \multicolumn{2}{c|}{$\beta_2$} & \multicolumn{2}{c|}{$\beta_3$} & \multicolumn{2}{c|}{$\beta_4$} & \multicolumn{2}{c|}{$\beta_5$} & \multicolumn{2}{c}{$\beta_0$} \\ 
		size &    od    & cvrg      & width     & cvrg      & width             & cvrg      & width           & cvrg      & width      & cvrg      & width     & cvrg      & width          \\\hline			
		\multirow{5}{*}{1024} & DINA   & 0.93  & 1.59    & 0.92  & 1.63    & 0.96  & 1.53    & 0.91  & 1.53    & 0.94  & 1.52    & 0.97  & 0.897   \\
		& E      & 0.91  & 1.65    & 0.96  & 1.70     & 0.96  & 1.61    & 0.92  & 1.58    & 0.91  & 1.55    & 0.95  & 0.913   \\
		& PA-X   & 0.91  & 1.80     & 0.91  & 1.86    & 0.93  & 1.70     & 0.95  & 1.58    & 0.97  & 1.59    & 0.92  & 1.10     \\
		& X      & 0.92  & 1.63    & 0.88  & 1.79    & 0.96  & 1.63    & 0.96  & 1.53    & 0.97  & 1.52    & 0.94  & 1.04    \\
		& SE     & 0.82  & 1.50     & 0.93  & 1.52    & 0.94  & 1.45    & 0.96  & 1.28    & 0.97  & 1.27    & 0.94  & 0.934   \\\hline		
		\multirow{5}{*}{1448} & DINA   & 0.89  & 1.30     & 0.95  & 1.37    & 0.90   & 1.24    & 0.93  & 1.22    & 0.93  & 1.22    & 0.92  & 0.728   \\
		& E      & 0.90   & 1.33    & 0.91  & 1.38    & 0.89  & 1.31    & 0.92  & 1.26    & 0.96  & 1.28    & 0.91  & 0.752   \\
		& PA-X   & 0.95  & 1.50     & 0.97  & 1.50     & 0.89  & 1.36    & 0.94  & 1.27    & 0.98  & 1.27    & 0.94  & 0.875   \\
		& X      & 0.90   & 1.33    & 0.96  & 1.48    & 0.88  & 1.33    & 0.95  & 1.22    & 0.96  & 1.23    & 0.95  & 0.825   \\
		& SE     & 0.74  & 1.21    & 0.97  & 1.27    & 0.91  & 1.20     & 0.96  & 1.04    & 0.96  & 1.04    & 0.89  & 0.776   \\\hline		
		\multirow{5}{*}{2048} & DINA   & 0.91  & 1.09    & 0.96  & 1.11    & 0.98  & 1.03    & 0.92  & 1.03    & 0.92  & 1.03    & 0.97  & 0.607   \\
		& E      & 0.89  & 1.10     & 0.88  & 1.11    & 0.98  & 1.08    & 0.93  & 1.04    & 0.91  & 1.04    & 0.97  & 0.608   \\
		& PA-X   & 0.97  & 1.21    & 0.88  & 1.22    & 0.87  & 1.14    & 0.94  & 1.05    & 0.92  & 1.04    & 0.99  & 0.708   \\
		& X      & 0.80   & 1.07    & 0.84  & 1.20     & 0.91  & 1.07    & 0.93  & 1.00       & 0.97  & 1.00       & 0.99  & 0.676   \\
		& SE     & 0.55  & 0.986   & 0.98  & 1.04    & 0.92  & 0.990    & 0.94  & 0.848   & 0.92  & 0.847   & 0.86  & 0.622   \\\hline		
		\multirow{5}{*}{2896} & DINA   & 0.97  & 0.913   & 0.94  & 0.925   & 0.95  & 0.866   & 0.89  & 0.859   & 0.93  & 0.859   & 0.88  & 0.506   \\
		& E      & 0.94  & 0.893   & 0.88  & 0.926   & 0.95  & 0.890    & 0.89  & 0.868   & 0.93  & 0.856   & 0.84  & 0.514   \\
		& PA-X   & 0.96  & 1.01    & 0.88  & 1.00       & 0.86  & 0.933   & 0.96  & 0.846   & 0.95  & 0.844   & 0.96  & 0.583   \\
		& X      & 0.77  & 0.907   & 0.86  & 1.00       & 0.90   & 0.877   & 0.97  & 0.840    & 0.97  & 0.817   & 0.96  & 0.567   \\
		& SE     & 0.50   & 0.842   & 0.95  & 0.867   & 0.93  & 0.811   & 0.97  & 0.678   & 0.90   & 0.682   & 0.86  & 0.522   \\\hline		
		\multirow{5}{*}{4096} & DINA   & 0.91  & 0.753   & 0.95  & 0.773   & 0.92  & 0.732   & 0.95  & 0.722   & 0.91  & 0.702   & 0.93  & 0.420    \\
		& E      & 0.85  & 0.757   & 0.86  & 0.761   & 0.93  & 0.734   & 0.92  & 0.711   & 0.96  & 0.730    & 0.88  & 0.422   \\
		& PA-X   & 0.96  & 0.841   & 0.91  & 0.826   & 0.84  & 0.765   & 0.92  & 0.676   & 0.96  & 0.689   & 0.96  & 0.464   \\
		& X      & 0.75  & 0.750    & 0.85  & 0.828   & 0.87  & 0.740    & 0.93  & 0.666   & 0.96  & 0.672   & 0.95  & 0.459   \\
		& SE     & 0.43  & 0.699   & 0.93  & 0.720    & 0.90   & 0.668   & 0.95  & 0.557   & 0.97  & 0.557   & 0.82  & 0.423   \\\hline		
		\multirow{5}{*}{5792} & DINA   & 0.94  & 0.635   & 0.97  & 0.646   & 0.97  & 0.622   & 0.96  & 0.610    & 0.94  & 0.607   & 0.91  & 0.363   \\
		& E      & 0.89  & 0.633   & 0.83  & 0.636   & 0.91  & 0.618   & 0.97  & 0.607   & 0.93  & 0.597   & 0.85  & 0.357   \\
		& PA-X   & 0.95  & 0.699   & 0.86  & 0.715   & 0.92  & 0.664   & 0.92  & 0.579   & 0.97  & 0.569   & 0.95  & 0.396   \\
		& X      & 0.61  & 0.619   & 0.81  & 0.686   & 0.96  & 0.615   & 0.91  & 0.544   & 0.94  & 0.550    & 0.94  & 0.379   \\
		& SE     & 0.33  & 0.582   & 0.95  & 0.619   & 0.84  & 0.569   & 0.96  & 0.450    & 0.91  & 0.444   & 0.78  & 0.359  \\
	\end{tabular}
}
\end{table}

\begin{table}
		\caption{\label{tab:poissonBoostingCI}  {Coverage (cvrg) and width of $95\%$ confidence intervals for count data.  We consider the five meta-algorithms and vary the sample size from $1024$ to $5792$. We estimate the propensity score by logistic regression and baseline natural parameter functions by gradient boosting. Confidence intervals are constructed using $100$ bootstrap samples. Results are averaged over $100$ trials.}}
	\centering
	\scriptsize
\fbox{%
	\begin{tabular}{c|c|cc|cc|cc|cc|cc|cc}
		Sample                    & Meth- & \multicolumn{2}{c|}{$\beta_1$} & \multicolumn{2}{c|}{$\beta_2$} & \multicolumn{2}{c|}{$\beta_3$} & \multicolumn{2}{c|}{$\beta_4$} & \multicolumn{2}{c|}{$\beta_5$} & \multicolumn{2}{c}{$\beta_0$} \\ 
		size &    od    & cvrg      & width     & cvrg      & width             & cvrg      & width           & cvrg      & width      & cvrg      & width     & cvrg      & width          \\\hline			
		\multirow{5}{*}{1024} & DINA   & 0.95  & 0.712   & 0.91  & 0.769   & 0.91  & 0.694   & 0.95  & 0.670    & 0.91  & 0.676   & 0.92  & 0.392   \\
		& E      & 0.95  & 0.719   & 0.89  & 0.766   & 0.89  & 0.701   & 0.95  & 0.692   & 0.95  & 0.694   & 0.94  & 0.410    \\
		& PA-X   & 0.81  & 0.701   & 0.93  & 0.733   & 0.86  & 0.650    & 0.96  & 0.568   & 0.96  & 0.559   & 0.97  & 0.396   \\
		& X      & 0.84  & 0.666   & 0.92  & 0.722   & 0.90   & 0.648   & 0.97  & 0.569   & 0.95  & 0.568   & 0.95  & 0.384   \\
		& SE     & 0.82  & 0.594   & 0.82  & 0.637   & 0.77  & 0.576   & 0.97  & 0.487   & 0.94  & 0.493   & 0.92  & 0.387   \\\hline			
		\multirow{5}{*}{1448} & DINA   & 0.97  & 0.580    & 0.94  & 0.632   & 0.90   & 0.565   & 0.95  & 0.552   & 0.91  & 0.562   & 0.92  & 0.331   \\
		& E      & 0.98  & 0.608   & 0.96  & 0.628   & 0.95  & 0.583   & 0.95  & 0.569   & 0.92  & 0.551   & 0.93  & 0.339   \\
		& PA-X   & 0.89  & 0.581   & 0.86  & 0.601   & 0.75  & 0.547   & 0.93  & 0.462   & 0.98  & 0.462   & 0.92  & 0.332   \\
		& X      & 0.91  & 0.555   & 0.86  & 0.600     & 0.77  & 0.533   & 0.95  & 0.466   & 0.97  & 0.470    & 0.91  & 0.323   \\
		& SE     & 0.76  & 0.492   & 0.77  & 0.541   & 0.71  & 0.476   & 0.96  & 0.404   & 0.97  & 0.399   & 0.92  & 0.317   \\\hline			
		\multirow{5}{*}{2048} & DINA   & 0.92  & 0.484   & 0.95  & 0.523   & 0.94  & 0.485   & 0.93  & 0.476   & 0.95  & 0.480    & 0.91  & 0.271   \\
		& E      & 0.94  & 0.492   & 0.92  & 0.525   & 0.94  & 0.471   & 0.95  & 0.468   & 0.95  & 0.468   & 0.92  & 0.279   \\
		& PA-X   & 0.86  & 0.487   & 0.83  & 0.499   & 0.78  & 0.458   & 0.94  & 0.378   & 0.96  & 0.378   & 0.92  & 0.268   \\
		& X      & 0.89  & 0.462   & 0.84  & 0.502   & 0.87  & 0.462   & 0.93  & 0.384   & 0.98  & 0.387   & 0.91  & 0.275   \\
		& SE     & 0.69  & 0.413   & 0.77  & 0.445   & 0.73  & 0.396   & 0.95  & 0.325   & 0.92  & 0.329   & 0.96  & 0.265   \\\hline			
		\multirow{5}{*}{2896} & DINA   & 0.95  & 0.405   & 0.93  & 0.445   & 0.89  & 0.402   & 0.91  & 0.400     & 0.94  & 0.395   & 0.94  & 0.233   \\
		& E      & 0.96  & 0.407   & 0.92  & 0.437   & 0.92  & 0.400     & 0.92  & 0.390    & 0.96  & 0.387   & 0.92  & 0.239   \\
		& PA-X   & 0.79  & 0.403   & 0.80   & 0.425   & 0.81  & 0.395   & 0.94  & 0.320    & 0.96  & 0.314   & 0.95  & 0.224   \\
		& X      & 0.78  & 0.391   & 0.82  & 0.415   & 0.84  & 0.389   & 0.96  & 0.315   & 0.95  & 0.320    & 0.94  & 0.223   \\
		& SE     & 0.59  & 0.352   & 0.79  & 0.372   & 0.79  & 0.329   & 0.96  & 0.266   & 0.94  & 0.256   & 0.91  & 0.216   \\\hline			
		\multirow{5}{*}{4096} & DINA   & 0.97  & 0.344   & 0.93  & 0.37    & 0.94  & 0.342   & 0.94  & 0.338   & 0.89  & 0.329   & 0.93  & 0.195   \\
		& E      & 0.98  & 0.344   & 0.89  & 0.362   & 0.94  & 0.334   & 0.96  & 0.320    & 0.91  & 0.324   & 0.95  & 0.198   \\
		& PA-X   & 0.86  & 0.341   & 0.81  & 0.346   & 0.77  & 0.329   & 0.95  & 0.260    & 0.94  & 0.261   & 0.93  & 0.186   \\
		& X      & 0.89  & 0.331   & 0.86  & 0.357   & 0.81  & 0.324   & 0.93  & 0.263   & 0.96  & 0.261   & 0.91  & 0.187   \\
		& SE     & 0.51  & 0.301   & 0.70   & 0.304   & 0.63  & 0.281   & 0.94  & 0.213   & 0.94  & 0.213   & 0.91  & 0.182   \\\hline			
		\multirow{5}{*}{5792} & DINA   & 0.94  & 0.295   & 0.93  & 0.313   & 0.98  & 0.290    & 0.90   & 0.218   & 0.95  & 0.237   & 0.94  & 0.166   \\
		& E      & 0.96  & 0.287   & 0.91  & 0.308   & 0.95  & 0.280    & 0.92  & 0.212   & 0.94  & 0.231   & 0.89  & 0.162   \\
		& PA-X   & 0.78  & 0.291   & 0.79  & 0.299   & 0.76  & 0.278   & 0.96  & 0.216   & 0.92  & 0.214   & 0.95  & 0.157   \\
		& X      & 0.81  & 0.278   & 0.83  & 0.297   & 0.79  & 0.278   & 0.95  & 0.216   & 0.94  & 0.221   & 0.92  & 0.157   \\
		& SE     & 0.48  & 0.253   & 0.46  & 0.258   & 0.53  & 0.231   & 0.93  & 0.169   & 0.97  & 0.173   & 0.93  & 0.155  \\
	\end{tabular}
}
\end{table}

\begin{table}
	\caption{\label{tab:coxBoostingCI}	  {Coverage (cvrg) and width of $95\%$ confidence intervals for survival responses.  We consider the five meta-algorithms and vary the sample size from $1024$ to $5792$. We estimate the propensity score by logistic regression and baseline natural parameter functions by gradient boosting. Confidence intervals are constructed using $100$ bootstrap samples. Results are averaged over $100$ trials.}}
\centering
	\scriptsize
\fbox{%
		\begin{tabular}{c|c|cc|cc|cc|cc|cc|cc}
		Sample                    & Meth- & \multicolumn{2}{c|}{$\beta_1$} & \multicolumn{2}{c|}{$\beta_2$} & \multicolumn{2}{c|}{$\beta_3$} & \multicolumn{2}{c|}{$\beta_4$} & \multicolumn{2}{c|}{$\beta_5$} & \multicolumn{2}{c}{$\beta_0$} \\ 
		size &    od    & cvrg      & width     & cvrg      & width             & cvrg      & width           & cvrg      & width      & cvrg      & width     & cvrg      & width          \\\hline			
		\multirow{5}{*}{1024} & DINA   & 0.95  & 1.86    & 0.95  & 1.74    & 0.94  & 1.69    & 0.96  & 1.63    & 0.99  & 1.63    & 0.95  & 1.05    \\
		& E      & 0.90   & 1.86    & 0.95  & 1.74    & 0.97  & 1.66    & 0.93  & 1.62    & 0.94  & 1.63    & 0.96  & 1.06    \\
		& PA-X   & 0.96  & 1.89    & 0.95  & 1.91    & 0.99  & 1.67    & 0.97  & 1.58    & 0.98  & 1.58    & 0.97  & 1.16    \\
		& X      & 0.91  & 1.72    & 0.87  & 1.62    & 0.97  & 1.61    & 0.98  & 1.57    & 0.97  & 1.58    & 0.94  & 1.12    \\
		& SE     & 0.88  & 1.49    & 0.90   & 1.43    & 0.97  & 1.43    & 0.97  & 1.38    & 0.98  & 1.35    & 0.88  & 0.894   \\\hline
		\multirow{5}{*}{1448} & DINA   & 0.98  & 1.47    & 0.91  & 1.40     & 0.94  & 1.35    & 0.91  & 1.31    & 0.94  & 1.33    & 0.95  & 0.837   \\
		& E      & 0.95  & 1.46    & 0.91  & 1.39    & 0.98  & 1.30     & 0.91  & 1.31    & 0.98  & 1.31    & 0.92  & 0.837   \\
		& PA-X   & 0.91  & 1.49    & 0.93  & 1.56    & 0.93  & 1.33    & 0.96  & 1.24    & 0.94  & 1.26    & 0.98  & 0.901   \\
		& X      & 0.83  & 1.38    & 0.85  & 1.32    & 0.95  & 1.31    & 0.95  & 1.26    & 0.95  & 1.25    & 0.97  & 0.859   \\
		& SE     & 0.79  & 1.19    & 0.90   & 1.16    & 0.88  & 1.18    & 0.95  & 1.12    & 0.92  & 1.11    & 0.80   & 0.717   \\\hline
		\multirow{5}{*}{2048} & DINA   & 0.91  & 1.23    & 0.95  & 1.16    & 0.97  & 1.07    & 0.98  & 1.06    & 0.95  & 1.05    & 0.91  & 0.663   \\
		& E      & 0.85  & 1.19    & 0.93  & 1.12    & 0.94  & 1.05    & 0.95  & 1.04    & 0.95  & 1.04    & 0.94  & 0.658   \\
		& PA-X   & 0.93  & 1.21    & 0.93  & 1.25    & 0.93  & 1.05    & 0.97  & 1.02    & 0.94  & 1.01    & 0.97  & 0.699   \\
		& X      & 0.78  & 1.14    & 0.77  & 1.05    & 0.96  & 1.06    & 0.99  & 1.01    & 0.93  & 0.997   & 0.97  & 0.675   \\
		& SE     & 0.71  & 1.02    & 0.86  & 0.943   & 0.92  & 0.927   & 0.90   & 0.905   & 0.94  & 0.901   & 0.70   & 0.582   \\\hline
		\multirow{5}{*}{2896} & DINA   & 0.96  & 0.975   & 0.96  & 0.943   & 0.96  & 0.907   & 0.97  & 0.883   & 0.95  & 0.887   & 0.85  & 0.541   \\
		& E      & 0.90   & 0.973   & 0.94  & 0.917   & 0.95  & 0.879   & 0.94  & 0.838   & 0.94  & 0.856   & 0.93  & 0.534   \\
		& PA-X   & 0.86  & 0.995   & 0.94  & 1.02    & 0.88  & 0.855   & 0.99  & 0.817   & 0.91  & 0.815   & 0.93  & 0.549   \\
		& X      & 0.84  & 0.913   & 0.78  & 0.861   & 0.93  & 0.862   & 0.97  & 0.807   & 0.93  & 0.803   & 0.91  & 0.541   \\
		& SE     & 0.65  & 0.812   & 0.87  & 0.777   & 0.90   & 0.760    & 0.95  & 0.712   & 0.92  & 0.719   & 0.52  & 0.478   \\\hline
		\multirow{5}{*}{4096} & DINA   & 0.95  & 0.831   & 0.88  & 0.768   & 0.97  & 0.747   & 0.95  & 0.732   & 0.88  & 0.719   & 0.91  & 0.452   \\
		& E      & 0.79  & 0.815   & 0.90   & 0.767   & 0.96  & 0.728   & 0.95  & 0.697   & 0.93  & 0.703   & 0.95  & 0.45    \\
		& PA-X   & 0.85  & 0.830    & 0.97  & 0.835   & 0.89  & 0.717   & 0.96  & 0.652   & 0.97  & 0.645   & 0.90   & 0.444   \\
		& X      & 0.71  & 0.770    & 0.78  & 0.737   & 0.92  & 0.707   & 0.95  & 0.669   & 0.93  & 0.648   & 0.92  & 0.437   \\
		& SE     & 0.57  & 0.684   & 0.81  & 0.651   & 0.85  & 0.642   & 0.95  & 0.580    & 0.98  & 0.595   & 0.48  & 0.406   \\\hline
		\multirow{5}{*}{5792} & DINA   & 0.92  & 0.696   & 0.93  & 0.632   & 0.92  & 0.599   & 0.97  & 0.583   & 0.95  & 0.598   & 0.91  & 0.382   \\
		& E      & 0.81  & 0.675   & 0.89  & 0.620    & 0.94  & 0.594   & 0.98  & 0.577   & 0.97  & 0.575   & 0.96  & 0.375   \\
		& PA-X   & 0.76  & 0.669   & 0.94  & 0.675   & 0.85  & 0.602   & 0.97  & 0.521   & 0.97  & 0.531   & 0.90   & 0.354   \\
		& X      & 0.60   & 0.625   & 0.74  & 0.585   & 0.92  & 0.573   & 0.94  & 0.530    & 0.98  & 0.536   & 0.93  & 0.359   \\
		& SE     & 0.48  & 0.569   & 0.91  & 0.529   & 0.80   & 0.518   & 0.95  & 0.478   & 0.97  & 0.483   & 0.24  & 0.331  \\
	\end{tabular}
}
\end{table}
}


\bibliographystyle{rss}
\bibliography{DINA2Ref}

\end{document}